\documentclass[11pt, a4paper]{article}
\usepackage[top=2.5cm, bottom=3.2cm, left=2.5cm, right=2.5cm]{geometry}
\usepackage{setspace}
\usepackage{amsfonts}
\usepackage{amssymb,amsthm,amsmath}
\usepackage{color}
\usepackage{enumerate}
\usepackage{authblk}
\usepackage[none]{hyphenat}
\usepackage{appendix}
\usepackage{hyperref}
\allowdisplaybreaks

\numberwithin{equation}{section}

\newtheorem{theorem}{Theorem}[section]
\newtheorem{lemma}[theorem]{Lemma}
\newtheorem{proposition}[theorem]{Proposition}

\newtheorem{remark}{Remark}

\newcommand{\bea}{\begin{eqnarray}}
\newcommand{\eea}{\end{eqnarray}}
\newcommand{\bean}{\begin{eqnarray*}}
\newcommand{\eean}{\end{eqnarray*}}

\newcommand{\al}{\alpha}

\renewcommand{\hat}{\widehat}

\newcommand{\sqrtx}{\sqrt{(b-x)(x-a)}}

\title{Hankel Determinant for a Perturbed Laguerre Weight with Pole Singularities and Generalized Painlev\'{e} III$'$ Equation}
\author[1,*]{Shulin Lyu}
\author[1,$\dag$]{Yuanfei Lyu}  
\affil[1]{School of Mathematics and Statistics, Qilu University of Technology (Shandong Academy of Sciences), Jinan 250353, China}
\affil[*]{\texttt{lvshulin1989@163.com}, Corresponding author}
\affil[$\dag$]{\texttt{lvyuanfei5720@163.com}}
\date{\today}

\begin{document}
\maketitle

\begin{abstract}
We study the Hankel determinant for the weight $x^{\al}{\rm exp}(-x-t_1/x-t_2/x^2), x\in[0,+\infty)$, with $\alpha>-1,~t_1\in\mathbb{R}\setminus\{0\}, ~t_2>0.$
Compared with the weight $x^{\alpha}{\rm e}^{-x-t_1/x}$ studied in prior work (where $\alpha,t_1>0$), the range of $\alpha$ in our work is extended and the parameter $t_2$ introduces a ``stronger" zero at the origin. This leads to more varied behavior of the Hankel determinant, and the interplay between $t_1$ and $t_2$ introduces uncertainty and complexity into the analysis. By using a pair of ladder operators satisfied by the associated monic orthogonal polynomials and three compatibility conditions, we show that the recurrence coefficients are expressed in terms of four auxiliary quantities which satisfy a system of difference equations that can be iterated. We also establish two coupled second order partial differential equations (PDEs) satisfied by two of the auxiliary quantities, which are reduced to a Painlev\'{e} III$^\prime$ equation when $t_2\rightarrow0^+$. Moreover, the logarithmic derivative of the Hankel determinant is shown to satisfy a second order six degree PDE which is reduced to the $\sigma$-form of the Painlev\'{e} III$^{\prime}$ equation when $t_2\rightarrow0^+$. Under suitable double scaling, we obtain the limiting forms of the above PDEs and deduce the equilibrium density of the eigenvalues for the unitary ensemble. We extend our analysis to the Hankel determinant for $x^{\alpha}\exp(-x-\sum_{k=1}^m t_k/x^k)$ with $m=3$. For general $m$, we outline a derivation that leads, at least in principle, to the PDE satisfied by the logarithmic derivative of the Hankel determinant.
\end{abstract}

\noindent
$\mathbf{Keywords}$: Laguerre unitary ensembles; Hankel determinant; Orthogonal polynomials;\\
Painlev\'{e} equations

\noindent
$\mathbf{Mathematics\:\: Subject\:\: Classification\:\: 2020}$: 15B52; 33E17; 42C05

\sloppy{
\section{Introduction}

We consider the $n$-dimensional singularly perturbed Laguerre unitary ensemble of $n\times n$ Hermitian matrices whose eigenvalues have the following joint probability density function
\begin{align}\label{jpdf}
p(x_1,\dots,x_n)=\frac{1}{n!D_n}\prod_{1\leq i<j\leq n}(x_i-x_j)^2\prod_{k=1}^n w(x_k;t_1,t_2),
\end{align}
where the weight function $w(x;t_1,t_2)$ is obtained through a deformation of the Laguerre weight, i.e.
\begin{align}\label{w}
w(x; t_1,t_2)=x^{\al}{\rm exp}\left(-x-\frac{t_1}{x}-\frac{t_2}{x^2}\right),\quad x\in[0,+\infty), ~\alpha>-1,~t_1\in\mathbb{R}\setminus\{0\}, ~t_2>0.
\end{align}
 The reason why we assume $t_1\neq0$ is that the approach presented in this paper is inapplicable for $t_1=0$.
The term $n!D_n$ in \eqref{jpdf} is known as the partition function and $D_n$ is given by
\begin{align}
D_n(t_1,t_2)=&\frac{1}{n!}\int_{[0,+\infty)^n}\prod\limits_{1\leq i<j\leq n}(x_i-x_j)^2\prod_{k=1}^n w(x_k;t_1,t_2)dx_1\cdots dx_n\nonumber\\
=&\det\left(\int_0^{+\infty}x^{i+j}w(x;t_1,t_2)dx\right)_{i,j=0}^{n-1},\label{HD}
\end{align}
where the second equality is due to Heine's formula \cite{Szego1939}. This paper is concerned with $D_n(t_1,t_2)$.

The weight $w(x;t_1,0)$, i.e. $x^{\alpha}{\rm e}^{-x-t_1/x}$ with $t_1>0$, arises from integrable quantum field theories at finite temperature and was studied in \cite{ChenIts2010} under the assumption that $\alpha>0$.
By using the ladder operators and the Lax pair of the Riemann-Hilbert problem satisfied by the associated monic orthogonal polynomials, $D_n(t_1,0)$ was expressed as an integral in terms of the solution of a Painlev\'{e} III$'$ equation. Compared with the above case, the range of $\alpha$ is extended in $w(x;t_1,t_2)$ and the factor ${\rm e}^{-t_2/x^2}$ introduces a ``stronger" zero at the origin, which leads to more varied behavior of $D_n$. In addition, the interplay between $t_1$ and $t_2$ introduces uncertainty and complexity into the analysis.

The Hankel determinant generated by the weight function $y^{\alpha}{\rm e}^{-n(y+t^2/y^2)}$ was studied in \cite{ACM} (see (1.51) therein with $k=2$). It multiplied by $n!$ denotes the partition function of the associated unitary ensemble and  also the moment generating function of the charge relaxation resistance \cite{ACM}. For other motivations to study this Hankel determinant, refer to Section 1.1 of \cite{ACM}. Under a suitable double scaling, the eigenvalue correlation kernel of this ensemble was characterized by a hierarchy of higher order Painlev\'{e} III equation (see  Theorem 1.11 of \cite{ACM}). By a change of variable $y=x/n$, this Hankel determinant is transformed into the one for the weight $ w(x;0,n^3t^2)$, up to a multiplicative constant.

It is well-known that the Hankel determinant admits the following representation \cite{Ismail2005}:
\begin{align}\label{Dnhn}
D_{n}(t_1,t_2)=\prod_{j=0}^{n-1}h_{j}(t_1,t_2),
\end{align}
where $h_j$ is the square of the $L^2$-norm of the $j$th-degree monic polynomial $P_j(x;t_1,t_2)$ orthogonal with respect to $w(x;t_1,t_2)$, namely,
\bea\label{or}
\int_{0}^{+\infty}P_{j}(x;t_1,t_2)P_{k}(x;t_1,t_2)w(x;t_1,t_2)dx=h_{j}(t_1,t_2)\delta_{jk},\qquad j,k=0,1,\ldots.
\eea
Here $\delta_{jk}=1$ for $j=k$ and 0 otherwise, and
\bea\label{Pndef}
P_{j}(x;t_1,t_2)=x^{j}+p(j,t_1,t_2)x^{j-1}+\cdots+P_{j}(0;t_1,t_2)
\eea
for $j\geq1$ and $P_0(x;t_1,t_2):=1$.

According to \eqref{or}-\eqref{Pndef}, one shows that $\{P_j(x;t_1,t_2)\}$ satisfy a three-term recurrence relation
\begin{align}\label{Pnrecu}
xP_{j}(x;t_1,t_2)=P_{j+1}(x;t_1,t_2)+\alpha_{j}(t_1,t_2)P_{j}(x;t_1,t_2)+\beta_{j}(t_1,t_2)P_{j-1}(x;t_1,t_2),\quad j=0,1,\ldots,
\end{align}
subject to the initial condition $\beta_{0}(t_1,t_2)P_{-1}(x;t_1,t_2):=0.$
Inserting \eqref{Pndef} into \eqref{Pnrecu} and equating the coefficients of $x^j$, we find
\bea\label{a13}
\alpha_{j}(t_1,t_2)=p(j,t_1,t_2)-p(j+1,t_1,t_2),\qquad\qquad j\geq 0,
\eea
with the initial value $p(0,t_1,t_2):=0$, which immediately gives us
\bea\label{a01}
\sum_{j=0}^{n-1}\alpha_{j}(t_1,t_2)=-p(n,t_1,t_2).
\eea
Multiplying both sides of \eqref{Pnrecu} by $P_{j-1}(x;t_1,t_2)w(x;t_1,t_2)$ and integrating it from 0 to $+\infty$, in view of \eqref{or}, we get
\bea\label{bt-1}
\beta_{j}(t_1,t_2)h_{j-1}(t_1,t_2)=\int_0^{+\infty}x P_{j-1}(x;t_1,t_2)P_j(x;t_1,t_2)w(x;t_1,t_2)dx=h_j(t_1,t_2),
\eea
where the second equality is obtained via a replacement of $xP_{j-1}$ by using \eqref{Pnrecu} with $j-1$ in place of $j$. We readily see from \eqref{bt-1} that
\bea\label{a12}
\beta_{j}(t_1,t_2)=\frac{h_{j}(t_1,t_2)}{h_{j-1}(t_1,t_2)},
\eea
According to the recurrence relation \eqref{Pnrecu}, one derives the Christoffel-Darboux formula
\begin{align*} \sum_{j=0}^{n-1}\frac{P_j(x;t_1,t_2)P_j(y;t_1,t_2)}{h_j(t_1,t_2)}= \frac{P_n(x;t_1,t_2)P_{n-1}(y;t_1,t_2)-P_n(y;t_1,t_2)P_{n-1}(x;t_1,t_2)}{h_{n-1}(t_1,t_2)(x-y)}.
\end{align*}

By using this formula, the recurrence relation \eqref{Pnrecu} and the definition of $P_n(z):=P_n(z;t_1,t_2)$ given by \eqref{or}-\eqref{Pndef}, the following pair of lowering and raising operators were deduced in \cite{MinFang24}:
\begin{align}
\left(\frac{d}{dz}+B_{n}(z)\right)P_{n}(z)=&\beta_{n}A_{n}(z)P_{n-1}(z),\label{lo}\\
\left(\frac{d}{dz}-B_{n}(z)-v'(z)\right)P_{n-1}(z)=&-A_{n-1}(z)P_{n}(z),\label{ro}
\end{align}
for $n\geq0$, where $v(z)=v(z;t_1,t_2):=-\ln\:w(z;t_1,t_2)$ and
\begin{gather}
A_{n}(z):=\frac{1}{z}\cdot\frac{1}{h_{n}}\int_{0}^{+\infty}\frac{z v'(z)-xv'(x)}{z-x}P_{n}^{2}(x)w(x;t_1,t_2)dx,\qquad n\geq0,\label{defAn}\\
B_{n}(z):=\frac{1}{z}\left(\frac{1}{h_{n-1}}\int_{0}^{+\infty}\frac{z v'(z)-xv'(x)}{z-x}P_{n}(x)P_{n-1}(x)w(x;t_1,t_2)dx-n\right),\qquad n\geq1,\label{defBn}
\end{gather}
with the initial conditions $A_{-1}(z)=B_0(z):=0$. They can also be derived by employing the Riemann-Hilbert problem for $P_n(z)$ \cite{LyuLyu25}.
Moreover, it was shown in \cite{MinFang24} and \cite{LyuLyu25} that $A_n$ and $B_n$ satisfy the following identities
\begin{gather}
B_{n+1}(z)+B_{n}(z)=\left(z-\alpha_{n}\right)A_{n}(z)-v'(z),\tag{$S_1$}\\
1+\left(z-\alpha_{n}\right)\left(B_{n+1}(z)-B_{n}(z)\right)=\beta_{n+1}A_{n+1}(z)-\beta_{n}A_{n-1}(z),\tag{$S_2$}\\
\left(B_{n}(z)+v'(z)\right)B_{n}(z)+\sum_{j=0}^{n-1}A_{j}(z)=\beta_{n}A_{n}(z)A_{n-1}(z).\tag{$S_2'$}
\end{gather}
Actually, it was proved in \cite{MinFang24} and \cite{LyuLyu25} that these three compatibility conditions and the above ladder operators are valid for a more general Laguerre-type weight function of the form $x^{\alpha}w_0(x),x\in[0,+\infty)$ with $\alpha>-1$, where $w_0(x)>0$ was assumed to be continuously differentiable and $\lim\limits_{x\rightarrow+\infty}\pi_n(x)x^{\alpha}w_0(x)=0$ for an arbitrary polynomial $\pi_n(x)$ of order $n\geq0$. For more discussions on ladder operators, we refer the readers to \cite{ChenIsmail1997, ChenLyu23,Magnus1995, MuLyu23, WVA}.

For $w(x;t_1,t_2)$ given by \eqref{w}, we get the expressions for $A_n$ and $B_n$ from \eqref{defAn}-\eqref{defBn}, with four auxiliary quantities introduced. Substituting them into $(S_1)$ and $(S_2')$, we express $\alpha_n,\beta_n$ and $p(n,t_1,t_2)$ in terms of the auxiliary quantities which are shown to satisfy a system of difference equations that can be iterated in $n$. In addition, by differentiating the orthogonality relations \eqref{or} for $(j,k)=(n,n)$ and $(n,n-1)$ with respect to $t_1$ and $t_2$, we find differential relations between $\{h_n, p(n,t_1,t_2)\}$  and the auxiliary quantities.
Combining all the equalities obtained together, we deduce Riccati-like equations for the auxiliary quantities, from which a system of second order PDEs are derived. And moreover, we arrive at the second order PDE satisfied by the logarithmic derivative of $D_n(t_1,t_2)$. When $t_2\rightarrow0^+$, these PDEs are reduced to the Painlev\'{e} III$^\prime$ equation and its $\sigma$-form.

The above derivation strategy is called the ladder operator approach which is elementary, straightforward and powerful. It has been applied to study various finite-dimensional problems in unitary ensembles, including the classical orthogonal polynomials, partition functions, gap probabilities, the moment generating function of linear statistics etc. For example, in \cite{ChenIsmail2004}, the ladder operator approach was used to derive the recurrence coefficients, the $L^2$-norm and the explicit representations for the monic Jacobi polynomials. For problems involving one variable, a Painlev\'{e} equation and its $\sigma$-form are usually established.  The Hankel determinants generated by the Laguerre and Jacobi weight multiplied by the factor $(x+t)^{\lambda}$ with $t>0$ are connected with multiple-input multiple-output wireless communication system \cite{ChenMcKay2012}, and the logarithmic derivatives were shown to be directly related to the $\sigma$-form of the Painlev\'{e} V and VI equations, respectively. The gap probabilities of Gaussian and Jacobi unitary ensembles on $(-a,a),a>0$, were studied in \cite{LyuChenFan18,MinChen17}, and under a suitable double scaling the celebrated JMMS equation was recovered for both cases.
The ladder operator approach was also used to study the Hankel determinants for singularly perturbed Gaussian and Jacobi weights which are even \cite{MinChen20,MinChen21, MinLyuChen18}. For problems involving two variables, the logarithmic derivative of the Hankel determinant was shown to satisfy a second order PDE which is regarded as a two-variable generalization of a Painlev\'{e} equation; see \cite{BasorChenZhang, ChenHaqMcKay, LyuGriffinChen, MinChen19}. For problems involving more than two variables, see recent works \cite{MuLyu23, MuLyu24}.

The orthogonal polynomials and Hankel determinant associated with the Gaussian weight multiplied by a factor with several jump discontinuities were addressed in \cite{WuXu21}. When the dimension is finite and with the aid of the Lax pair, they were characterized by a coupled Painlev\'{e} IV system. Under a double scaling and based on the Deift-Zhou nonlinear steepest descent analysis on the Riemann-Hilbert problem (we call it RH method below), the asymptotics for them were obtained in terms of solutions of a coupled Painlev\'{e} II system. This Hankel determinant was investigated in  \cite{ChenLyu23} by employing the ladder operator approach and a system of second order PDEs were derived. Moreover, with the help of the Lax pair, direct relationships were found between the solutions of these PDEs and the coupled Painlev\'{e} IV system from \cite{WuXu21}. The above research strategy employed in \cite{ChenLyu23} was also applied to study the Hankel determinant for the deformed Laguerre weight with several jump discontinuities \cite{LyuChenXu23}.

The RH method was widely used to study the asymptotic behavior of unitary ensembles, including the partition functions \cite{CG21}, orthogonal polynomials \cite{WangFan22}, gap probabilities \cite{XuDai19},  the correlation kernel \cite{XuDaiZhao14} etc, and their asymptotics were expressed in terms of solutions of Painlev\'{e} equations. For problems involving two or more variables, a coupled Painlev\'{e} system was derived. We present here the main results of some literature which is concerned with singularly perturbed unitary ensembles with pole singularities in the potential. In \cite{ACM}, by introducing a model Riemann-Hilbert problem, the double scaling asymptotics for the partition functions of the singularly perturbed Gaussian and Laguerre unitary ensembles with a pole of order $2k$ and $k$ ($k\geq1$) respectively at $x=0$ were given in terms of hierarchies of Painlev\'{e} III equations. Moreover, for more general weight function $\exp(V(x)+(t/x)^k)$ with $V(x)>0$ being real analytic, the eigenvalue correlation kernel of the corresponding unitary ensemble was also described by a Painlev\'{e} III hierarchy. In \cite{BMM}, under a double scaling limit, the asymptotics of the partition function for a singularly perturbed Gaussian unitary ensemble with pole singularities of first and second order at $x=0$ in the potential was expressed in terms of the solution of a Painlev\'{e} III equation. In \cite{DaiXuZhang18}, the gap probability at the hard edge for the singularly perturbed unitary ensembles with pole singularities of order $k$ at the origin in the potential was expressed as an integral of the solution of a coupled Painlev\'{e} III system. In \cite{DaiXuZhang19}, the partition function and  correlation kernel for a singularly perturbed Gaussian unitary ensemble with pole singularities at $x=\lambda$ of orders $k=1,2,\dots,2m$ in the potential were characterized by the generalization of a Painlev\'{e} XXXIV equation at the soft edge.

This paper is built up as follows. In the next section, we express the recurrence coefficients in terms of four auxiliary quantities which satisfy a system of difference equations that can be iterated. Section 3 is devoted to the derivation of Toda-like equations for the recurrence coefficients and PDEs satisfied by the auxiliary quantities and the logarithmic derivative of the Hankel determinant. Under a double scaling, the limiting PDEs and the equilibrium density of the eigenvalues of the associated unitary ensembles were obtained in Section 4. An analysis parallel to that of Section 2-4 is carried out in Section 5 on the Hankel determinant for $x^{\alpha}\exp(-x-\sum_{k=1}^m t_k/x^k)$ with $m=3$. For general $m$, we outline in Section 6 the derivation process of the PDE satisfied by the logarithmic derivative of the Hankel determinant. Conclusions are drawn in Section 7.

\section{Difference Equations}\label{Sdiffeq}
In this section, we make use of $(S_1)$ and $(S_2')$, given in the Introduction, to express the recurrence coefficients in terms of four auxiliary quantities which are shown to satisfy a system of difference equations that can be iterated.

We now compute $A_n$ and $B_n$ from their definitions \eqref{defAn}-\eqref{defBn}, which first requires working out $\left(zv'(z)-yv'(y)\right)/(z-y)$. For our weight function given by \eqref{w}, we have
\begin{align*}
v(x)&=-\ln w(x;t_1,t_2)=-\al\:\ln\:x +x +\frac{t_1}{x}+\frac{t_2}{x^2},\\
v'(x)&=-\frac{\alpha}{x} +1 -\frac{t_1}{x^2} -\frac{2t_2}{x^3},
\end{align*}
so that
\begin{align*}
\frac{zv'(z)-yv'(y)}{z-y}=&1+\frac{1}{z}\left(\frac{t_1}{y}+\frac{2t_2}{y^2}\right)+\frac{1}{z^2}\cdot\frac{2t_2}{y}.
\end{align*}
Substituting it into \eqref{defAn}-\eqref{defBn}, we get the following expressions for  $A_n$ and $B_n$.

\begin{lemma}
We have
\begin{align}
A_n(z)=&\frac{1}{z} + \frac{R_n+R_n^{\star}}{z^2} +\frac{\tau R_n}{z^3},\label{An-2}\\
B_n(z)=&-\frac{n}{z}+ \frac{r_n+r_n^{\star}}{z^2} + \frac{\tau r_n}{z^3},\label{Bn-2}
\end{align}
where $\tau:=2t_2/t_1$ and
\begin{subequations}\label{defRr}
\begin{align}
R_n(t_1,t_2):=&\frac{t_1}{h_n}\:\int_{0}^{+\infty}P_n^2(y;t_1,t_2)w(y;t_1,t_2)\frac{dy}{y},\label{defR}\\
R_n^{\star}(t_1,t_2):=&\frac{2t_2}{h_n}\:\int_{0}^{+\infty}P_n^2(y;t_1,t_2)w(y;t_1,t_2)\frac{dy }{y^2},\label{defR*}\\
r_n(t_1,t_2):=&\frac{t_1}{h_{n-1}}\:\int_{0}^{+\infty}P_n(y;t_1,t_2)P_{n-1}(y;t_1,t_2)w(y;t_1,t_2)\frac{dy }{y},\label{defr}\\
r_n^{\star}(t_1,t_2):=&\frac{2t_2}{h_{n-1}}\:\int_{0}^{+\infty} P_n(y;t_1,t_2)P_{n-1}(y;t_1,t_2)w(y;t_1,t_2)\frac{dy}{y^2}.\label{defr*}
\end{align}
\end{subequations}
\end{lemma}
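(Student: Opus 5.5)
The plan is to substitute the kernel identity
\[
\frac{zv'(z)-yv'(y)}{z-y}=1+\frac{1}{z}\Bigl(\frac{t_1}{y}+\frac{2t_2}{y^2}\Bigr)+\frac{2t_2}{z^2 y}
\]
into the definitions \eqref{defAn} and \eqref{defBn}, and to integrate term by term. First I will verify the kernel identity. Since $zv'(z)=-\alpha+z-t_1/z-2t_2/z^2$, we have
\[
zv'(z)-yv'(y)=(z-y)+t_1\Bigl(\frac1y-\frac1z\Bigr)+2t_2\Bigl(\frac1{y^2}-\frac1{z^2}\Bigr).
\]
Dividing by $z-y$ gives the three terms $1$, $t_1/(zy)$ and $2t_2(z+y)/(z^2y^2)=2t_2/(zy^2)+2t_2/(z^2y)$.

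For $A_n$, insert the kernel and use $\int_0^{+\infty}P_n^2w\,dx=h_n$. The constant term $1$ contributes $\frac{1}{z}\cdot1$. The $1/z$ terms contribute $\frac{1}{z^2}\cdot\frac{1}{h_n}\int P_n^2 w\,(t_1/y+2t_2/y^2)\,dy=\frac{R_n+R_n^{\star}}{z^2}$ by \eqref{defR} and \eqref{defR*}. The last term gives $\frac{1}{z^3}\cdot\frac{2t_2}{h_n}\int P_n^2w\,dy/y$. Writing $2t_2=\tau t_1$ with $\tau=2t_2/t_1$ (this is where $t_1\neq0$ is used), this equals $\tau R_n/z^3$. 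This proves \eqref{An-2}.

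For $B_n$, the same procedure applies. The constant term gives $\frac{1}{h_{n-1}}\int P_nP_{n-1}w\,dx=0$ by orthogonality \eqref{or}, so only $-n/z$ survives from the bracket in \eqref{defBn}. The remaining terms give $(r_n+r_n^{\star})/z^2$ and $\tau r_n/z^3$ by \eqref{defr} and \eqref{defr*}, which proves \eqref{Bn-2}.

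The only point needing care is that all the integrals converge. They do, because $w(y)$ carries the factor $e^{-t_2/y^2}$ with $t_2>0$, which kills any negative power of $y$ at the origin regardless of $\alpha>-1$ and of the sign of $t_1$. At infinity, $e^{-y}$ gives decay. Hence the term-by-term splitting of the integral is legitimate, and the remaining steps are routine.
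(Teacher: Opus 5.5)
Your proposal is correct and follows the same route as the paper. Compute the kernel $\bigl(zv'(z)-yv'(y)\bigr)/(z-y)$, substitute it into \eqref{defAn}--\eqref{defBn} together with the $1/z$ prefactor, use the normalization $h_n$ and orthogonality for the constant term, and write $2t_2=\tau t_1$ to identify the $z^{-3}$ coefficients. Your remarks that the integrals converge thanks to $e^{-t_2/y^2}$ with $t_2>0$, and that $t_1\neq 0$ is needed at exactly that step, are accurate details the paper leaves implicit.
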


Plugging \eqref{An-2}-\eqref{Bn-2} into $(S_1)$ and $(S_2')$, and equating the residues on their both sides, we obtain a set of equations. Comparing the coefficients of $z^{-j}$ (for $j=1,2,3$) on both sides of $(S_1)$ gives us
\begin{gather}
\alpha_{n}=2n+1+\alpha+R_n+R_n^{\star},\label{S1eq1}\\
r_{n+1}+r_n+r_{n+1}^{\star}+r_n^{\star}=\tau R_n-\alpha_n(R_n+R_n^{\star})+t_1,\label{S1eq2}\\
\tau(r_{n+1}+r_n)=2t_2-\tau\alpha_nR_n.\nonumber
\end{gather}
Since $t_2\neq0$,  we divide both sides of the last equation by $\tau$ and get
\begin{gather}\label{S1eq4}
r_{n+1}+r_n=t_1-\alpha_nR_n.
\end{gather}
Inserting it into \eqref{S1eq2} yields
\begin{gather}\label{S1eq5}
r_{n+1}^{\star}+r_n^{\star}=\tau R_n-\alpha_nR_n^{\star}.
\end{gather}

By equating the coefficients of $z^{-j}$ ($j=2,3,\dots,6$) in $(S_2')$, we find
\begin{gather}
\beta_{n}=n(n+\alpha)+r_n+r_n^{\star}+\sum_{j=0}^{n-1}R_{j}+\sum_{j=0}^{n-1}R_{j}^{\star},\label{S2'eq1}\\
\beta_{n}\left(R_{n-1}+R_n+R_{n-1}^{\star}+R_n^{\star}\right)=-(2n+\alpha)( r_n+r_n^{\star})+\tau r_n+nt_{1}+\tau\sum_{j=0}^{n-1}R_{j},\nonumber\\
\begin{aligned}\label{S2'eq3}
\beta_{n}&\left(\tau(R_n+R_{n-1})+(R_n+R_n^{\star})( R_{n-1}+R_{n-1}^{\star})\right)\\
&\qquad=(r_n+r_n^{\star})^{2}-t_{1}\left(r_n+r_n^{\star}\right)-(2n+\alpha)\tau\, r_n+2nt_2,
\end{aligned}
\end{gather}
\vspace{-7mm}
\begin{gather}\label{S2'eq4}
\tau\beta_{n}\left(R_{n-1}(R_n+R_n^{\star})+R_n(R_{n-1}+R_{n-1}^{\star})\right)=2\tau\, r_n\left(r_n+r_n^{\star}\right)- 2t_2(2 r_n+r_n^{\star}),\\
\tau^2\beta_{n}R_nR_{n-1}=\tau^2r_n(r_n-t_1).\label{S2'eq5}
\end{gather}
Dividing \eqref{S2'eq4} by $\tau$ and \eqref{S2'eq5} by $\tau^2$ leads us to
\begin{gather}
\beta_{n}\left(2R_nR_{n-1}+R_{n-1}R_n^{\star}+R_nR_{n-1}^{\star}\right)
=2r_n\left(r_n+r_n^{\star}\right)-t_1(2r_n+r_n^{\star}),\nonumber\\
\beta_{n}R_nR_{n-1}=r_n(r_n-t_1).\label{S2'eq6}
\end{gather}
Combining these two equations to get rid of $\beta_nR_nR_{n-1}$ produces
\bea\label{S2'eq7}
\beta_n(R_{n-1}R_n^{\star}+R_nR_{n-1}^{\star})=r_n^{\star}(2r_n-t_1).
\eea
Substituting \eqref{S2'eq6}-\eqref{S2'eq7} into \eqref{S2'eq3} results in
\bea\label{S2'eq8}
\beta_n(\tau(R_n+R_{n-1})+R_n^{\star}R_{n-1}^{\star})=\left(r_n^{\star}\right)^2-(2n+\alpha)\tau\,r_n+2nt_2.
\eea
Combining  \eqref{S2'eq6} and \eqref{S2'eq7} to eliminate $R_{n-1}$, we come to
\bea\label{betanRn-1*}
\beta_{n}R_{n-1}^{\star}=\frac{r_n^{\star}(2r_n-t_1)}{R_n}+r_n(t_1-r_n)\frac{R_n^{\star}}{R_n^2}.
\eea

Using \eqref{betanRn-1*} and \eqref{S2'eq6} to remove $R_{n-1}^{\star}$ and $R_{n-1}$ respectively in
\eqref{S2'eq8}, we obtain an expression for $\beta_n$ involving the auxiliary quantities. This expression and \eqref{S1eq1} are stated in the following lemma.

\begin{lemma} The recurrence coefficients are expressed in terms of the auxiliary quantities by
\begin{gather}\label{alphanRn}
\alpha_{n}=2n+1+\alpha+R_n+R_n^{\star},
\end{gather}
\begin{equation}\label{betanRnrn}
\begin{aligned}
\beta_n=&\frac{1}{\tau R_n}\left(r_n^{\star}-r_n\frac{R_n^{\star}}{R_n}\right)\left(r_n^{\star}+(t_1-r_n)\frac{R_n^{\star}}{R_n}\right)+\frac{r_n(t_1-r_n)}{R_n^2}+\frac{nt_1-(2n+\alpha)r_n}{R_n}.
\end{aligned}
\end{equation}
\end{lemma}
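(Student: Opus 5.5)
The first formula \eqref{alphanRn} is just \eqref{S1eq1}, so only \eqref{betanRnrn} needs proof. The plan is to take \eqref{S2'eq8} as the master equation, eliminate the two unknowns $\beta_nR_{n-1}$ and $\beta_nR_{n-1}^{\star}$ using \eqref{S2'eq6} and \eqref{betanRn-1*}, and then solve the resulting linear equation for $\beta_n$.

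First I check that the divisions are legitimate. By \eqref{defR}, $R_n=t_1\int_0^{+\infty}P_n^2w\,dy/(y\,h_n)$. The integrand is positive, so $R_n$ has the sign of $t_1$ and hence $R_n\neq0$, since $t_1\neq0$. Also $\tau=2t_2/t_1\neq0$ because $t_2>0$. Thus dividing by $R_n$ and $\tau$ is allowed, and \eqref{betanRn-1*} is itself valid: it comes from \eqref{S2'eq7} after substituting $\beta_nR_{n-1}=r_n(r_n-t_1)/R_n$ from \eqref{S2'eq6}.

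Next I expand the left side of \eqref{S2'eq8} as $\tau\beta_nR_n+\tau\beta_nR_{n-1}+R_n^{\star}\,\beta_nR_{n-1}^{\star}$. I replace $\beta_nR_{n-1}$ by $r_n(r_n-t_1)/R_n$ and $\beta_nR_{n-1}^{\star}$ by the right side of \eqref{betanRn-1*}. On the right side of \eqref{S2'eq8} I write $2nt_2=n\tau t_1$. The equation is then linear in $\beta_n$:
\[
\tau R_n\beta_n=(r_n^{\star})^2-\frac{r_n^{\star}R_n^{\star}(2r_n-t_1)}{R_n}-r_n(t_1-r_n)\frac{(R_n^{\star})^2}{R_n^2}+\tau\Bigl(\frac{r_n(t_1-r_n)}{R_n}+nt_1-(2n+\alpha)r_n\Bigr).
\]

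Finally I divide by $\tau R_n$ and factor the first three terms. Expanding the product $\bigl(r_n^{\star}-r_nR_n^{\star}/R_n\bigr)\bigl(r_n^{\star}+(t_1-r_n)R_n^{\star}/R_n\bigr)$ gives $(r_n^{\star})^2+r_n^{\star}\frac{R_n^{\star}}{R_n}(t_1-2r_n)-r_n(t_1-r_n)\frac{(R_n^{\star})^2}{R_n^2}$. This is exactly that combination of three terms. The remaining terms become $\frac{r_n(t_1-r_n)}{R_n^2}+\frac{nt_1-(2n+\alpha)r_n}{R_n}$, which yields \eqref{betanRnrn}.

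The only step needing care is this bookkeeping of signs and spotting the factorization. There is no conceptual obstacle, since every ingredient is already derived in the preceding equations.
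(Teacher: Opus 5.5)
Your proposal is correct and is essentially the paper's own argument: \eqref{alphanRn} is just \eqref{S1eq1}, and \eqref{betanRnrn} follows, as in the paper, by eliminating $\beta_nR_{n-1}$ and $\beta_nR_{n-1}^{\star}$ from \eqref{S2'eq8} via \eqref{S2'eq6} and \eqref{betanRn-1*} and solving the resulting linear equation for $\beta_n$; your signs and the factorization check out. Your observation that $R_n$ has the sign of $t_1$ (so $R_n\neq0$) and that $\tau\neq0$ usefully makes explicit the nondegeneracy that the paper uses without comment.
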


Moreover, we show that the auxiliary quantities satisfy a system of first order difference equations.

\begin{proposition} The auxiliary quantities satisfy the following difference equations
\begin{gather}
r_{n+1}=-r_n+t_1-(2n+1+\alpha+R_n+R_n^{\star})R_n,\qquad \quad n\geq0,\label{Rnrndiffeq1}\\
r_{n+1}^{\star}=-r_n^{\star}+\tau\,R_n-(2n+1+\alpha+R_n+R_n^{\star})R_n^{\star},\qquad \quad n\geq0,\label{Rnrndiffeq2}
\end{gather}
\begin{equation}\label{Rnrndiffeq3}
\begin{aligned}
R_n\left[\left(\frac{\left(r_n^{\star}\right)^2}{\tau}-(2n+\alpha)r_n+nt_1\right)R_{n-1}^2+ r_n(t_1-r_n)R_{n-1}\qquad\qquad\qquad\qquad\qquad\quad\;\right.\\
\left.+\left(r_n^{\star}(t_1-2r_n)R_{n-1}+r_n(r_n-t_1)R_{n-1}^{\star}\right)\frac{R_{n-1}^{\star}}{\tau}\right]= r_n(r_n-t_1)R_{n-1}^2,\qquad n\geq1,
\end{aligned}
\end{equation}
\begin{gather}
\qquad r_n(r_n-t_1)R_{n-1}R_n^{\star}=\left(r_n^{\star}(2r_n-t_1)R_{n-1}+r_n(t_1-r_n)R_{n-1}^{\star}\right)R_n,\qquad \quad n\geq1,\label{Rnrndiffeq4}
\end{gather}
which can be iterated in $n$ with the initial conditions
\[R_{0}=t_1\frac{\int_{0}^{+\infty}y^{\alpha-1}{\emph e}^{-y-t_1/y-t_2/y^2}dy}{\int_{0}^{+\infty}y^{\alpha}{\emph e}^{-y-t_1/y-t_2/y^2}dy},
\qquad R_{0}^{\star}=2t_2\frac{\int_{0}^{+\infty}y^{\alpha-2}{\emph e}^{-y-t_1/y-t_2/y^2}dy}{\int_{0}^{+\infty}y^{\alpha}{\emph e}^{-y-t_1/y-t_2/y^2}dy},
\qquad r_{0}=r_{0}^{\star}=0.\]
\end{proposition}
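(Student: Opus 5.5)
The first two equations follow directly from results already derived from $(S_1)$. Substituting the expression \eqref{alphanRn} for $\alpha_n$ into \eqref{S1eq4} and solving for $r_{n+1}$ gives \eqref{Rnrndiffeq1}. Substituting it into \eqref{S1eq5} and solving for $r_{n+1}^{\star}$ gives \eqref{Rnrndiffeq2}. These hold for all $n\ge0$, because $(S_1)$ is valid from $n=0$ on (with $B_0=0$).

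For \eqref{Rnrndiffeq4}, I would take $n\ge1$ and use \eqref{S2'eq6} and \eqref{S2'eq7}. From \eqref{S2'eq6}, $\beta_n=r_n(r_n-t_1)/(R_nR_{n-1})$. Inserting this into \eqref{S2'eq7} and multiplying through by $R_nR_{n-1}$ gives
\[
r_n(r_n-t_1)(R_{n-1}R_n^{\star}+R_nR_{n-1}^{\star})=r_n^{\star}(2r_n-t_1)R_nR_{n-1}.
\]
Moving the term containing $R_nR_{n-1}^{\star}$ to the right-hand side yields exactly \eqref{Rnrndiffeq4}.

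Equation \eqref{Rnrndiffeq3} comes from \eqref{S2'eq8}. I would again replace $\beta_n$ by $r_n(r_n-t_1)/(R_nR_{n-1})$ and multiply by $R_nR_{n-1}^2/\tau$. This puts the relation in the form
\[
r_n(r_n-t_1)\Big(R_{n-1}(R_n+R_{n-1})+\tfrac1\tau R_n^{\star}R_{n-1}^{\star}R_{n-1}\Big)=R_nR_{n-1}^2\Big(\tfrac{(r_n^{\star})^2}{\tau}-(2n+\alpha)r_n+nt_1\Big),
\]
using $2nt_2/\tau=nt_1$. The remaining $R_n^{\star}$ term is then eliminated with \eqref{Rnrndiffeq4}, written as
\[
r_n(r_n-t_1)R_{n-1}R_n^{\star}=R_n\big(r_n^{\star}(2r_n-t_1)R_{n-1}+r_n(t_1-r_n)R_{n-1}^{\star}\big).
\]
Collecting terms should reproduce \eqref{Rnrndiffeq3}, with the factor $R_n$ on the left and $r_n(r_n-t_1)R_{n-1}^2$ on the right. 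This bookkeeping, with its signs and factors of $\tau$, is the step most likely to go wrong, so I would check it term by term.

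To justify that the system can be iterated, suppose $R_{n-1},R_{n-1}^{\star},r_n,r_n^{\star}$ are known. Then \eqref{Rnrndiffeq3} is linear in $R_n$, so it determines $R_n$ provided the bracket is nonzero. Next, \eqref{Rnrndiffeq4} is linear in $R_n^{\star}$, so it determines $R_n^{\star}$ provided $r_n(r_n-t_1)R_{n-1}\neq0$. Finally \eqref{Rnrndiffeq1} and \eqref{Rnrndiffeq2} give $r_{n+1}$ and $r_{n+1}^{\star}$. Positivity of the weight gives $R_n\neq0$, since $R_n$ is $t_1$ times a positive integral ratio.

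The initial values come from the definitions \eqref{defRr} with $P_0=1$ and $h_0=\int_0^{+\infty}w\,dy$. We also have $r_0=r_0^{\star}=0$, which is consistent with $B_0=0$ in \eqref{Bn-2}.
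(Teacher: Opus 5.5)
Your proposal is correct and follows the paper's own proof: substitute \eqref{alphanRn} into \eqref{S1eq4}--\eqref{S1eq5}, use \eqref{S2'eq6} to eliminate $\beta_n$ from \eqref{S2'eq7} and \eqref{S2'eq8}, and then remove $R_n^{\star}$ using \eqref{Rnrndiffeq4}; the sign and $\tau$ bookkeeping you flagged does check out term by term. In your iterability remark, the nondegeneracy conditions you left as provisos in fact hold automatically: $\beta_n>0$ and $R_nR_{n-1}>0$ (both have the sign of $t_1$) give $r_n(r_n-t_1)=\beta_nR_nR_{n-1}>0$, which also forces the bracket in \eqref{Rnrndiffeq3} to be nonzero.
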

\begin{proof}
Plugging \eqref{S1eq1} into \eqref{S1eq4}-\eqref{S1eq5} gives us \eqref{Rnrndiffeq1}-\eqref{Rnrndiffeq2} immediately.
Using \eqref{S2'eq6} to eliminate $\beta_n$ in \eqref{S2'eq7} and \eqref{S2'eq8}, we get \eqref{Rnrndiffeq4} and
\begin{align*}
r_n(r_n-t_1)\left(\tau(R_n+R_{n-1})+ R_n^{\star}R_{n-1}^{\star}\right)
=&\left(\left(r_n^{\star}\right)^{2}-(2n+\alpha)\tau r_n+2nt_2\right)R_nR_{n-1},
\end{align*}
respectively. Eliminating $R_n^{\star}$ in the above equation via \eqref{Rnrndiffeq4}, we are led to \eqref{Rnrndiffeq3}.
\end{proof}

We make a remark on the special case $t_2\rightarrow0^+$. The weight function \eqref{w} becomes $x^{\al}{\rm e}^{-x-t_1/x}$ where $t_1>0$ is assumed, and it was studied in \cite{ChenIts2010}.
\begin{remark}
When $t_2\rightarrow0^+$, i.e. $\tau\rightarrow0$, we have by the definitions of $R_n^{\star}$ and $r_n^{\star}$ given by \eqref{defR*} and \eqref{defr*} that
\[R_n^{\star}\rightarrow0, \qquad\qquad r_n^{\star}\rightarrow0,\]
and
\begin{align*}
\frac{R_n^{\star}}{\tau}=&\frac{t_1}{h_n}\:\int_{0}^{\infty}P_n^2(y)w(y)\frac{dy }{y^2},& \frac{r_n^{\star}}{\tau}=&\frac{t_1}{h_{n-1}}\:\int_{0}^{\infty} P_n(y)P_{n-1}(y)w(y)\frac{dy}{y^2}.
\end{align*}
Consequently, we find that equations \eqref{alphanRn} and \eqref{betanRnrn} are consistent with (2.9) and (2.14) of \cite{ChenIts2010}
where the symbols $s,a_n$ and $b_n$ were used instead of $t_1,R_n$ and $r_n$ respectively. In addition, the difference equations \eqref{Rnrndiffeq1} and \eqref{Rnrndiffeq3}
accord with (2.16)-(2.17) of \cite{ChenIts2010}.
\end{remark}

We close this section by presenting two expressions for the coefficient of $x^{n-1}$ in $P_n(x;t_1,t_2)$, i.e. $p(n,t_1,t_2)$, which we will see play a crucial role in our subsequent derivations.
\begin{lemma} We have
\begin{align}
p(n,t_1,t_2)=&-n(n+\alpha)-\sum_{j=0}^{n-1}R_{j}-\sum_{j=0}^{n-1}R_{j}^{\star}\label{betasumR}\\
=&r_n+r_n^{\star}-\beta_n\label{betap}.
\end{align}
\end{lemma}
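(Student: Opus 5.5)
The first expression \eqref{betasumR} comes directly from the recurrence coefficient $\alpha_j$, and the second \eqref{betap} then follows from the formula for $\beta_n$.

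\textbf{Step 1: proof of \eqref{betasumR}.} By \eqref{a01},
\[
p(n,t_1,t_2)=-\sum_{j=0}^{n-1}\alpha_j .
\]
We substitute \eqref{alphanRn}, i.e. $\alpha_j=2j+1+\alpha+R_j+R_j^{\star}$. The elementary identity
\[
\sum_{j=0}^{n-1}(2j+1+\alpha)=n^2+n\alpha=n(n+\alpha)
\]
then gives
\[
p(n,t_1,t_2)=-n(n+\alpha)-\sum_{j=0}^{n-1}R_j-\sum_{j=0}^{n-1}R_j^{\star},
\]
which is \eqref{betasumR}.

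\textbf{Step 2: proof of \eqref{betap}.} We use \eqref{S2'eq1}, obtained from the $z^{-2}$ coefficient of $(S_2')$:
\[
\beta_n=n(n+\alpha)+r_n+r_n^{\star}+\sum_{j=0}^{n-1}R_j+\sum_{j=0}^{n-1}R_j^{\star}.
\]
Comparing this with \eqref{betasumR}, the right-hand side equals $r_n+r_n^{\star}-p(n,t_1,t_2)$. Hence $p(n,t_1,t_2)=r_n+r_n^{\star}-\beta_n$, which is \eqref{betap}.

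\textbf{Checks.} No step is a genuine obstacle. The only point needing care is that \eqref{S2'eq1} is a faithful coefficient match.

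On the right of $(S_2')$, the product $\beta_nA_nA_{n-1}$ has leading term $\beta_n/z^2$.

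On the left, the $z^{-2}$ terms arise as follows:
\begin{itemize}
\item $B_n^2$ contributes $n^2/z^2$.
\item $v'(z)B_n$ contributes $\alpha n/z^2$ from $(-\alpha/z)(-n/z)$, together with $(r_n+r_n^{\star})/z^2$ from $1\cdot(r_n+r_n^{\star})/z^2$.
\item $\sum_{j=0}^{n-1}A_j$ contributes $\sum_{j=0}^{n-1}(R_j+R_j^{\star})/z^2$.
\end{itemize}
The $z^{-1}$ terms cancel: $-n/z$ from $B_n$ against $+n/z$ from $\sum_{j=0}^{n-1}A_j$. This confirms \eqref{S2'eq1}.

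For $n=0$ the statement is consistent with the initial conditions $p(0)=0$, $r_0=r_0^{\star}=0$ and $\beta_0=0$.
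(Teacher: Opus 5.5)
Your proof is correct and follows the paper's argument: sum \eqref{alphanRn} over $j=0,\dots,n-1$ and use \eqref{a01} to obtain \eqref{betasumR}, then compare with \eqref{S2'eq1} to obtain \eqref{betap}. Your check that \eqref{S2'eq1} is the $z^{-2}$ coefficient of $(S_2')$, with the $z^{-1}$ terms cancelling, is also accurate; the $n=0$ remark only relies on taking $\beta_0=0$ as a convention.
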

\begin{proof}
Replacing $n$ by $j$ in \eqref{alphanRn} and summing over $j$ from $0$ to $n-1$, in view of \eqref{a01}, we come to \eqref{betasumR}. Combining it with \eqref{S2'eq1}, we readily get \eqref{betap}.
\end{proof}

\section{Generalized Toda, Generalized Painlev\'{e} III$'$, and Generalized $\sigma$-Form of Painlev\'{e} III$'$}
We have established the difference equations for $\{R_{n},R_{n}^{\star},r_{n},r_{n}^{\star}\}$ in the preceding section. Now we turn to the derivation of differential equations for them.

We start by differentiating the orthogonality relations satisfied by $P_n(x;t_1,t_2)$, i.e. \eqref{or}, with respect to $t_1$ and $t_2$. We find that $\{R_n,R_n^{\star}\}$ and $\{r_n,r_n^{\star}\}$ are intimately connected with the differentiations of $h_n(t_1,t_2)$ and $p(n,t_1,t_2)$ respectively. With the aid of the identities derived in Section \ref{Sdiffeq}, we develop a system of first order PDEs satisfied by $\{R_{n},R_{n}^{\star},r_{n},r_{n}^{\star}\}$, which enables us to deduce two coupled second order PDEs for  $\{R_n, R_n^{\star}\}$.

For ease of notations, we denote $\frac{\partial}{\partial t_i},\frac{\partial^2}{\partial t_i^2}$ and $\frac{\partial}{\partial{t_i}}(\frac{\partial}{\partial{t_j}})$ by $\partial_{t_i}, \partial_{t_it_i}$ and $\partial_{t_it_j}$  respectively for $i,j=1,2$ in the rest of this paper.
\subsection{Generalized Toda Equation}
According to \eqref{or}, we have
\[
h_n(t_1,t_2)=\int_{0}^{+\infty}P_{n}^2(x;t_1,t_2)w(x;t_1,t_2)dx,\]
and
\[
0=\int_{0}^{+\infty}P_{n}(x;t_1,t_2)P_{n-1}(x;t_1,t_2)w(x;t_1,t_2)dx.\]
Differentiating both sides of these two identities with respect to $t_1$ and $t_2$, in view of the definition of $\{P_n(x;t_1,t_2)\}$ given by \eqref{or}-\eqref{Pndef}, we arrive at the following differential relations.

\begin{lemma}
\begin{itemize}
\item [\emph {(i)}] The derivatives of $h_n$ are related to $R_{n}$ and $R_n^{\star}$ as follows:
\begin{align}\label{hnD1}
t_1{\partial_{t_1}}\ln{h_{n}}=&- R_n,&
2t_2{\partial_{t_2}}\ln{h_{n}}=&-R_n^{\star},
\end{align}
which, in view of $\beta_n=h_n/h_{n-1}$, give us
\begin{align}\label{lnbetaD1}
t_1\partial_{t_1}\ln\beta_{n}=&R_{n-1}-R_n,&
2t_2\partial_{t_2}\ln\beta_{n}=&R_{n-1}^{\star}-R_n^{\star}.
\end{align}

\item[\emph{(ii)}] The derivatives of $p(n,t_1,t_2)$ have the following relationships with $r_{n}$ and $r_n^{\star}:$
\begin{align}\label{pD1}
t_1\partial_{t_1}p(n,t_1,t_2)=& ~r_n,&
2t_2\partial_{t_2}p(n,t_1,t_2)=& ~r_n^{\star}.
\end{align}
Since $\alpha_n=p(n,t_1,t_2)-p(n+1,t_1,t_2)$, we have
\begin{align}\label{alphaD1}
t_1\partial_{t_1}\alpha_n=&r_n-r_{n+1},&
2t_2\partial_{t_2}\alpha_n=&r_n^{\star}-r_{n+1}^{\star}.
\end{align}
\end{itemize}
\end{lemma}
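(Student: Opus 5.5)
The plan is to differentiate the two orthogonality identities under the integral sign and use orthogonality to kill every term except one. Since $w(x;t_1,t_2)=x^{\alpha}\exp(-x-t_1/x-t_2/x^2)$, we have
\[
\partial_{t_1}w=-\frac{w}{x},\qquad \partial_{t_2}w=-\frac{w}{x^2}.
\]
Differentiation under the integral is legitimate because for $t_2>0$ the factor ${\rm e}^{-t_2/x^2}$ makes every integrand, including those carrying the extra $1/x$ or $1/x^2$, integrable at $0$ locally uniformly in $(t_1,t_2)$. At infinity the integrands are controlled by ${\rm e}^{-x}$. The coefficients of $P_n$ depend smoothly on $(t_1,t_2)$, being ratios of moment determinants.

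\textbf{Part (i).} Differentiate $h_n=\int_0^{+\infty}P_n^2w\,dx$ with respect to $t_1$. This produces
\[
2\int_0^{+\infty}P_n\,\partial_{t_1}P_n\,w\,dx-\int_0^{+\infty}P_n^2\,w\,\frac{dx}{x}.
\]
Since $P_n$ is monic, $\partial_{t_1}P_n$ has degree at most $n-1$, so the first integral vanishes by \eqref{or}. By \eqref{defR}, the second term equals $-h_nR_n/t_1$, which gives $t_1\partial_{t_1}\ln h_n=-R_n$. The $t_2$-derivative is identical, with $1/x^2$ in place of $1/x$, and \eqref{defR*} yields $2t_2\partial_{t_2}\ln h_n=-R_n^{\star}$. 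Then \eqref{lnbetaD1} follows by taking logarithmic derivatives of $\beta_n=h_n/h_{n-1}$ from \eqref{a12}.

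\textbf{Part (ii).} Differentiate $0=\int_0^{+\infty}P_nP_{n-1}w\,dx$ with respect to $t_1$. This produces three terms:
\begin{itemize}
\item $\int_0^{+\infty}\partial_{t_1}P_n\,P_{n-1}w\,dx$;
\item $\int_0^{+\infty}P_n\,\partial_{t_1}P_{n-1}\,w\,dx$;
\item $-\int_0^{+\infty}P_nP_{n-1}w\,dx/x$.
\end{itemize}
The second term vanishes because $\deg\partial_{t_1}P_{n-1}\le n-2$. In the first, $\partial_{t_1}P_n=\partial_{t_1}p(n,t_1,t_2)\,x^{n-1}+(\text{lower order})$, so by orthogonality only the leading part survives, and it contributes $\partial_{t_1}p(n,t_1,t_2)\,h_{n-1}$. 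The third term is $-h_{n-1}r_n/t_1$ by \eqref{defr}. Hence $t_1\partial_{t_1}p(n,t_1,t_2)=r_n$. The same argument with $\partial_{t_2}$ and \eqref{defr*} gives $2t_2\partial_{t_2}p(n,t_1,t_2)=r_n^{\star}$. Finally, \eqref{alphaD1} follows by applying these identities at $n$ and $n+1$ in \eqref{a13}.

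The case $n=0$ needs a remark: $p(0,t_1,t_2)=0$ and $r_0=r_0^{\star}=0$, so the identities in (ii) hold trivially.

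The only point requiring care is justifying differentiation under the integral sign near $x=0$, which the positivity of $t_2$ settles. The rest is bookkeeping of polynomial degrees.
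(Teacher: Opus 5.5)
Your proposal is correct and follows the same route as the paper. Both differentiate $h_n=\int_0^{+\infty}P_n^2w\,dx$ and $0=\int_0^{+\infty}P_nP_{n-1}w\,dx$ with respect to $t_1$ and $t_2$, use $\partial_{t_1}w=-w/x$ and $\partial_{t_2}w=-w/x^2$, and let orthogonality remove every term except the leading-coefficient contribution $\partial p(n)\,h_{n-1}$. Your justification of differentiation under the integral sign via the factor ${\rm e}^{-t_2/x^2}$ with $t_2>0$ supplies a detail the paper leaves implicit.
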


Equations \eqref{lnbetaD1} and \eqref{alphaD1} provide the connections between the first order derivatives of the recurrence coefficients and the four auxiliary quantities. Combining them with \eqref{S1eq1} and \eqref{S2'eq1}, we obtain two-variable analogues of the Toda equations for $\alpha_n(t_1,t_2)$ and $\beta_n(t_1,t_2)$.

\begin{proposition}
The recurrence coefficients satisfy the following partial differential-difference equations
\begin{align}
\left(t_1\partial_{t_1}+2t_2\partial_{t_2}\right)\alpha_n
&=\beta_{n}-\beta_{n+1}+\alpha_n,\label{alphaDbeta}\\
\left(t_1\partial_{t_1}+2t_2\partial_{t_2}\right)\ln\beta_n
&=\alpha_{n-1}-\alpha_{n}+2,\label{betaDalpha}
\end{align}
which, in combination, give rise to a two-variable generalization of the Toda molecule equation for $\beta_n:$
\bea\label{betaDD}
\left(t_{1}^{2}\partial_{t_1t_1}+2t_{1}t_{2}(\partial_{t_1t_2}+\partial_{t_2t_1})+4t_{2}^{2}\partial_{t_2t_2}+2t_2\partial_{t_2}\right)\ln{\beta_{n}}
=\beta_{n-1}-2\beta_{n}+\beta_{n+1}-2.
\eea
\end{proposition}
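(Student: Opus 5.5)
Apply the Euler-type operator $\mathcal{D}:=t_1\partial_{t_1}+2t_2\partial_{t_2}$ to the relations of the preceding lemma, and then use the expressions \eqref{S1eq1} and \eqref{S2'eq1} to rewrite the resulting sums of auxiliary quantities as recurrence coefficients.

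For \eqref{alphaDbeta}, add the two identities in \eqref{alphaD1}:
\[
\mathcal{D}\alpha_n=(r_n+r_n^{\star})-(r_{n+1}+r_{n+1}^{\star}).
\]
By \eqref{S2'eq1}, $r_n+r_n^{\star}=\beta_n-n(n+\alpha)-\sum_{j=0}^{n-1}(R_j+R_j^{\star})$. Writing the same with $n+1$ in place of $n$ and subtracting gives
\[
(r_n+r_n^{\star})-(r_{n+1}+r_{n+1}^{\star})=\beta_n-\beta_{n+1}+(2n+1+\alpha)+R_n+R_n^{\star}.
\]
The last three terms equal $\alpha_n$ by \eqref{alphanRn}, which yields \eqref{alphaDbeta}. (Equivalently, apply $\mathcal{D}$ to \eqref{betap} and use \eqref{pD1}; the plan above is more direct.)

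For \eqref{betaDalpha}, add the two identities in \eqref{lnbetaD1}:
\[
\mathcal{D}\ln\beta_n=(R_{n-1}+R_{n-1}^{\star})-(R_n+R_n^{\star}).
\]
By \eqref{alphanRn}, $R_n+R_n^{\star}=\alpha_n-2n-1-\alpha$ and $R_{n-1}+R_{n-1}^{\star}=\alpha_{n-1}-2n+1-\alpha$. The difference is $\alpha_{n-1}-\alpha_n+2$, as claimed.

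For \eqref{betaDD}, apply $\mathcal{D}$ to \eqref{betaDalpha} and substitute \eqref{alphaDbeta} at indices $n-1$ and $n$:
\[
\mathcal{D}^2\ln\beta_n=\mathcal{D}\alpha_{n-1}-\mathcal{D}\alpha_n=(\beta_{n-1}-\beta_n+\alpha_{n-1})-(\beta_n-\beta_{n+1}+\alpha_n).
\]
The leftover $\alpha_{n-1}-\alpha_n$ is, by \eqref{betaDalpha} again, equal to $\mathcal{D}\ln\beta_n-2$. Moving $\mathcal{D}\ln\beta_n$ to the left gives
\[
(\mathcal{D}^2-\mathcal{D})\ln\beta_n=\beta_{n-1}-2\beta_n+\beta_{n+1}-2.
\]

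The only remaining step, which I expect to be the main (though mild) obstacle, is expanding $\mathcal{D}^2-\mathcal{D}$ correctly. Since $\partial_{t_1}$ commutes with $t_2$, and $\partial_{t_2}(2t_2\partial_{t_2})=2\partial_{t_2}+2t_2\partial_{t_2t_2}$,
\[
\mathcal{D}^2=t_1^2\partial_{t_1t_1}+t_1\partial_{t_1}+2t_1t_2(\partial_{t_1t_2}+\partial_{t_2t_1})+4t_2^2\partial_{t_2t_2}+4t_2\partial_{t_2}.
\]
Subtracting $\mathcal{D}=t_1\partial_{t_1}+2t_2\partial_{t_2}$ cancels the $t_1\partial_{t_1}$ term and leaves exactly $2t_2\partial_{t_2}$. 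This is precisely the operator on the left of \eqref{betaDD}, which completes the proof.
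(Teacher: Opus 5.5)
Your proof is correct and follows the paper's route. You sum the identities in \eqref{alphaD1} and \eqref{lnbetaD1}, convert via \eqref{S2'eq1} and \eqref{alphanRn}, then apply $t_1\partial_{t_1}+2t_2\partial_{t_2}$ to \eqref{betaDalpha} and use \eqref{betaDalpha} once more to remove $\alpha_{n-1}-\alpha_n$; your explicit expansion of $(t_1\partial_{t_1}+2t_2\partial_{t_2})^2-(t_1\partial_{t_1}+2t_2\partial_{t_2})$ correctly supplies the step the paper leaves as ``writing the left-hand side out.''
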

\begin{proof}
It follows from \eqref{alphaD1} that
\begin{gather*}
\left(t_1\partial_{t_1}+2t_2\partial_{t_2}\right)\alpha_n=r_n-r_{n+1}+r_n^{\star}-r_{n+1}^{\star}.
\end{gather*}
According to \eqref{S2'eq1}, and taking \eqref{alphanRn} into account, we get
\begin{gather*}
\beta_n-\beta_{n+1}=-\alpha_n+r_n-r_{n+1}+r_n^{\star}-r_{n+1}^{\star}.
\end{gather*}
Subtracting this equation from the previous one leads us to \eqref{alphaDbeta}.

Equation \eqref{betaDalpha} can be deduced via a similar argument. From \eqref{lnbetaD1}, we have
\[\left(t_1\partial_{t_1}+2t_2\partial_{t_2}\right)\ln\beta_n
=R_{n-1}-R_n+R_{n-1}^{\star}-R_n^{\star}.\]
From \eqref{S1eq1}, we find
\begin{align*}
\alpha_{n-1}-\alpha_n=&-2+R_{n-1}-R_n+R_{n-1}^{\star}-R_n^{\star}.
\end{align*}
Combining these two equalities, we come to \eqref{betaDalpha}.

To continue, we apply $t_1\partial_{t_1}+2t_2\partial_{t_2}$ to \eqref{betaDalpha} and obtain, with the aid of \eqref{alphaDbeta}, that
\begin{align*}
\left(t_1\partial_{t_1}+2t_2\partial_{t_2}\right)^2\ln\beta_n
=\beta_{n-1}-2\beta_{n}+\beta_{n+1}+\alpha_{n-1}-\alpha_n.
\end{align*}
Writing the left-hand side out and using \eqref{betaDalpha} to get rid of $\alpha_{n-1}-\alpha_n$ on the right-hand side, we arrive at \eqref{betaDD}.
\end{proof}
\subsection{Coupled System Reducible to Painlev\'{e} III$'$}
We observe from equations \eqref{alphaD1} and \eqref{lnbetaD1} that the first order partial derivatives of $\alpha_n$ and $\beta_n$ are strongly related to the first order differences of the four auxiliary quantities. In these equations, we replace $\alpha_n$ and $\beta_n$ on the left-hand sides using \eqref{alphanRn} and \eqref{betap} respectively, and remove the terms with index $n-1$ and $n+1$ on the right-hand sides using \eqref{S1eq4}-\eqref{S1eq5}, \eqref{S2'eq6} and \eqref{betanRn-1*}. We finally produce a system of Riccati-like equations.

\begin{lemma}
The auxiliary quantities $\{R_n,R_n^{\star},r_n,r_n^{\star}\}$ satisfy the following PDE system
\begin{gather}
t_1\partial_{t_1}(R_n+R_n^{\star})=2r_n+\left(2n+1+\alpha+R_n+R_n^{\star}\right)R_n-t_1,\label{rnRD}\\
2t_2\partial_{t_2}(R_n+R_n^{\star})=2r_n^{\star}+\left(2n+1+\alpha+ R_n+R_n^{\star}\right)R_n^{\star}-\tau R_n,\label{rn*RD}\\
t_1\partial_{t_1}(r_n+r_n^{\star})=\Theta_n+r_n+\frac{2r_n(r_n-t_1)}{R_n},\label{Ric1}\\
2t_2\partial_{t_2}(r_n+r_n^{\star})
=\frac{R_n^{\star}}{R_n}\Theta_n+r_n^{\star}+ \frac{r_n^{\star}(2r_n-t_1)}{R_n},\label{Ric2}
\end{gather}
where $\Theta_n:=\frac{1}{\tau}\left(\frac{R_n^{\star}}{R_n}r_n-r_n^{\star}\right)\left(\frac{R_n^{\star}}{R_n}(t_1-r_n)+r_n^{\star}\right)+(2n+\alpha)r_n-nt_1$.
\end{lemma}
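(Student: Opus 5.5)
The plan is to combine the derivative relations \eqref{lnbetaD1}--\eqref{alphaD1} with the algebraic identities from Section \ref{Sdiffeq}. In each relation we express the left-hand side through the auxiliary quantities at index $n$, and we eliminate every quantity carrying index $n\pm1$ on the right-hand side.

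\emph{First two equations.} By \eqref{alphanRn}, $\alpha_n=2n+1+\alpha+R_n+R_n^{\star}$, so $t_1\partial_{t_1}\alpha_n=t_1\partial_{t_1}(R_n+R_n^{\star})$, and similarly for $2t_2\partial_{t_2}$. The right-hand sides of \eqref{alphaD1} are $r_n-r_{n+1}$ and $r_n^{\star}-r_{n+1}^{\star}$.
\begin{itemize}
\item Substituting $r_{n+1}=t_1-\alpha_nR_n-r_n$ from \eqref{S1eq4} gives $2r_n+\alpha_nR_n-t_1$, which is \eqref{rnRD} after inserting \eqref{alphanRn}.
\item Substituting $r_{n+1}^{\star}=\tau R_n-\alpha_nR_n^{\star}-r_n^{\star}$ from \eqref{S1eq5} gives $2r_n^{\star}+\alpha_nR_n^{\star}-\tau R_n$, which is \eqref{rn*RD}.
\end{itemize}

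\emph{Last two equations.} By \eqref{betap}, $r_n+r_n^{\star}=\beta_n+p(n,t_1,t_2)$. Using \eqref{pD1} and \eqref{lnbetaD1},
\[
t_1\partial_{t_1}(r_n+r_n^{\star})=\beta_n(R_{n-1}-R_n)+r_n,\qquad 2t_2\partial_{t_2}(r_n+r_n^{\star})=\beta_n(R_{n-1}^{\star}-R_n^{\star})+r_n^{\star}.
\]
The index-$(n-1)$ terms are removed as follows.
\begin{itemize}
\item $\beta_nR_{n-1}=r_n(r_n-t_1)/R_n$, by \eqref{S2'eq6}.
\item $\beta_nR_{n-1}^{\star}$ is given directly by \eqref{betanRn-1*}.
\item $\beta_nR_n$ is obtained by multiplying \eqref{betanRnrn} by $R_n$. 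The first term there is $\frac1\tau\bigl(r_n^{\star}-r_n\frac{R_n^{\star}}{R_n}\bigr)\bigl(r_n^{\star}+(t_1-r_n)\frac{R_n^{\star}}{R_n}\bigr)$, and reversing the sign of its first factor shows it equals minus the product term in $\Theta_n$. Hence
\[
-\beta_nR_n=\Theta_n+\frac{r_n(r_n-t_1)}{R_n}.
\]
\end{itemize}

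Adding these pieces gives $t_1\partial_{t_1}(r_n+r_n^{\star})=\Theta_n+r_n+2r_n(r_n-t_1)/R_n$, which is \eqref{Ric1}.

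For \eqref{Ric2}, write $\beta_nR_n^{\star}=\frac{R_n^{\star}}{R_n}\beta_nR_n=-\frac{R_n^{\star}}{R_n}\bigl(\Theta_n+\frac{r_n(r_n-t_1)}{R_n}\bigr)$. In the sum $\beta_nR_{n-1}^{\star}-\beta_nR_n^{\star}$, the term $r_n(t_1-r_n)R_n^{\star}/R_n^2$ coming from \eqref{betanRn-1*} cancels against the corresponding term from $-\beta_nR_n^{\star}$. What remains is $\frac{R_n^{\star}}{R_n}\Theta_n+\frac{r_n^{\star}(2r_n-t_1)}{R_n}$, and adding $r_n^{\star}$ gives \eqref{Ric2}.

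The only step needing care is the sign bookkeeping in identifying $-\beta_nR_n$ with $\Theta_n+r_n(r_n-t_1)/R_n$. Everything else is direct substitution. The divisions by $R_n$ and $\tau$ are legitimate because $t_1\neq0$ and $t_2>0$, so $R_n\neq0$ and $\tau\neq0$.
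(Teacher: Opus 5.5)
Your proposal is correct and follows the paper's proof essentially step for step. For the first pair you substitute \eqref{alphanRn} into \eqref{alphaD1} and clear the index-$(n+1)$ terms with \eqref{S1eq4}--\eqref{S1eq5}; for the second pair you use \eqref{betap} with \eqref{pD1} and \eqref{lnbetaD1}, then \eqref{S2'eq6}, \eqref{betanRn-1*} and \eqref{betanRnrn}. Your explicit identity $-\beta_nR_n=\Theta_n+r_n(r_n-t_1)/R_n$ is exactly the bookkeeping the paper leaves implicit in the phrase ``in light of \eqref{betanRnrn}''.
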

\begin{proof}
Inserting \eqref{alphanRn} into \eqref{alphaD1} yields
\begin{align*}
t_1\partial_{t_1}(R_n+R_n^{\star})=&r_n-r_{n+1},&
2t_2\partial_{t_2}(R_n+R_n^{\star})=&r_n^{\star}-r_{n+1}^{\star}.
\end{align*}
Using \eqref{S1eq4} and \eqref{S1eq5} to eliminate $r_{n+1}$ and $r_{n+1}^{\star}$ respectively in these two identities, in view of \eqref{alphanRn}, we come to \eqref{rnRD} and \eqref{rn*RD}.

To derive \eqref{Ric1} and \eqref{Ric2}, we rewrite \eqref{lnbetaD1} as
 \begin{align*}
t_1\partial_{t_1}\beta_{n}=&\beta_n(R_{n-1}-R_n),&
2t_2\partial_{t_2}\beta_{n}=&\beta_n(R_{n-1}^{\star}-R_n^{\star}).
\end{align*}
Replacing $\beta_n$ in them using \eqref{betap},  with the aid of \eqref{pD1}, we get
\begin{gather*}
t_1\partial_{t_1}(r_n+r_n^{\star})-r_n=\beta_nR_{n-1}-\beta_nR_n,\\
2t_2\partial_{t_2}(r_n+r_n^{\star})-r_n^{\star}=\beta_nR_{n-1}^{\star}-\beta_nR_n^{\star}.
\end{gather*}
Using \eqref{S2'eq6} and \eqref{betanRn-1*} to eliminate $\beta_nR_{n-1}$ and $\beta_nR_{n-1}^{\star}$ respectively in the above two equations, in light of \eqref{betanRnrn}, we arrive at \eqref{Ric1} and \eqref{Ric2}.
\end{proof}

Solving $r_n$ and $r_n^{\star}$ from \eqref{rnRD} and \eqref{rn*RD}, and substituting them into \eqref{Ric1}-\eqref{Ric2}, we derive a pair of second order PDEs for $R_n$ and $R_n^{\star}$.
\begin{theorem}\label{RR*PDE}
The quantities $R_n(t_1,t_2)$ and $R_n^{\star}(t_1,t_2)$ satisfy the following coupled second order PDEs:
\begin{subequations}\label{RnRn*PDEs}
\begin{equation}\label{RnRn*PDE1}
\begin{aligned}
&t_1^2\left(\partial_{t_1t_1}S_n\right)+2t_1t_2\left(\partial_{t_1t_2}S_n\right)+t_1t_2\left(\frac{T_n}{2t_2} \cdot t_1\partial_{t_1}S_n-\partial_{t_2}S_n\right)^2\\
&-\frac{t_1^2}{R_n}(\partial_{t_1}S_n)^2+(1-R_n^{\star})\cdot t_1\partial_{t_1}S_n+2t_2\left(R_n
   \cdot\partial_{t_2}S_n+\partial_{t_1}R_n\right)\\
&-R_nS_n(S_n+2n+1+\alpha)+\frac{t_2 }{t_1}R_n(R_n-2)-\alpha t_1 +\frac{t_1^2}{R_n}-\frac{t_1^3}{4t_2}T_n^2=0,
\end{aligned}
\end{equation}
\begin{equation}\label{RnRn*PDE2}
\begin{aligned}
&4t_2^2
  \left(\partial_{t_2t_2}S_n\right)+2 t_1t_2
   \left(\partial_{t_2t_1}S_n\right)+\frac{t_1}{4t_2 }T_n\left(T_n\cdot t_1\partial_{t_1}S_n-2t_2\partial_{t_2}S_n\right)^2\\
&-\frac{2t_1 t_2}{R_n}
  \cdot \partial_{t_1}S_n\cdot\partial_{t_2}S_n+\partial_{t_1}S_n\left(t_1 \left(T_n+R_n^{\star}\right)-2 t_2\right)+(1-R_n)\cdot 2t_2\partial_{t_2}S_n+\frac{4t_2^2}{t_1}\partial_{t_2}R_n\\
&+\left(\frac{2t_2}{t_1}R_n-R_n^{\star}S_n\right)(S_n +2n+1+\alpha)+\frac{t_2}{t_1}R_n\left(R_n^{\star}+2\right)-\alpha t_1 T_n-\frac{t_1^3}{4t_2}T_n^3=0,
\end{aligned}
\end{equation}
\end{subequations}
where
\begin{align*}
S_n(t_1,t_2):=&R_n(t_1,t_2)+R_n^{\star}(t_1,t_2),&T_n(t_1,t_2):=&\frac{R_n^{\star}(t_1,t_2)}{R_n(t_1,t_2)}.
\end{align*}
\end{theorem}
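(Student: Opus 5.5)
The plan is the one indicated just before the statement: treat \eqref{rnRD}--\eqref{rn*RD} as linear equations for $r_n$ and $r_n^{\star}$, solve them, and substitute into \eqref{Ric1}--\eqref{Ric2}. From \eqref{rnRD} and \eqref{rn*RD}, with $S_n=R_n+R_n^{\star}$, we read off
\begin{align*}
r_n&=\tfrac12\left(t_1\partial_{t_1}S_n-(S_n+2n+1+\alpha)R_n+t_1\right),\\
r_n^{\star}&=\tfrac12\left(2t_2\partial_{t_2}S_n-(S_n+2n+1+\alpha)R_n^{\star}+\tau R_n\right).
\end{align*}
These explicit first-order expressions in $S_n$, $R_n$ and $R_n^{\star}$ are what we substitute.

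First, we rewrite $\Theta_n$ in terms of $T_n=R_n^{\star}/R_n$. We have $\Theta_n=\frac1\tau(T_nr_n-r_n^{\star})(T_n(t_1-r_n)+r_n^{\star})+(2n+\alpha)r_n-nt_1$. Substituting the expressions above, the combination
\[
T_nr_n-r_n^{\star}=\tfrac12\left(T_n\,t_1\partial_{t_1}S_n-2t_2\partial_{t_2}S_n+t_1T_n-\tau R_n\right)
\]
simplifies nicely, because the $(S_n+2n+1+\alpha)$ terms cancel. This is the origin of the squared bracket $\left(\frac{T_n}{2t_2}t_1\partial_{t_1}S_n-\partial_{t_2}S_n\right)^2$ in \eqref{RnRn*PDE1}. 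Similarly, $T_n(t_1-r_n)+r_n^{\star}$ equals $-(T_nr_n-r_n^{\star})+t_1T_n$. Hence $\Theta_n$ becomes $\frac1\tau\left(t_1T_nX-X^2\right)+\dots$ with $X:=T_nr_n-r_n^{\star}$, and the factor $1/\tau=t_1/(2t_2)$ accounts for the $t_1^3T_n^2/(4t_2)$ and $t_1t_2(\cdots)^2$ terms.

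Second, we differentiate. The left side of \eqref{Ric1} is $t_1\partial_{t_1}(r_n+r_n^{\star})$. Using the formulas, this produces $\tfrac12 t_1\partial_{t_1}(t_1\partial_{t_1}S_n)+t_1\partial_{t_1}(t_2\partial_{t_2}S_n)$ plus first-order terms involving $\partial_{t_1}R_n$, $\partial_{t_1}R_n^{\star}=\partial_{t_1}S_n-\partial_{t_1}R_n$, and $\partial_{t_1}(S_nR_n)$. This yields the second-order part $t_1^2\partial_{t_1t_1}S_n+2t_1t_2\partial_{t_1t_2}S_n$ after multiplying by $2$. On the right side, the term $2r_n(r_n-t_1)/R_n$ yields $-\frac{t_1^2}{R_n}(\partial_{t_1}S_n)^2$ plus lower terms including $t_1^2/R_n$.

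After collecting and multiplying by $2$, we expect the residual $\partial_{t_1}R_n$ terms (which cannot be expressed through $S_n$) to survive as the $2t_2\partial_{t_1}R_n$ term, and the result to be \eqref{RnRn*PDE1}. The same procedure applied to \eqref{Ric2}, which carries the extra factor $T_n=R_n^{\star}/R_n$ in front of $\Theta_n$, gives the cubic $T_n^3$ term and the $T_n(\cdots)^2$ term, leading to \eqref{RnRn*PDE2}.

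The main obstacle is purely bookkeeping: organizing the rational expressions so that the cancellations of the $(2n+1+\alpha)$-dependent cross terms become visible. I would proceed in two stages. First, I would introduce the shorthands $X$ above and $Y:=t_1\partial_{t_1}S_n-(S_n+2n+1+\alpha)R_n$, and express everything in $X$, $Y$, $T_n$ and $R_n$. Only at the end would I expand back and check against the stated form, including a sanity check that letting $t_2\to0^+$ reproduces the Painlev\'{e} III$'$ equation of \cite{ChenIts2010}.
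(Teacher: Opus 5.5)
Your proposal is correct and follows the paper's own route: solve \eqref{rnRD}--\eqref{rn*RD} for $r_n,r_n^{\star}$, substitute into \eqref{Ric1}--\eqref{Ric2}, and use $X=T_nr_n-r_n^{\star}$ to organise the algebra. Carrying this out, twice (LHS $-$ RHS) of \eqref{Ric1} (resp.\ \eqref{Ric2}) is exactly the left-hand side of \eqref{RnRn*PDE1} (resp.\ \eqref{RnRn*PDE2}), and the lone $\partial_{t_1}R_n$ (resp.\ $\partial_{t_2}R_n$) term comes from differentiating the $\tau R_n/2$ term in $r_n^{\star}$.
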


\begin{remark}\label{0odeRn}
When $t_2\rightarrow0^{+}$ and $t_1>0$, by the definitions of $R_n$ and $R_n^{\star}$ given by \eqref{defR}-\eqref{defR*}, we have
\begin{align*}
R_n^{\star}\rightarrow0,&\qquad\quad
S_n\rightarrow R_n,\qquad\quad T_n\rightarrow0,& \frac{T_n}{2t_2}=\frac{\int_{0}^{+\infty}\frac{dy }{y^2}P_n^2(y;t_1,t_2)w(y;t_1,t_2)}{t_1\int_{0}^{+\infty}\frac{dy}{y}P_n^2(y;t_1,t_2)w(y;t_1,t_2)}.
\end{align*}
Hence, it follows from \eqref{RnRn*PDE1} that
\begin{equation}\label{Rode}
R_n''=\frac{(R_n')^2}{R_n}-\frac{R_n'}{t_1}+\frac{R_n^3}{t_1^2}+(2n+1+\alpha)\frac{R_n^2}{t_1^2}+\frac{\alpha}{t_1} -\frac{1}{R_n},
\end{equation}
where the derivative is with respect to $t_1$, and \eqref{RnRn*PDE2} is reduced to $0=0$. Equation \eqref{Rode} coincides with Theorem 1 of \cite{ChenIts2010} where $s$ and $a_n$ correspond to our $t_1$ and $R_n$ respectively.
As is pointed out therein, \eqref{Rode} can be transformed into $P_{III'}\left(-4(2n+1+\alpha),-4\alpha,4,-4\right)$ satisfied by $-R_n$.
\end{remark}

\subsection{Generalized $\sigma$-Form of Painlev\'{e} $III^\prime$}
We turn our attention to the Hankel determinant $D_n(t_1,t_2)$ given by \eqref{HD}. Define
\begin{gather}\label{defHn}
H_{n}\left(t_{1},t_{2}\right):=\left(t_{1}\partial_{t_1}+2t_{2}\partial_{t_2}\right)\ln{D_{n}(t_{1},t_{2})}.
\end{gather}
Recall \eqref{Dnhn}, i.e.
$
D_{n}(t_{1},t_{2})=\prod_{j=0}^{n-1}h_{j}(t_{1},t_{2}).
$
Applying $t_1\partial_{t_1}+2t_2\partial_{t_2}$ to it, in view of \eqref{hnD1} and \eqref{betasumR}, we get the relation between $H_n$ and $p(n,t_1,t_2)$, i.e. the coefficient of $x^{n-1}$ in $P_n(x;t_1,t_2)$.
\begin{lemma} We have
\begin{gather}\label{Hnpn}
H_{n}\left(t_{1},t_{2}\right)=n(n+\alpha)+p(n,t_1,t_2),
\end{gather}
which, on account of \eqref{pD1}, results in
\begin{align}\label{HnD}
t_1\partial_{t_1}H_n(t_1,t_2)=&r_n(t_1,t_2),&
2t_2\partial_{t_2}H_n(t_1,t_2)=&r_n^{\star}(t_1,t_2).
\end{align}
\end{lemma}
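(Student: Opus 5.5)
The identity \eqref{Hnpn} follows from the product representation \eqref{Dnhn} together with the derivative formulas \eqref{hnD1}; the sum that appears is then identified through \eqref{betasumR}. Taking logarithms in \eqref{Dnhn} gives $\ln D_n=\sum_{j=0}^{n-1}\ln h_j$. Applying the operator $t_1\partial_{t_1}+2t_2\partial_{t_2}$ term by term and using \eqref{hnD1} with $j$ in place of $n$, each summand contributes $-R_j-R_j^{\star}$. Hence
\[
H_n=-\sum_{j=0}^{n-1}R_j-\sum_{j=0}^{n-1}R_j^{\star}.
\]

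Comparing with \eqref{betasumR}, the right-hand side equals $p(n,t_1,t_2)+n(n+\alpha)$, which is exactly \eqref{Hnpn}. One small point deserves a check: the case $n=0$. There $D_0=1$ and $p(0,t_1,t_2)=0$, so both sides vanish and the identity holds trivially.

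For \eqref{HnD}, the term $n(n+\alpha)$ does not depend on $t_1$ or $t_2$. Therefore $t_1\partial_{t_1}H_n=t_1\partial_{t_1}p(n,t_1,t_2)$ and $2t_2\partial_{t_2}H_n=2t_2\partial_{t_2}p(n,t_1,t_2)$. By \eqref{pD1}, these equal $r_n$ and $r_n^{\star}$ respectively.

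There is no real obstacle here. The only thing to ensure is that differentiating under the integral sign, implicit in \eqref{hnD1}, is legitimate. This holds because $t_2>0$ forces rapid decay of the weight at the origin, so all the moments involved are smooth in $(t_1,t_2)$.
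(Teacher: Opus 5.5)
Your proof is correct and follows the paper's argument: apply $t_1\partial_{t_1}+2t_2\partial_{t_2}$ to $\ln D_n=\sum_{j=0}^{n-1}\ln h_j$, use \eqref{hnD1} to get $H_n=-\sum_{j=0}^{n-1}(R_j+R_j^{\star})$, and identify this with $p(n,t_1,t_2)+n(n+\alpha)$ via \eqref{betasumR}. The derivative identities then follow from \eqref{pD1}, since $n(n+\alpha)$ does not depend on $t_1$ or $t_2$.
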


Replacing $p(n,t_1,t_2)$ in \eqref{betap} using \eqref{Hnpn} yields
\begin{align}
\beta_n=&r_n+r_n^{\star}-H_n+ n(n+\alpha)\label{betanrnHn}\\
=&\left(t_1\partial_{t_1}+2t_2\partial_{t_2}\right)H_n-H_n+n(n+\alpha),\label{betanHnD}
\end{align}
where the second equality is due to \eqref{HnD}.

\begin{remark}
Inserting \eqref{defHn} into \eqref{betanHnD} leads to \begin{align}\label{btR3}
\beta_n=\mathcal{L}\ln D_{n}+n(n+\alpha),
\end{align}
where $\mathcal{L}:=t_1^2\partial_{t_1t_1}+4t_2^2\partial_{t_2t_2}+2t_1t_2(\partial_{t_1t_2}+\partial_{t_2t_1})+2t_2\partial_{t_2}$.
According to \eqref{a12} and \eqref{Dnhn}, we find
\begin{align}\label{btDn}
\beta_n=\frac{D_{n+1}D_{n-1}}{D_n^2}.
\end{align}
Combining it with \eqref{btR3} gives us the following second order partial differential-difference equation for $D_n(t_1,t_2):$
\[
\mathcal{L}\ln D_n=\frac{D_{n+1}D_{n-1}}{D_{n}^{2}}-n(n+\alpha).
\]
In addition, from \eqref{btDn}, we get
\[\mathcal{L}\ln \beta_n=\mathcal{L}\ln D_{n+1}+\mathcal{L}\ln D_{n-1}-2\mathcal{L}\ln D_n.\]
Using \eqref{btR3} to replacing the three terms on the right-hand side, we arrive at \eqref{betaDD}.
\end{remark}

Combining \eqref{betanrnHn} with \eqref{betanRnrn}, and replacing $r_n$ and $r_n^{\star}$ in the resulting expression using \eqref{rnRD}-\eqref{rn*RD}, we express $H_n$ in terms of $\{R_n,R_n^{\star}\}$ and their first order derivatives. Conversely, by using \eqref{lnbetaD1}, \eqref{S2'eq6} and \eqref{betanRn-1*}, $R_n$ and $R_n^{\star}$ are expressed in terms of $H_n$ and its derivatives. Substituting these expressions into the one for $H_n$, we finally derive the PDE for $H_n$.
\begin{theorem}\label{ThPDEHn}
$H_n$ is expressed in terms of $R_n,R_n^{\star}$ and their first order partial derivatives by
\begin{equation}\label{HnRnRn*D}
\begin{aligned}
H_n(t_1,t_2)
=&-\frac{t_1}{8t_2R_n}\left(\frac{R_n^{\star}}{R_n}t_1\partial_{t_1}S_n-2t_2\partial_{t_2}S_n\right)^2
+\frac{1}{4}\left(\frac{t_1\partial_{t_1}S_n }{R_n}-1\right)^2\\
&-\frac{S_n^2}{4}-\left(n+\frac{\alpha}{2}\right)S_n+\frac{t_2}{2t_1} R_n+\frac{t_1}{2}-\frac{1}{4}\left(\frac{t_1}{R_n}-\alpha\right)^2+\frac{t_1^3}{8t_2}\cdot\frac{ (R_n^{\star})^2}{R_n^3},
\end{aligned}
\end{equation}
where $S_n=R_n+R_n^{\star}$. Moreover,
$H_n$ satisfies a second order sixth degree PDE:
\begin{equation}\label{PDEHn}
\begin{aligned}
&\left(\left(\partial_{t_1}\beta_n\right)^2+4\beta_n\left(\partial_{t_1}H_n\right)\left(\partial_{t_1}H_n-1\right)\right)^3\\
=&\left[\left(\partial_{t_1}\beta_n\right)^2\cdot\left(-2t_2(\partial_{t_2}H_n)^2+(2 n+\alpha )\left(\partial_{t_1}H_n\right) -n\right)\right.\\
&\left.\quad+2t_2\cdot\partial_{t_2}\beta_n\cdot\big(\partial_{t_1}\beta_n\cdot\left(\partial_{t_2}H_n\right)\left(2\partial_{t_1}H_n-1\right)-\partial_{t_2}\beta_n\cdot\left(\partial_{t_1}H_n\right)\left(\partial_{t_1}H_n-1\right)\big)\right.\\
&\left.\quad+2\beta_n\big(2\left(\partial_{t_1}H_n\right)\left(\partial_{t_1}H_n-1\right)\left((2n+\alpha)\partial_{t_1}H_n-n\right)+t_2(\partial_{t_2}H_n)^2\big)\right]^2,
\end{aligned}
\end{equation}
where $\beta_n$ is given by \eqref{betanHnD} from which follows the expressions for $\partial_{t_1}\beta_n$ and $\partial_{t_2}\beta_n:$
\begin{align}\label{DtbtH}
\partial_{t_1}\beta_n=&\left(t_1\partial_{t_1t_1}+2t_2\partial_{t_1t_2}\right)H_n,&
\partial_{t_2}\beta_n=&\left(t_1\partial_{t_2t_1}+2t_2\partial_{t_2t_2}+\partial_{t_2}\right)H_n.
\end{align}
\end{theorem}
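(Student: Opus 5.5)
The plan has two parts, matching the two assertions of the theorem.

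\emph{Expression \eqref{HnRnRn*D}.} Start from \eqref{betanrnHn}, which gives $H_n=r_n+r_n^{\star}+n(n+\alpha)-\beta_n$. Substitute \eqref{betanRnrn} for $\beta_n$, so that $H_n$ becomes a rational function of $R_n,R_n^{\star},r_n,r_n^{\star}$. Next use \eqref{rnRD}--\eqref{rn*RD}, which are linear in $r_n$ and $r_n^{\star}$:
\[
r_n=\tfrac12\big(t_1\partial_{t_1}S_n-(2n+1+\alpha+S_n)R_n+t_1\big),\qquad
r_n^{\star}=\tfrac12\big(2t_2\partial_{t_2}S_n-(2n+1+\alpha+S_n)R_n^{\star}+\tau R_n\big).
\]
Substituting these gives $H_n$ in terms of $R_n,R_n^{\star}$ and first derivatives of $S_n$. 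The quadratic term $\frac1\tau(r_n^{\star}-r_nT_n)(r_n^{\star}+(t_1-r_n)T_n)$ should be written as $\frac1\tau\big[(r_n^{\star}-r_nT_n+\tfrac{t_1}{2}T_n)^2-\tfrac{t_1^2}{4}T_n^2\big]$. In the combination $r_n^{\star}-r_nT_n$ the terms $(2n+1+\alpha+S_n)$ cancel, leaving
\[
r_n^{\star}-r_nT_n+\tfrac{t_1}{2}T_n=\tfrac12\big(2t_2\partial_{t_2}S_n-T_n\,t_1\partial_{t_1}S_n+\tau R_n\big).
\]
Similarly, the terms $r_n(t_1-r_n)/R_n^2$, $(nt_1-(2n+\alpha)r_n)/R_n$ and $r_n$ are completed into $\frac14(t_1\partial_{t_1}S_n/R_n-1)^2$ plus polynomial remainders. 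Collecting everything should give \eqref{HnRnRn*D}. This step is tedious but mechanical.

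\emph{PDE \eqref{PDEHn}.} Here I go the other way and express everything through $H_n$. By \eqref{HnD}, $r_n=t_1\partial_{t_1}H_n$ and $r_n^{\star}=2t_2\partial_{t_2}H_n$, and $\beta_n$ is given by \eqref{betanHnD}, with its derivatives given by \eqref{DtbtH}. From \eqref{lnbetaD1},
\[
t_1\partial_{t_1}\beta_n=\beta_nR_{n-1}-\beta_nR_n,\qquad 2t_2\partial_{t_2}\beta_n=\beta_nR_{n-1}^{\star}-\beta_nR_n^{\star}.
\]
I then replace $\beta_nR_{n-1}$ using \eqref{S2'eq6}, i.e. $r_n(r_n-t_1)/R_n$, and $\beta_nR_{n-1}^{\star}$ using \eqref{betanRn-1*}. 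The first relation becomes a quadratic equation for $R_n$:
\[
\beta_nR_n^2+t_1(\partial_{t_1}\beta_n)R_n-r_n(r_n-t_1)=0 .
\]
Its discriminant is $t_1^2\big[(\partial_{t_1}\beta_n)^2+4\beta_n\partial_{t_1}H_n(\partial_{t_1}H_n-1)\big]$, which is exactly the bracket on the left of \eqref{PDEHn}. So $R_n=\frac{t_1}{2\beta_n}\big(-\partial_{t_1}\beta_n\pm\sqrt{\Delta}\big)$ with $\Delta$ this bracket. The second relation is linear in $R_n^{\star}$ once $R_n$ is known, so $R_n^{\star}$ is rational in $R_n$ and the derivatives of $H_n$.

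Next I plug these into one further independent identity; \eqref{betanRnrn} itself, rewritten via \eqref{betanrnHn}, is the natural choice. Clearing denominators leaves an equation of the form $P+Q\sqrt{\Delta}=0$, with $P,Q$ polynomial in $\beta_n,\partial\beta_n,\partial H_n$. Squaring to remove the sign ambiguity gives $P^2=Q^2\Delta$. I expect $Q$ to be proportional to $\Delta$, which would turn this into $\Delta^3=[\cdots]^2$, the displayed form.

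\emph{Main obstacle.} The main difficulty is this final elimination: organizing the algebra so that the common factors (powers of $\beta_n$, $t_1$, $t_2$) cancel and the result collapses to exactly the stated sixth-degree form. I would first check it in the reduction $t_2\to0^+$, where it must give the known Painlev\'e III$'$ $\sigma$-form of \cite{ChenIts2010}. I would then carry out the general case with computer algebra, verifying the identity symbolically after substituting the expressions for $R_n$ and $R_n^{\star}$.
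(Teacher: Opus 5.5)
Your proposal is correct and follows the paper's proof essentially step for step. You obtain $H_n$ from \eqref{betanrnHn}, \eqref{betanRnrn} and \eqref{rnRD}--\eqref{rn*RD}; you then solve the quadratic \eqref{eq2Rn} for $R_n$ and the linear \eqref{eq2Rn*} for $R_n^{\star}$, substitute into \eqref{betanRnrn}, and square away the root. The paper additionally fixes the sign of that root as ${\rm sgn}(t_1)$, but this is not needed for the PDE itself.

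Your expectation that $Q$ is proportional to $\Delta$ also holds. Since $2\beta_nR_n/t_1+\partial_{t_1}\beta_n=\pm\sqrt{\Delta}$, the substituted relation reduces by hand to $\Delta\,(2\beta_nR_n/t_1+\partial_{t_1}\beta_n)=-E$, where $E$ is the bracket on the right of \eqref{PDEHn}. Squaring gives the stated form directly, bypassing the messier $\mathcal{A},\mathcal{B}$ that the paper simplifies with Mathematica.
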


\begin{proof}
We rewrite \eqref{lnbetaD1} as
\begin{align*}
t_1\partial_{t_1}\beta_{n}=&\beta_{n}R_{n-1}-\beta_{n}R_n,&
2t_2\partial_{t_2}\beta_{n}=&\beta_{n}R_{n-1}^{\star}-\beta_{n}R_n^{\star}.
\end{align*}
Using \eqref{S2'eq6} and \eqref{betanRn-1*} to clear $\beta_nR_{n-1}$ and $\beta_nR_{n-1}^{\star}$ respectively in these two equations yields
\bea\label{eq2Rn}
t_1\partial_{t_1}\beta_{n}=\frac{r_n(r_n-t_1)}{R_n}-\beta_{n}R_n,
\eea
which is quadratic in $R_n$, and
\bea\label{eq2Rn*}
2t_2\partial_{t_2}\beta_{n}=\frac{r_n^{\star}(2r_n-t_1)}{R_n}-r_n(r_n-t_1)\frac{R_n^{\star}}{R_n^2}-\beta_{n}R_n^{\star}.
\eea
Solving for $R_n$ from \eqref{eq2Rn} gives us the following two possible solutions
\begin{align}\label{Rnsol0}
R_n=\frac{1}{2\beta_n}\left(-t_1\partial_{t_1}\beta_n\pm\sqrt{\Delta_n}\right),
\end{align}
where
\begin{align}\label{defDelta}
\Delta_n(t_1,t_2):=\left(t_1\partial_{t_1}\beta_n\right)^2+4\beta_n r_n(r_n-t_1).
\end{align}
Here note that, by plugging \eqref{eq2Rn} into $\Delta_n$, we find
\begin{align}\label{Deltasign}
\Delta_n(t_1,t_2)=\left(\frac{r_n(r_n-t_1)}{R_n}+\beta_{n}R_n\right)^2\geq0.
\end{align}
Now we determine the sign in front of $\sqrt{\Delta_n}$ in \eqref{Rnsol0}. From \eqref{lnbetaD1}, it follows that \[R_n+\frac{t_1\partial_{t_1}\beta_n }{2\beta_n}=\frac{1}{2}\left(R_n+R_{n-1}\right).\]
According to the definition of $R_n$ given by \eqref{defR}, we know that $R_n+R_{n-1}$, and hence $R_n+\frac{t_1\partial_{t_1}\beta_n }{2\beta_n}$, has the same sign as $t_1$. This fact combined with \eqref{Rnsol0} indicates that
\begin{gather}\label{Rnsol}
R_n=\frac{1}{2\beta_n}\left(-t_1\partial_{t_1}\beta_n+{\rm sgn} (t_1)\cdot\sqrt{\Delta_n}\right),
\end{gather}
where ${\rm sgn}(t_1)$ is the sign function of $t_1$, which is 1 for $t_1>0$ and $-1$ for $t_1<0$.
Next, we solve $R_n^{\star}$ from \eqref{eq2Rn*} to produce
\begin{align}\label{Rn*}
R_n^{\star}=\frac{r_n^{\star}(2r_n-t_1)-2t_2\partial_{t_2}\beta_n\cdot R_n}{\frac{r_n(r_n-t_1)}{R_n}+\beta_nR_n}.
\end{align}
Getting rid of $\frac{r_n(r_n-t_1)}{R_n}$ in the denominator using \eqref{eq2Rn}, and substituting \eqref{Rnsol} into the obtained expression, we find
\bea\label{Rn*sol} R_n^{\star}=\frac{r_n^{\star}(2r_n-t_1)+t_1t_2\frac{\partial_{t_1}\beta_n\cdot\partial_{t_2}\beta_n}{\beta_n}}{{\rm sgn}(t_1)\cdot\sqrt{\Delta_n}}-t_2\frac{\partial_{t_2}\beta_n}{\beta_n}.
\eea

To continue, we rewrite \eqref{betanRnrn} as
\begin{equation*}
1-\frac{t_1}{2t_2 \beta_nR_n}\left(r_n^{\star}-r_n\frac{R_n^{\star}}{R_n}\right)\left(r_n^{\star}+(t_1-r_n)\frac{R_n^{\star}}{R_n}\right)-\frac{r_n(t_1-r_n)}{\beta_nR_n^2}-\frac{nt_1-(2n+\alpha)r_n}{\beta_nR_n}=0.
\end{equation*}
Substituting \eqref{Rnsol} and \eqref{Rn*sol} into the left-hand side of this equation, multiplying the obtained expression by $t_2\Delta_n\left(-t_1\partial_{t_1}\beta_n+{\rm sgn} (t_1)\cdot\sqrt{\Delta_n}\right)^3$ and replacing $\{r_n,r_n^{\star}\}$ using \eqref{HnD}, in view of the fact that $\left({\rm sgn}(t_1)\right)^2=1$,  we obtain
\begin{align}\label{PDEHn-1}
\mathcal{A}(t_1,t_2)+{\rm sgn}(t_1)\cdot\mathcal{B}(t_1,t_2)\sqrt{\Delta_n(t_1,t_2)}=0,
\end{align}
where
\begin{align*}
\frac{\mathcal{A}(t_1,t_2)}{-4t_1^5t_2}:=& \left(\partial_{t_1}\beta_n\right)^2 \left[(\partial_{t_1}\beta_n)^2 \left(\partial_{t_1}\beta_n+2 t_2
   (\partial_{t_2}H_n)^2-(2n+\alpha) \partial_{t_1}H_n+n\right)+8\beta_n\Gamma_n\left(\partial_{t_1}\beta_n\right)\right]\\
   &+2 t_2\left(\left(\partial_{t_1}\beta_n\right)^2+2\beta_n\Gamma_n\right)\left[ \left(\partial_{t_2}H_n\right) \left(1-2
   \partial_{t_1}H_n\right)\left(\partial_{t_1}\beta_n\right)\left(\partial_{t_2}\beta_n\right)+\Gamma_n \left(\partial_{t_2}\beta_n\right)^2 \right]\\
   &+2 \beta_n (\partial_{t_1}\beta_n)^2 \left[t_2 (\partial_{t_2}H_n)^2 (2 \Gamma_n -1)+3\Gamma_n  \left(-(2n+\alpha ) \partial_{t_1}H_n+ n\right)\right]\\
   &+4 \beta_n^2 \Gamma_n  \left[2 \Gamma_n  \left(2
   \partial_{t_1}\beta_n-(2n+\alpha) \partial_{t_1}H_n+n\right)-t_2 (\partial_{t_2}H_n)^2\right],\\
\frac{\mathcal{B}(t_1,t_2)}{-4t_1^4t_2}:=&-\left(\partial_{t_1}\beta_n\right)
   \left[\left(\partial_{t_1}\beta_n\right)^2 \left(\partial_{t_1}\beta_n+2 t_2
   \left(\partial_{t_2}H_n\right)^2-(2n+\alpha) \partial_{t_1}H_n+n\right)\right.\\
   &\left.\qquad\qquad\qquad+2 t_2
   \left(\partial_{t_2}H_n\right) \left(1-2 \partial_{t_1}H_n\right) \left(\partial_{t_1}\beta_n\right) \left(\partial_{t_2}\beta_n\right)+2 t_2
   \Gamma_n \left(\partial_{t_2}\beta_n\right)^2\right]\\
   &+2 \beta_n  \left(\partial_{t_1}\beta_n\right) \left[\Gamma_n
   \left(-3 \partial_{t_1}\beta_n +2 (2n+\alpha) \partial_{t_1}H_n-2 n\right)+t_2 \left(\partial_{t_2}H_n\right)^2\right]-8 \beta_n^2\Gamma_n^2.
\end{align*}
Here $\Gamma_n:=\left(\partial_{t_1}H_n\right)\left(\partial_{t_1}H_n-1\right).$
Clearing the square root in \eqref{PDEHn-1}, we have
\[
\mathcal{A}^2-\mathcal{B}^2\Delta_n=0.
\]
After simplification by Mathematica and some computations by hand, we arrive at \eqref{PDEHn}.
\end{proof}
\begin{remark}
Assuming $H_n$ is independent of $t_2$, \eqref{PDEHn} is reduced to
\begin{align*}
&\left[\left(t_1H_n''\right)^2+4\left(t_1H_n'-H_n+n(n+\alpha)\right)H_n'\left(H_n'-1\right)\right]^2\\
&\qquad\cdot\big\{\left(t_1H_n''\right)^2+4\left(t_1H_n'-H_n+n(n+\alpha)\right)H_n'\left(H_n'-1\right)-\left((2 n+\alpha )H_n' -n\right)^2\big\}=0.
\end{align*}
From \eqref{defDelta}, we know that the term in the above square brackets is $\Delta_n/t_1^2$ which is shown to be nonnegative in \eqref{Deltasign}. Hence, the term in the curly brackets in the above equation is zero,
which agrees with (3.24) of \cite{ChenIts2010}. As was pointed out below (3.20) therein, $\sigma_n(t_1):=H_n-t_1/2-n(n+\alpha)/2$ is the Jimbo-Miwa-Okamoto $\sigma$-function corresponding to the Painlev\'{e} $III^{\prime}$ equation satisfied by $-R_n$ (see our Remark \ref{0odeRn}).
\end{remark}

\section{Double Scaling Analysis and Generalized Painlev\'{e} III Equation}
In this section, we proceed with the large $n$ analysis of the Hankel determinant under the assumption that $n\rightarrow\infty$ and $t_1\rightarrow0,t_2\rightarrow0^+$ such that $s_1$ and $s_2$ defined by
\begin{align}\label{ds}
s_1:=2nt_1,\qquad\qquad\qquad s_2:=4n^2t_2
\end{align}
are fixed. This double scaling is motivated as follows.
 \begin{itemize}
 \item [(i)] When $t_2\rightarrow0^+$, the weight function $w(x;t_1,t_2)$ given by \eqref{w} reads $x^{\alpha}{\rm e}^{-x-t_1/x}$. Under the scaling that $n\rightarrow\infty$ and $t_1\rightarrow0^+$ such that $s_1=2nt_1$ is fixed, the correlation kernel of the associated unitary ensemble is given by a $\Psi$-kernel \cite[Corollary 1]{XuDaiZhao14}.
     \item [(ii)]  The partition function of the unitary ensemble for the weight $y^{\alpha}{\rm e}^{-n(y+t^2/y^2)}$ was studied in  \cite{ACM} (see (1.51) therein with $k=2$). It was shown in Theorem 1.11 that, as $n\rightarrow\infty$ and $t\rightarrow0^+$ such that $cn^{5/2}t\rightarrow s$ with $c$ denoting a constant, the eigenvalue correlation kernel of this ensemble is characterized by a hierarchy of higher order Painlev\'{e} III equation.

         By a change of variable $y=x/n$, the above-mentioned partition function of \cite{ACM} is transformed into the one for $w(x;0,n^3t^2)$, up to a multiplicative constant. The aforementioned double scaling in \cite{ACM} indicates $c^2n^2t_2\rightarrow s^2$ which explains why we define $s_2:=4n^2t_2$.
     \end{itemize}

\subsection{Generalized Painlev\'{e} III Equation }\label{dsRH}
We assume that the limit of the double scaled Hankel determinant exists, defined by
\[\Delta(s_1,s_2):=\lim\limits_{n\rightarrow\infty}D_n\left(\frac{s_1}{2n},\frac{s_2}{4n^2}\right).\]
It follows from the definition of $H_n$ given by \eqref{defHn} that
\begin{align*}
\lim\limits_{n\rightarrow\infty}H_n(t_1,t_2)&=:H(s_1,s_2)=\left(s_1\partial_{s_1}+2s_2\partial_{s_2}\right)\ln\Delta(s_1,s_2).
\end{align*}
According to \eqref{HnD} and \eqref{Rnsol}-\eqref{Rn*}, we deduce the following connections between the double scaled $\{R_n,R_n^{\star},r_n,r_n^{\star}\}$ and $H(s_1,s_2)$.
\begin{proposition} \label{RRstarH}
We have
\begin{align}
\lim\limits_{n\rightarrow\infty}nR_n(t_1,t_2)=:R(s_1,s_2)=&-s_1\partial_{s_1}H(s_1,s_2)\label{dsrHR}\\
=&-\lim\limits_{n\rightarrow\infty}r_n(t_1,t_2)=:-r(s_1,s_2),\label{dsrHR-1}\\
\lim\limits_{n\rightarrow\infty}nR_n^{\star}(t_1,t_2)=:R^{\star}(s_1,s_2)=&-2s_2\partial_{s_2}H(s_1,s_2)\label{dsr*HR*}\\
=&-\lim\limits_{n\rightarrow\infty}r_n^{\star}(t_1,t_2)=:-r^{\star}(s_1,s_2).\label{dsr*HR*-1}
\end{align}
\end{proposition}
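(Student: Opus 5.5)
The plan is to work formally under the double scaling \eqref{ds}, assuming that all limits exist and that derivatives commute with the limit. The chain rule gives $t_1\partial_{t_1}=s_1\partial_{s_1}$ and $2t_2\partial_{t_2}=2s_2\partial_{s_2}$, because $s_1=2nt_1$ and $s_2=4n^2t_2$ are linear in $t_1,t_2$ respectively. Then \eqref{HnD} gives at once
\[
r(s_1,s_2)=\lim_{n\to\infty} r_n=s_1\partial_{s_1}H,\qquad r^{\star}(s_1,s_2)=\lim_{n\to\infty} r_n^{\star}=2s_2\partial_{s_2}H .
\]
This reduces \eqref{dsrHR} and \eqref{dsr*HR*} to the identities $R=-r$ and $R^{\star}=-r^{\star}$, i.e.\ \eqref{dsrHR-1} and \eqref{dsr*HR*-1}. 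These follow from \eqref{Rnsol} and \eqref{Rn*sol}, by tracking the orders of magnitude in $n$ of each ingredient.

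For $R=-r$, I will use \eqref{betanHnD}: $\beta_n=(t_1\partial_{t_1}+2t_2\partial_{t_2})H_n-H_n+n(n+\alpha)$. Since $H_n$ and its scaled derivatives stay $O(1)$, we get $\beta_n=n^2(1+o(1))$. Moreover $t_1\partial_{t_1}\beta_n=(s_1\partial_{s_1})(\cdots)$ is $O(1)$, since the $n(n+\alpha)$ term does not depend on $t_1$. Also $t_1=s_1/(2n)\to0$, so
\[
\Delta_n=(t_1\partial_{t_1}\beta_n)^2+4\beta_n r_n(r_n-t_1)=4n^2 r^2(1+o(1)).
\]
Substituting into \eqref{Rnsol} gives
\[
nR_n=\frac{n}{2\beta_n}\bigl(-t_1\partial_{t_1}\beta_n+{\rm sgn}(t_1)\sqrt{\Delta_n}\bigr)\longrightarrow {\rm sgn}(s_1)\,|r| .
\]
To obtain $-r$ I must fix the sign of $r$. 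By \eqref{defr} and \eqref{pD1}, $R_n$ has the sign of $t_1$, hence so does $R$, and we need $r$ to have the opposite sign. The cleanest route is to bypass the square root: in \eqref{eq2Rn} multiply by $n/\beta_n$ to get
\[
\frac{n\,t_1\partial_{t_1}\beta_n}{\beta_n}=\frac{n\,r_n(r_n-t_1)}{\beta_nR_n}-nR_n .
\]
The left side is $O(1/n)$, and $\frac{n\,r_n(r_n-t_1)}{\beta_n R_n}=\frac{n^2 r_n(r_n-t_1)}{\beta_n\,(nR_n)}\to \frac{r^2}{R}$. Hence $R^2=r^2$, and the sign statement picks $R=-r$. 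An alternative check of the sign: \eqref{Rnrndiffeq1} gives $r_{n+1}+r_n=t_1-\alpha_nR_n$ with $\alpha_n\approx 2n$, so $2r\approx -2R$ at leading order. This is the argument I would actually use, since it is linear and fixes the sign directly. Combining with the first display gives \eqref{dsrHR} and \eqref{dsrHR-1}.

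For the starred quantities I will do the same with \eqref{Rnrndiffeq2}. It reads $r^{\star}_{n+1}+r^{\star}_n=\tau R_n-\alpha_nR_n^{\star}$, where $\tau R_n=\frac{2t_2}{t_1}R_n=\frac{s_2}{n s_1}R_n=O(n^{-2})$ and $\alpha_nR_n^{\star}\to 2R^{\star}$. At leading order $2r^{\star}=-2R^{\star}$, which together with $r^{\star}=2s_2\partial_{s_2}H$ yields \eqref{dsr*HR*} and \eqref{dsr*HR*-1}.

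As a consistency check I would verify that \eqref{Rn*sol} gives the same limit. With $\sqrt{\Delta_n}\approx 2n|r|$, $r_n^{\star}(2r_n-t_1)\approx 2rr^{\star}$ and the remaining terms $O(1/n^2)$, we get $nR_n^{\star}\to {\rm sgn}(t_1)\,r r^{\star}/|r|=-r^{\star}$, which is consistent.

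The main obstacle is justifying the order-of-magnitude assumptions. These are that $r_n,r_n^{\star},H_n$ and their scaled derivatives have finite limits, that $nR_n$ and $nR_n^{\star}$ converge, and that $r_{n+1}-r_n\to0$ so that $r_{n+1}+r_n\to 2r$. Equivalently, the limits must be approached uniformly in $n$, which is what allows the recurrences and derivatives to pass to the limit. I would state these as standing assumptions, consistent with the paper's assumption that $\Delta(s_1,s_2)$ exists.
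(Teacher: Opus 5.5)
Your argument is correct, at the same formal level as the paper: limits exist and commute with the scaled derivatives. It is organized differently from the paper's proof.

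\textbf{How the paper proceeds.} It first writes $R_n$ and $R_n^{\star}$ in terms of $H_n$ through the square-root formulas \eqref{Rnsol2}--\eqref{Rn*sol2}. Expanding these gives $nR_n\to{\rm sgn}(\partial_{s_1}H)\,s_1\partial_{s_1}H$ and $nR_n^{\star}\to{\rm sgn}(\partial_{s_1}H)\,2s_2\partial_{s_2}H$. Only afterwards does it pass to the limit in the Riccati equation \eqref{rnRD} to fix the sign as $-1$.

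\textbf{How you proceed.} You skip the square roots and read $R=-r$ and $R^{\star}=-r^{\star}$ directly off the linear relations \eqref{S1eq4}--\eqref{S1eq5}. This is essentially the paper's decisive step. Equation \eqref{rnRD} is exactly \eqref{S1eq4} with $r_{n+1}$ eliminated through
\[
r_n-r_{n+1}=t_1\partial_{t_1}\alpha_n=\frac{1}{n}\,s_1\partial_{s_1}\bigl(nR_n+nR_n^{\star}\bigr).
\]
Likewise, \eqref{rn*RD} is \eqref{S1eq5} with $r^{\star}_{n+1}$ eliminated, using $r_n^{\star}-r_{n+1}^{\star}=\frac{1}{n}\,2s_2\partial_{s_2}(nR_n+nR_n^{\star})$. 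These identities, coming from \eqref{alphaD1}, show that your extra standing assumption $r_{n+1}-r_n\to0$ (and its starred analogue) is not independent. It follows from the boundedness of the scaled derivatives of $nR_n$ and $nR_n^{\star}$, which you already assume.

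\textbf{What each route buys.} Your route never needs $\partial_{s_1}H\neq0$. The paper tacitly needs this twice. First, the leading $n^4$ term of $\Delta_n/t_1^2$ comes from $4\beta_n\,\partial_{t_1}H_n(\partial_{t_1}H_n-1)\approx16n^4(\partial_{s_1}H)^2$. Second, the factor ${\rm sgn}(\partial_{s_1}H)$ in \eqref{Rnlim}--\eqref{Rn*lim} must be meaningful. The paper's route, in turn, displays the leading-order behaviour of $R_n$ and $R_n^{\star}$ as explicit functionals of $H_n$.

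\textbf{One weak paragraph.} Your intermediate ``bypass'' via \eqref{eq2Rn} only yields $R^2=r^2$. Knowing that $R$ has the sign of $t_1$ does not fix the sign of $r$, so that paragraph by itself does not give $R=-r$; only the linear argument does. That sign fact for $R_n$ comes straight from \eqref{defR}, not from \eqref{defr} and \eqref{pD1}.
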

\begin{proof}
Since $s_1=2nt_1$ and $s_2=4n^2t_2$, we have
\begin{align}\label{DHnts}
\partial_{t_1}H_n=2n\cdot\partial_{s_1}H_n,\qquad\qquad \partial_{t_2}H_n=4n^2\cdot\partial_{s_2}H_n.
\end{align}
Combining them with \eqref{HnD} gives us \eqref{dsrHR-1} and \eqref{dsr*HR*-1}.

Inserting \eqref{HnD} and \eqref{DtbtH} into \eqref{Rnsol} and \eqref{Rn*sol} yields
\begin{gather}\label{Rnsol2}
R_n=\frac{1}{2\beta_n}\left(-t_1\left(t_1\partial_{t_1t_1}+2t_2\partial_{t_1t_2}\right)H_n+{\rm sgn}(t_1)\cdot\sqrt{\Delta_n}\right),
\end{gather}
and
\begin{equation}\label{Rn*sol2}
\begin{aligned} R_n^{\star}=&\frac{t_1t_2\left(2\partial_{t_2}H_n\cdot(2\partial_{t_1}H_n-1)+\frac{\left(t_1\partial_{t_1t_1}+2t_2\partial_{t_1t_2}\right)H_n\cdot\left(t_1\partial_{t_2t_1}+2t_2\partial_{t_2t_2}+\partial_{t_2}\right)H_n}{\beta_n}\right)}{{\rm sgn}(t_1)\cdot\sqrt{\Delta_n}}\\
&-\frac{t_2}{\beta_n}\left(t_1\partial_{t_2t_1}+2t_2\partial_{t_2t_2}+\partial_{t_2}\right)H_n,
\end{aligned}
\end{equation}
where $\beta_n=\left(t_1\partial_{t_1}+2t_2\partial_{t_2}\right)H_n-H_n+n(n+\alpha)$, given by \eqref{betanHnD}, and
\begin{align*}
\Delta_n=&t_1^2\left(\big(\left(t_1\partial_{t_1t_1}+2t_2\partial_{t_1t_2}\right)H_n\big)^2+4\beta_n \cdot\partial_{t_1}H_n\cdot(\partial_{t_1}H_n-1)\right),
\end{align*}
so that
\begin{align*}
{\rm sgn}(t_1)\cdot\sqrt{\Delta_n}
=&t_1\sqrt{\big(\left(t_1\partial_{t_1t_1}+2t_2\partial_{t_1t_2}\right)H_n\big)^2+4\beta_n \cdot\partial_{t_1}H_n\cdot(\partial_{t_1}H_n-1)}.
\end{align*}
From \eqref{DHnts}, it follows that
\begin{align*}
\partial_{t_1t_1}H_n=4n^2\cdot\partial_{s_1s_1}H_n,&\qquad\qquad \partial_{t_1t_2}H_n=8n^3\cdot\partial_{s_1s_2}H_n,\\
\partial_{t_2t_1}H_n=8n^3\cdot\partial_{s_2s_1}H_n,&\qquad \qquad \partial_{t_2t_2}H_n=16n^4\cdot\partial_{s_2s_2}H_n.
\end{align*}
Substituting these identities and \eqref{DHnts} into the right-hand sides of \eqref{Rnsol2} and \eqref{Rn*sol2}, replacing $t_1$ by $s_1/(2n)$ and $t_2$ by $s_2/(4n^2)$ in the resulting expressions, we obtain
\begin{align*}
R_n(t_1,t_2)=&\frac{{\rm sgn}(\partial_{s_1}H_n)\cdot s_1\partial_{s_1}H_n}{n}+O(n^{-2}),\\
R_n^{\star}(t_1,t_2)=&\frac{{\rm sgn}(\partial_{s_1}H_n)\cdot2s_2\partial_{s_2}H_n}{n}+O(n^{-2}).
\end{align*}
Hence, we have
\begin{align}
\lim\limits_{n\rightarrow\infty}nR_n(t_1,t_2)={\rm sgn}(\partial_{s_1}H)\cdot s_1\partial_{s_1}H,\label{Rnlim}\\
\lim\limits_{n\rightarrow\infty}nR_n^{\star}(t_1,t_2)={\rm sgn}(\partial_{s_1}H)\cdot2s_2\partial_{s_2}H.\label{Rn*lim}
\end{align}

Now we make use of the Riccati equation \eqref{rnRD} to determine ${\rm sgn}(\partial_{s_1}H)$. Replacing $t_2\partial_{t_2}$ by $s_2\partial_{s_2}$ in it and taking the limit as $n\rightarrow\infty$ on both sides, with the aid of \eqref{dsrHR-1}, \eqref{dsr*HR*-1} and \eqref{Rnlim}-\eqref{Rn*lim}, we find ${\rm sgn}(\partial_{s_1}H)=-1$. Consequently, \eqref{Rnlim} and \eqref{Rn*lim} reduce to \eqref{dsrHR} and \eqref{dsr*HR*}.
\end{proof}

We are now in a position to establish equations for $R$ and $R^{\star}$ by using the pair of PDEs satisfied by $R_n$ and $R_n^{\star}$, i.e. \eqref{RnRn*PDE1}-\eqref{RnRn*PDE2}.
\begin{theorem} \label{RR*PDEs}
$R(s_1,s_2)$ and $R^{\star}(s_1,s_2)$ satisfy the following coupled second order PDEs:
\begin{subequations}\label{RR*limeqs}
\begin{equation}\label{RR*limeq1}
\begin{aligned}
s_1^2\partial_{s_1s_1}U&+2s_1s_2\partial_{s_1s_2}U+2s_1s_2\left( \frac{s_1R^{\star}}{2s_2R} \cdot s_1
   \partial_{s_1}U-
   \partial_{s_2}U\right
   )^2\\
&+
   s_1\partial_{s_1}U
   \left(1-\frac{
  s_1\partial_{s_1}U}{R}\right)-2R U-\frac{s_1^3 }{8s_2}\cdot\frac{(R^{\star})^2}{R^2}-\frac{\alpha }{2}s_1+\frac{s_1^2}{4R}=0,
\end{aligned}
\end{equation}
and
\begin{equation}\label{RR*limeq2}
\begin{aligned}
4s_2^2
   \partial_{s_2s_2}U+&2s_1s_2
  \partial_{s_2 s_1}U+\frac{s_1}{2s_2}V\left( V \cdot s_1
   \partial_{s_1}U-2 s_2
   \partial_{s_2}U\right)^2-\frac{2s_1s_2}{R}\partial_{s_1}U\cdot\partial_{s_2}U\\
&+V\cdot s_1\partial_{s_1}U+2s_2
 \partial_{s_2}U-2R^{\star}U+\frac{2s_2}{s_1}R-s_1 V\left(\frac{s_1^2}{8s_2}V^2+\frac{\alpha}{2}\right)=0,
\end{aligned}
\end{equation}
\end{subequations}
where
\[U(s_1,s_2)=R^{\star}(s_1,s_2)+R(s_1,s_2),\qquad\qquad V(s_1,s_2):=\frac{R^{\star}(s_1,s_2)}{R(s_1,s_2)}.\]

\begin{proof}
Recall that $t_1=s_1/(2n)$ and $t_2=s_2/4n^2$. According to \eqref{dsrHR} and \eqref{dsr*HR*}, we write
\begin{align}\label{defxnyn}
x_n(s_1,s_2):=nR_n\left(\frac{s_1}{2n},\frac{s_2}{4n^2}\right),\qquad\qquad y_n(s_1,s_2):=nR_n^{\star}\left(\frac{s_1}{2n},\frac{s_2}{4n^2}\right),
\end{align}
so that
\begin{align}\label{xyR}
\lim\limits_{n\rightarrow\infty}x_n=R,\qquad\qquad\lim\limits_{n\rightarrow\infty}y_n=R^{\star}.
\end{align}
We have
\begin{align}\label{RxD}
R_n=\frac{x_n}{n},\qquad\qquad\partial_{t_1}R_n=2\partial_{s_1}x_n,\qquad\qquad\partial_{t_2}R_n=4n\partial_{s_2}x_n,
\end{align}
and
\begin{align*}
\partial_{t_1t_1}R_n=4n\partial_{s_1s_1}x_n,\quad
\partial_{t_1t_2}R_n=8n^2\partial_{s_1s_2}x_n,\quad
\partial_{t_2t_1}R_n=8n^2\partial_{s_2s_1}x_n,\quad
\partial_{t_2t_2}R_n=16n^3\partial_{s_2s_2}x_n.
\end{align*}
The above identities remain valid when $R_n$ and $x_n$ are replaced by $R_n^{\star} $ and $y_n$ respectively. We insert all these relations into \eqref{RnRn*PDE1} and \eqref{RnRn*PDE2}, substitute $s_1/(2n)$ for $t_1$ and $s_2/(4n^2)$ for $t_2$ in the resulting equations, and perform a series expansion on their left-hand sides for large $n$. With the help of  Mathematica, we finally obtain two equations of the form
\begin{align}\label{fngnO}
\frac{f_n}{64s_1^2n}+O(n^{-2})\equiv0,\qquad\qquad \frac{g_n}{n}+O(n^{-2})\equiv0,
\end{align}
where $f_n$ and $g_n$ are functions of $x_n,y_n$ and their derivatives. From \eqref{xyR}, we know that $f_n=O(1)$ and $g_n=O(1)$ as $n\rightarrow\infty$. This fact together with \eqref{fngnO} indicates that \[\lim\limits_{n\rightarrow\infty}f_n=0,\qquad\qquad\lim\limits_{n\rightarrow\infty}g_n=0,\]
which lead us to \eqref{RR*limeq1} and \eqref{RR*limeq2} respectively.
\end{proof}
\end{theorem}

\begin{remark} From \eqref{dsrHR}-\eqref{dsr*HR*-1}, we know that $R=-r$ and $R^{\star}=-r^{\star}$. Plugging them into \eqref{RR*limeqs}, we readily get the coupled second order PDEs satisfied by $r$ and $r^{\star}$.
\end{remark}

\begin{remark}
When $s_2\rightarrow0^+$ (and hence $t_2\rightarrow0^+$), from \eqref{dsr*HR*}, we know that
\begin{align}\label{s20R*}
R^{\star}=0,\qquad\qquad \frac{R^{\star}}{2s_2}=-\partial_{s_2}H.
\end{align}
Consequently, \eqref{RR*limeq1}  is reduced to
\begin{align*}
R''=\frac{
   (R')^2}{R}-\frac{R'}{s_1}+
   \frac{2}{s_1^2}R^2+\frac{\alpha}{2s_1}-\frac{1}{4 R},
\end{align*}
where the derivative is with respect to $s_1$.
Define $C(s_1):=2R/s_1$, then the above equation is transformed into (2.18) of \cite{ChenChen2015}, which, as pointed out in Remark 2 therein, can be converted into a Painlev\'{e} III equation satisfied by $Y(x):=\frac{x}{2}C(\frac{x^2}{8})$.
\end{remark}

 Now we deduce from \eqref{HnRnRn*D} the expression for $H$ in terms of $\{R,R^{\star}\}$ and their first order derivatives, and establish the second order PDE for $H$ by using \eqref{PDEHn}.
\begin{theorem}$H(s_1,s_2)$ is expressed in terms of $R(s_1,s_2),R^{\star}(s_1,s_2)$ and their first order derivatives by
\begin{equation}\label{HRR*}
\begin{aligned}
H=&-\frac{s_1s_2}{R}\left(\frac{s_1 R^{\star}}{2s_2R}
  \partial_{s_1}(R+R^{\star})-\partial_{s_2}(R+R^{\star})\right
   )^2+\frac{1}{4}\left(\frac{s_1 }{R}\partial_{s_1}(R+R^{\star})
   -1\right
   )^2\\
   &-(R+R^{\star})+\frac{s_1^3(R^{\star})^2}{16s_2R^3}-\frac{1}{4}\left(\frac{s_1}{2R}-\alpha\right)^2.
\end{aligned}
\end{equation}
Moreover, $H(s_1,s_2)$ satisfies the following second order second degree PDE:
\begin{equation}\label{PDEHlargen}
\begin{aligned}
&4s_2\Big(\partial_{s_2}H \left(s_1\partial_{s_1s_1}H+2 s_2
   \partial_{s_1s_2}H\right)- \partial_{s_1}H
   \left(2 s_2 \partial_{s_2s_2}H+s_1
  \partial_{s_2s_1}H+\partial_{s_2}H\right)\Big)^2\\
&+\partial_{s_1}H\left(s_1
   \partial_{s_1s_1}H+2 s_2 \partial_{s_1s_2}H\right){}^2+4\left(\partial_{s_1}H\right)^3 \left( s_1
   \partial_{s_1}H+2 s_2 \partial_{s_2}H- H\right)\\
&-\partial_{s_1}H\left(\alpha \partial_{s_1}H+\frac{1}{2}\right)^2-s_2 \left(\partial_{s_2}H\right)^2=0.
\end{aligned}
\end{equation}
\end{theorem}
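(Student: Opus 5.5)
The theorem has two parts, and I would treat them in turn: the expression \eqref{HRR*} for $H$, then the PDE \eqref{PDEHlargen}.

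\textbf{The expression \eqref{HRR*}.} This follows from the finite-$n$ formula \eqref{HnRnRn*D} under the double scaling \eqref{ds}. I would use the substitutions \eqref{defxnyn}–\eqref{RxD} already set up in the proof of Theorem \ref{RR*PDEs}:
\[
R_n=\frac{x_n}{n},\qquad R_n^{\star}=\frac{y_n}{n},\qquad \partial_{t_1}=2n\,\partial_{s_1},\qquad \partial_{t_2}=4n^2\,\partial_{s_2},
\]
together with $t_1=s_1/(2n)$ and $t_2=s_2/(4n^2)$. I then go through \eqref{HnRnRn*D} term by term and record the order in $n$ of each term.

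\begin{itemize}
\item $t_1\partial_{t_1}S_n/R_n = s_1\partial_{s_1}(x_n+y_n)/x_n$ is $O(1)$.
\item $\frac{t_1}{8t_2R_n}(\cdots)^2$ becomes $\frac{s_1 s_2}{x_n}\big(\frac{s_1y_n}{2s_2x_n}\partial_{s_1}(x_n+y_n)-\partial_{s_2}(x_n+y_n)\big)^2$ after collecting the powers of $n$.
\item $(n+\alpha/2)S_n \to x_n+y_n$, while $S_n^2/4$ is $O(n^{-2})$.
\item $\frac{t_2}{2t_1}R_n$ is $O(n^{-2})$.
\item $t_1/2$ is $O(n^{-1})$ and cancels against nothing; it simply vanishes in the limit.
\item $t_1/R_n = s_1/(2x_n)$.
\item $\frac{t_1^3}{8t_2}\frac{(R_n^{\star})^2}{R_n^3}$ becomes $\frac{s_1^3y_n^2}{16s_2x_n^3}$.
\end{itemize}

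Letting $n\to\infty$ and using \eqref{xyR} then gives \eqref{HRR*}. This part is routine bookkeeping of powers of $n$.

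\textbf{The PDE \eqref{PDEHlargen}.} The cleanest route avoids the sixth-degree equation \eqref{PDEHn}. Instead I would start from the limiting relations of Proposition \ref{RRstarH}, namely $R=-s_1\partial_{s_1}H$, $R^{\star}=-2s_2\partial_{s_2}H$ and $r=-R$, $r^\star=-R^\star$, and feed them into a limiting form of \eqref{betanRnrn}. Write $\beta_n$ through \eqref{betanHnD} as $\beta_n=(t_1\partial_{t_1}+2t_2\partial_{t_2})H_n-H_n+n(n+\alpha)$.

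The plan is to multiply \eqref{betanRnrn} by $R_n/n$ (or by a suitable power of $n$) and take $n\to\infty$, expanding each side to the order at which the leading $n^2$ pieces cancel. That cancellation, $\beta_n\approx n^2$ against $(2n+\alpha)r_n/R_n\sim 2n^2$ and similar terms, is delicate. So I would rather substitute the double-scaled expressions \eqref{Rnsol2}–\eqref{Rn*sol2}, which hold exactly, into \eqref{HnRnRn*D}, or equivalently insert $R=-s_1H_{s_1}$ and $R^\star=-2s_2H_{s_2}$ directly into \eqref{HRR*}.

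The second option is the concrete choice I commit to. Substituting $R=-s_1\partial_{s_1}H$ and $R^{\star}=-2s_2\partial_{s_2}H$ into \eqref{HRR*} gives a relation between $H$ and its first and second derivatives:
\begin{itemize}
\item $U=R+R^\star$ becomes $-(s_1\partial_{s_1}+2s_2\partial_{s_2})H$;
\item $\partial_{s_1}U$ and $\partial_{s_2}U$ become $-\partial_{s_1}(s_1\partial_{s_1}H+2s_2\partial_{s_2}H)$ and its $s_2$-analogue.
\end{itemize}
Multiplying through by $4R^3$ clears the denominators. The combination $\frac{s_1R^\star}{2s_2R}\partial_{s_1}U-\partial_{s_2}U$ simplifies to a multiple of
\[
\frac{1}{s_1\partial_{s_1}H}\Big(\partial_{s_2}H\,(s_1\partial_{s_1s_1}H+2s_2\partial_{s_1s_2}H)-\partial_{s_1}H\,(2s_2\partial_{s_2s_2}H+s_1\partial_{s_2s_1}H+\partial_{s_2}H)\Big),
\]
which is exactly the first bracket in \eqref{PDEHlargen}. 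Similarly, $\frac{s_1}{R}\partial_{s_1}U-1$ yields the term $(s_1\partial_{s_1s_1}H+2s_2\partial_{s_1s_2}H)^2$. The remaining terms then assemble into $4(\partial_{s_1}H)^3(s_1\partial_{s_1}H+2s_2\partial_{s_2}H-H)$, $-\partial_{s_1}H(\alpha\partial_{s_1}H+\tfrac12)^2$ and $-s_2(\partial_{s_2}H)^2$ after dividing by a common factor of $s_1^2$. The result is polynomial of second degree in the second derivatives, as claimed.

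\textbf{Main obstacle and checks.} The hard step is the algebraic simplification: verifying that all cross terms collapse into the stated compact form after multiplying by the right power of $\partial_{s_1}H$ and $s_1$. I would do this with computer algebra, keeping the sign conventions of Proposition \ref{RRstarH} (${\rm sgn}(\partial_{s_1}H)=-1$) so that no spurious square roots appear. As a consistency check, I would confirm that setting $\partial_{s_2}H=0$ recovers the $\sigma$-form of Painlev\'e III from \cite{ChenChen2015}.
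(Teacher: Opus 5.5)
Your proposal is correct. The derivation of \eqref{HRR*} is the same as the paper's, and your term-by-term powers of $n$ are right.

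For \eqref{PDEHlargen} your route differs from the paper's proof. The paper sets $\varphi_n(s_1,s_2)=H_n(s_1/(2n),s_2/(4n^2))$, substitutes it into the sixth-degree equation \eqref{PDEHn}, expands for large $n$, and sets the coefficient of $n^{10}$ to zero. You instead insert $R=-s_1\partial_{s_1}H$ and $R^\star=-2s_2\partial_{s_2}H$ into \eqref{HRR*}, which is exactly the shortcut the paper notes in the remark after its proof.

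Your computation closes exactly. Set $a=\partial_{s_1}H$, $b=\partial_{s_2}H$, $P=s_1\partial_{s_1s_1}H+2s_2\partial_{s_1s_2}H$, and let $Q$ be the first bracket of \eqref{PDEHlargen}. Then $\frac{s_1R^\star}{2s_2R}=\frac{b}{a}$, the first bracket of \eqref{HRR*} equals $-Q/a$, and $\frac{s_1}{R}\partial_{s_1}(R+R^\star)-1=P/a$. Hence
\[
H=\frac{s_2Q^2}{a^3}+\frac{P^2}{4a^2}+s_1a+2s_2b-\frac{s_2b^2}{4a^3}-\frac{(\alpha a+\frac12)^2}{4a^2},
\]
and multiplying by $4a^3$ gives \eqref{PDEHlargen} exactly. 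So the factor to clear is simply $4(\partial_{s_1}H)^3$, not ``$4R^3$ and then divide by $s_1^2$'' (with $4R^3$ the common factor would be $-s_1^3$). Also, no square roots arise anywhere in this route. The sign from Proposition \ref{RRstarH} matters only because it fixes $R=-s_1\partial_{s_1}H$ rather than $+s_1\partial_{s_1}H$.

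The trade-off between the two routes is as follows. Yours is a few lines of exact algebra, but it inherits everything from part one and from Proposition \ref{RRstarH}. In particular, the sign ${\rm sgn}(\partial_{s_1}H)=-1$ is essential: with $R=+s_1\partial_{s_1}H$ and $R^\star=+2s_2\partial_{s_2}H$, the same computation yields a different equation. The paper's main route needs neither \eqref{HRR*} nor the identification of $R,R^\star$ with derivatives of $H$. It is therefore an independent derivation that cross-checks Proposition \ref{RRstarH}, at the cost of a heavy symbolic expansion of \eqref{PDEHn}. Both routes rely on the standing assumption that the double-scaled limits exist and commute with differentiation.
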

\begin{proof}
Plugging \eqref{RxD} and
\[
R_n^{\star}=\frac{y_n}{n},\qquad\qquad \partial_{t_1}R_n^{\star}=2\partial_{s_1}y_n,\qquad\qquad \partial_{t_2}R_n^{\star}=4n\partial_{s_2}y_n\]
into the right-hand side of \eqref{HnRnRn*D}, replacing $t_1$ by $s_1/(2n)$ and $t_2$ by $s_2/(4n^2)$ in the obtained expression, by expanding it as $n\rightarrow\infty$, we find
\[H_n=\Phi_n+O(n^{-1}),\]
where $\Phi_n$ is a function of $x_n,y_n$ and their first order derivatives. Taking the limit as $n\rightarrow\infty$ on both sides of the above equation leads us to \eqref{HRR*}.

Writing $\varphi_n(s_1,s_2):=H_n\left(\frac{s_1}{2n},\frac{s_2}{4n^2}\right)$,
we have
\begin{align*}
\partial_{t_1}H_n=&2n\partial_{s_1}\varphi_n,&\partial_{t_2}H_n=&4n^2\partial_{s_2}\varphi_n,\\
\partial_{t_1t_1}H_n=&4n^2\partial_{s_1s_1}\varphi_n,&\partial_{t_1t_2}H_n=&8n^3\partial_{s_1s_2}\varphi_n,\\
\;\partial_{t_2t_1}H_n=&8n^3\partial_{s_2s_1}\varphi_n,&\partial_{t_2t_2}H_n=&16n^4\partial_{s_2s_2}\varphi_n.
\end{align*}
Inserting them into the left-hand side of \eqref{PDEHn}, substituting $s_1/(2n)$ for $t_1$ and $s_2/(4n^2)$ for $t_2$ in the resulting expression, by performing an asymptotic expansion for large $n$, we obtain
\[q_nn^{10}+O(n^9)\equiv0,\]
where $q_n$ is a function of $\varphi_n$ and its derivatives. Hence, it follows that $\lim\limits_{n\rightarrow\infty}q_n=0$, which gives us \eqref{PDEHlargen}.
\end{proof}

\begin{remark} From \eqref{dsrHR} and \eqref{dsr*HR*}, we have $R=-s_1\partial_{s_1}H$ and $R^{\star}=-2s_2\partial_{s_2}H$. Inserting them into \eqref{HRR*}, we arrive at \eqref{PDEHlargen}.
\end{remark}

\begin{remark}
In the case $s_2\rightarrow0^+$ {\rm(}so that $t_2\rightarrow0^+${\rm)} or $H$ is independent of $s_2$, equation \eqref{PDEHlargen} is reduced to
\begin{equation*}
\left(s_1
   H''\right){}^2+4\left(H'\right)^2 \left( s_1
   H'- H\right)-\left(\alpha H'+\frac{1}{2}\right)^2=0,
\end{equation*}
with $H=H(s_1,0)$, which agrees with (2.19) of \cite{ChenChen2015}. As pointed out in Remark 3 therein, the quantity $H(2s_1,0)-\frac{\alpha^2}{4}$ satisfies the $P_{III}$ equation of Ohyama-Kawamuko-Sakai-Okamoto. In addition, when $s_2\rightarrow0^+$, with the aid of \eqref{s20R*}, equation \eqref{HRR*} reduces to
\begin{equation*}
\begin{aligned}
H(s_1,0)=&\frac{1}{4}\left(\frac{s_1 }{R}R'
   -1\right
   )^2-R-\frac{1}{4}\left(\frac{s_1}{2R}-\alpha\right)^2,
\end{aligned}
\end{equation*}
where $R=R(s_1,0)$, which is consistent with (2.20) of \cite{ChenChen2015}. Note that our $s_1$ and $R$ correspond to $s$ and $sC(s)/2$ therein respectively.
\end{remark}

\subsection{The equilibrium density}
In this section, we consider the large $n$ behavior of our monic orthogonal polynomials $\{P_n(z;t_1,t_2)\}$
by using the linear statistics results \cite{ChenLawrence1998} which were derived via Dyson's Coulomb fluid method.
Now we give a brief description of this method.

Denote by $\{x_i\}_{i=1}^n$ the eigenvalues of the unitary ensemble associated with the weight function $w(x;t_1,t_2)$ given by \eqref{w}, i.e.
\[
w(x; t_1,t_2)=x^{\al}\:{\rm exp}\left(-x-\frac{t_1}{x}-\frac{t_2}{x^2}\right),\quad x\geq 0, ~\alpha>0, ~t_1>0,~t_2>0.
\]
The potential function reads
\begin{align}\label{pf}
v(x):=-\ln w(x;t_1,t_2)=-\al\:\ln\:x +x +\frac{t_1}{x}+\frac{t_2}{x^2}.
\end{align}
If we treat $\{x_i\}_{i=1}^n$ as $n$ identically charged particles, then when $n\rightarrow\infty$ they can be viewed as a continuous fluid with density $\sigma(x)$.
Here, the parameter ranges are narrowed from $\alpha>-1$ and $t_1\neq0$ to $\alpha>0$ and $t_1>0$ to ensure that $v''(x)=\frac{\alpha}{x^2}+\frac{2t_1}{x^3}+\frac{6t_2}{x^4}\geq0$ which indicates that $v(x)$ is convex and hence $\sigma(x)$ is supported on a single interval $[a,b]\subseteq[0,\infty)$.

In Dyson's works \cite{Dyson1962JMP}, $\sigma(x)$ is determined via the constrained minimization
\begin{gather*}
\min_{\sigma} F[\sigma] \qquad\text{subject to}\qquad \int_a^b \sigma(x)dx=n,
\end{gather*}
with
\begin{gather*}
F[\sigma]:=\int_a^b \sigma(x)v(x)dx-\int_a^b\int_a^b\sigma(x)\log|x-y|\sigma(y)dx dy.
\end{gather*}
According to Theorem 1.3 in Chapter I.1 of \cite{SaffTotik1997}, $\sigma(x)$ satisfies the following equilibrium condition
\begin{gather}\label{LA}
v(x)-2\int_a^b \log|x-y|\sigma(y)dy=A,\qquad x\in[a,b],
\end{gather}
where $A$ is the Lagrange multiplier which fixes $\int_a^b \sigma(x)dx=n$.
The derivative of this equation with respect to $x$ gives us
\begin{gather*}
2 \,\mathcal{P} \int_a^b \frac{\sigma(y)dy}{x-y}=v'(x),\qquad x\in[a,b],
\end{gather*}
where $\mathcal{P}$ denotes the Cauchy principal value.
Based on the theory of singular integral equations \cite{Mikhlin1957}, the solution of this equation subject to the boundary condition $\sigma(a)=0=\sigma(b)$ takes the form
\begin{gather}\label{sigmaint}
\sigma(x)=\frac{\sqrt{(b-x)(x-a)}}{2\pi^2}\int_a^b \frac{ v'(y)-v'(x)}{y-x}\frac{dy}{\sqrt{(b-y)(y-a)}},\qquad x\in(a,b),
\end{gather}
with a supplementary condition
\bea\label{v'cond1}
\int_{a}^{b}\frac{v'(x)dx}{\sqrtx}=0.
\eea
The normalization condition $\int_a^b \sigma(x)dx=n$ now reads
\bea\label{v'cond2}
\int_{a}^{b}\frac{x\:v'(x)dx}{\sqrtx}=2\pi n.
\eea

With the help of the integral identities in the Appendix, we obtain the expressions for $\sigma(x)$ and $A$.
\begin{proposition} The equilibrium density and the Lagrange multiplier are given by
\begin{equation}\label{sigma}
\frac{ 2\pi\sqrt{ab}\:\sigma(x) }{\sqrtx}=\frac{1}{x}\left(\al+\frac{a+b}{2ab}t_1+\frac{3(a+b)^2-4ab}{4(ab)^2}t_2\right)+\frac{1}{x^2}\left(t_1+\frac{a+b}{ab}t_2\right)+\frac{2t_2 }{x^3},
\end{equation}
for $x\in(a,b)$, and
\begin{align}\label{A}
A=\frac{a+b}{2}-\alpha\ln \frac{a+b+2\sqrt{ab}}{4}-2n\ln \frac{b-a}{4}+\frac{t_1}{\sqrt{ab}}+\frac{(a+b)t_2}{2(ab)^{3/2}},
\end{align}
where $a$ and $b$ are determined by the following two equalities
\begin{align}
\frac{\alpha}{\sqrt{ab}}+\frac{(a+b) t_1}{2(ab)^{3/2}}+\frac{3\left(\frac{a+b}{2}\right)^2-ab}{(ab)^{5/2}}t_2=1,\label{abeq1}\\
\frac{a+b}{2}-\frac{t_1}{\sqrt{ab}}-\frac{(a+b)t_2}{(ab)^{3/2}}=2n+\alpha.\label{abeq2}
\end{align}
\end{proposition}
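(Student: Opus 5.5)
We compute everything directly from the three displayed ingredients \eqref{LA}, \eqref{sigmaint}, \eqref{v'cond1}--\eqref{v'cond2}, using the Appendix identities for integrals of the form $\int_a^b y^{-k}\,dy/\sqrt{(b-y)(y-a)}$ and $\int_a^b \ln|x-y|\,dy/\sqrt{(b-y)(y-a)}$.

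\textbf{Density.} With $v'(x)=-\alpha/x+1-t_1/x^2-2t_2/x^3$, we first compute the divided difference
\[
\frac{v'(y)-v'(x)}{y-x}=\frac{\alpha}{xy}+t_1\frac{x+y}{x^2y^2}+2t_2\frac{x^2+xy+y^2}{x^3y^3}.
\]
Grouping by powers of $1/x$, this becomes
\[
\frac{1}{x}\Big(\frac{\alpha}{y}+\frac{t_1}{y^2}+\frac{2t_2}{y^3}\Big)+\frac{1}{x^2}\Big(\frac{t_1}{y}+\frac{2t_2}{y^2}\Big)+\frac{2t_2}{x^3}\frac{1}{y}.
\]
We then integrate against $dy/\sqrt{(b-y)(y-a)}$ using the moments
\[
I_1=\frac{\pi}{\sqrt{ab}},\qquad I_2=\frac{\pi(a+b)}{2(ab)^{3/2}},\qquad I_3=\frac{\pi\,(3(a+b)^2-4ab)}{8(ab)^{5/2}},
\]
which follow from the substitution $y\mapsto 1/y$ or from the Appendix. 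Multiplying by $\sqrt{(b-x)(x-a)}/(2\pi^2)$ and factoring out $1/\sqrt{ab}$ should reproduce \eqref{sigma} coefficient by coefficient. The $1/x^3$ term is immediate. The $1/x^2$ term gives $t_1+2t_2\frac{a+b}{2ab}$, and the $1/x$ term gives $\alpha+t_1\frac{a+b}{2ab}+2t_2\frac{3(a+b)^2-4ab}{8(ab)^2}$, matching the statement.

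\textbf{Endpoint equations.} We insert $v'$ into \eqref{v'cond1} and \eqref{v'cond2}, using $\int_a^b dy/\sqrt{\cdot}=\pi$ and $\int_a^b y\,dy/\sqrt{\cdot}=\pi(a+b)/2$ together with $I_1,I_2,I_3$.
\begin{itemize}
\item Condition \eqref{v'cond1} gives $\pi-\alpha I_1-t_1I_2-2t_2I_3=0$. This is \eqref{abeq1} once we note $2t_2I_3/\pi=(3((a+b)/2)^2-ab)t_2/(ab)^{5/2}$.
\item Condition \eqref{v'cond2} gives $-\alpha\pi+\pi\frac{a+b}{2}-t_1I_1-2t_2I_2=2\pi n$, which is \eqref{abeq2}.
\end{itemize}

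\textbf{Lagrange multiplier.} This is the main computational obstacle. We evaluate \eqref{LA} at a convenient point; the endpoint $x=b$ (or $x=a$) is a natural choice since $A$ is constant on $[a,b]$. The term $2\int_a^b\ln|x-y|\sigma(y)dy$ requires the Appendix identities for integrals of the form $\int_a^b\ln(b-y)\sqrt{(b-y)(y-a)}\,y^{-k}dy$ with $k=0,1,2,3$.

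A cleaner route, which we would try first, avoids these logarithmic integrals altogether. Differentiate $F$ at the minimizer: the standard identity $\partial A/\partial n=\partial F/\partial n$ is not directly helpful, so instead use the large-$x$ method. Define
\[
g(z)=\int_a^b\ln(z-y)\sigma(y)\,dy=n\ln z+O(1/z),
\]
and write $2g'(z)$ as $v'(z)$ minus $\sqrt{(z-a)(z-b)}$ times an explicit rational function, namely the analytic continuation of the density. Integrate $g'$ from $b$ to a large $L$, and compare the result with $n\ln L$. The constant terms then give $A=v(b)-2g(b)$ explicitly, and the $\ln\frac{a+b+2\sqrt{ab}}{4}$ and $\ln\frac{b-a}{4}$ terms arise from $\int\frac{dz}{z\sqrt{(z-a)(z-b)}}$-type and $\int\frac{dz}{\sqrt{(z-a)(z-b)}}$ antiderivatives.

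Finally, we use \eqref{abeq1} and \eqref{abeq2} to simplify the remaining algebraic terms into the form \eqref{A}. The bookkeeping of these constants is where errors are most likely, so we would check the result against the known Laguerre case $t_1=t_2=0$, which gives $A=2n+\alpha-\alpha\ln(\cdot)-2n\ln\frac{b-a}{4}$ with $ab$ determined by \eqref{abeq1}.
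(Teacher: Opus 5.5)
Your derivation of \eqref{sigma}, \eqref{abeq1} and \eqref{abeq2} is the same as the paper's, and your coefficient checks are correct. The paper expands $\frac{v'(y)-v'(x)}{y-x}$ in powers of $1/x$ and inserts the moments $\int_a^b y^{-k}\,dy/\sqrt{(b-y)(y-a)}$, $k=1,2,3$, exactly as you do. For the multiplier $A$ you take a genuinely different and much heavier route.

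The paper never evaluates the logarithmic potential at a point. It multiplies \eqref{LA} by $1/\sqrt{(b-x)(x-a)}$ and integrates over $x\in[a,b]$. Since
\[
\int_a^b\frac{\ln|x-y|\,dx}{\sqrt{(b-x)(x-a)}}=\pi\ln\frac{b-a}{4}\qquad\text{for every } y\in[a,b]
\]
(the arcsine measure has constant logarithmic potential on its support), Fubini and $\int_a^b\sigma=n$ reduce the double integral to $2\pi n\ln\frac{b-a}{4}$. What remains is $\int_a^b v(x)\,dx/\sqrt{(b-x)(x-a)}$, which the Appendix evaluates term by term, including $\int_a^b\ln x\,dx/\sqrt{(b-x)(x-a)}=2\pi\ln\frac{\sqrt a+\sqrt b}{2}$. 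This gives \eqref{A} in two lines, with no algebraic reduction via \eqref{abeq1}--\eqref{abeq2}.

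Your $g$-function route writes
\[
A=\lim_{L\to\infty}\Bigl[v(L)-2n\ln L-\int_b^L\frac{\sqrt{(z-a)(z-b)}}{\sqrt{ab}}\,\Sigma(z)\,dz\Bigr],
\]
where $\Sigma$ is the right-hand side of \eqref{sigma} continued to $z>b$. This is sound, but as written it is only a plan. You still have to integrate $\sqrt{(z-a)(z-b)}\,z^{-k}$ for $k=1,2,3$. You then need \eqref{abeq1} to cancel the terms linear in $L$ and \eqref{abeq2} to cancel the $\ln L$ terms.

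If you carry it out, it does land on \eqref{A}:
\begin{itemize}
\item The term in $A$ proportional to $\ln\frac{\sqrt b+\sqrt a}{\sqrt b-\sqrt a}$ has coefficient exactly $-\alpha$, because the $t_1$ and $t_2$ contributions from the $z^{-1},z^{-2},z^{-3}$ pieces cancel.
\item The pieces $\int_b^L dz/\sqrt{(z-a)(z-b)}$ leave $(2n+\alpha)\ln\frac{4}{b-a}$. Combined with the previous term, this gives $-2n\ln\frac{b-a}{4}-\alpha\ln\frac{(\sqrt a+\sqrt b)^2}{4}$.
\item The algebraic remainders give $\frac{a+b}{2}+\frac{t_1}{\sqrt{ab}}+\frac{(a+b)t_2}{2(ab)^{3/2}}$.
\end{itemize}
So your route needs no logarithmic integral identity, only elementary antiderivatives, but at the cost of considerably more bookkeeping than the paper's averaging trick.

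Your first suggestion is mis-sourced. The Appendix contains no formulas for $\int_a^b\ln(b-y)\sqrt{(b-y)(y-a)}\,y^{-k}\,dy$, so evaluating \eqref{LA} at $x=b$ would require deriving those yourself.
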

\begin{proof}
From \eqref{pf}, we have
\begin{align*}
v'(x)=&-\frac{\alpha}{x} +1 -\frac{t_1}{x^2} -\frac{2t_2}{x^3},
\end{align*}
so that
\begin{align*}
\frac{v'(y)-v'(x)}{y-x}=\left(\frac{\alpha}{x}+\frac{t_1}{x^2}+\frac{2t_2}{x^3}\right)\frac{1}{y}+\left(\frac{t_1}{x}+\frac{2t_2}{x^2}\right)\frac{1}{y^2}+\frac{2t_2}{x}\cdot\frac{1}{y^3}.
\end{align*}
Inserting them into \eqref{sigmaint}-\eqref{v'cond2}, in view of the formulas in the Appendix, we are led to \eqref{sigma} and \eqref{abeq1}-\eqref{abeq2}.

Multiplying both sides of \eqref{LA} by $1/\sqrt{(b-x)(x-a)}$ and integrating them with respect to $x$ from $a$ to $b$, with the aid of the integrals in the Appendix and (5.18) of \cite{MuLyu24}, i.e.
\[\int_a^b\frac{\ln|x-y|dx}{\sqrt{(b-x)(x-a)}}=\pi\ln(b-a)-2\pi\ln2, \]
we arrive at \eqref{A}.
\end{proof}

We now apply \eqref{abeq1} and \eqref{abeq2}
to derive an algebraic equation for the geometric mean $\sqrt{ab}$.
\begin{lemma} The quantity
\[X:=\sqrt{ab}\]
satisfies an algebraic equation of degree nine
\begin{equation}\label{eqX}
\begin{aligned}
X^9&-\al\:X^8-((2n+\alpha)t_1+3t_2)\:X^6+(4\alpha\:t_2-t_1^2)\:X^5\\
&-3t_2(2n+\alpha)^2\: X^4-4t_1t_2(2n+\alpha)\:X^3-t_2(t_1^2+4\alpha\;t_2)\:X^2 + 4\:t_2^3=0,
\end{aligned}
\end{equation}
which, under the double scaling that $t_1\rightarrow0$, $t_2\rightarrow0^+$ and $n\rightarrow\infty$ such that $
s_1=2nt_1$ and $s_2=4n^2t_2$ are fixed,
reduces to
\begin{align}\label{eqX5}
X^5-\al\:X^4-s_1\:X^2-3\:s_2=0.
\end{align}
\end{lemma}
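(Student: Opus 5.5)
The plan is to reduce \eqref{abeq1}--\eqref{abeq2} to two equations in the geometric mean $X=\sqrt{ab}$ and the arithmetic mean $Y:=\frac{a+b}{2}$, eliminate $Y$, and clear denominators.

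\textbf{Step 1: rewrite in $X$ and $Y$.} Since $a+b=2Y$ and $ab=X^2$, equation \eqref{abeq1} becomes
\[
\frac{\alpha}{X}+\frac{Y t_1}{X^3}+\frac{(3Y^2-X^2)t_2}{X^5}=1 .
\]
Multiplying by $X^5$ gives
\[
X^5-\alpha X^4-t_1YX^2-t_2(3Y^2-X^2)=0 .
\]
Equation \eqref{abeq2} becomes $Y-\frac{t_1}{X}-\frac{2Yt_2}{X^3}=2n+\alpha$. This is linear in $Y$, so
\[
Y=\frac{(2n+\alpha)X^3+t_1X^2}{X^3-2t_2}=\frac{X^2\big((2n+\alpha)X+t_1\big)}{X^3-2t_2}.
\]
This step needs $X^3\neq 2t_2$. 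That holds because otherwise \eqref{abeq2} would force $t_1/X=-(2n+\alpha)$, and this case can be excluded or treated as a nongeneric exceptional case.

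\textbf{Step 2: eliminate $Y$.} Substitute this $Y$ into the first equation and multiply through by $(X^3-2t_2)^2$. This gives
\[
\big(X^5-\alpha X^4+t_2X^2\big)(X^3-2t_2)^2-t_1X^4\big((2n+\alpha)X+t_1\big)(X^3-2t_2)-3t_2X^4\big((2n+\alpha)X+t_1\big)^2=0 .
\]
Every term carries a factor $X^2$. Dividing by $X^2$ (allowed since $X>0$) leaves a polynomial of degree nine.

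Expanding and collecting powers of $X$ should reproduce \eqref{eqX}. As spot checks:
\begin{itemize}
\item The $X^9$ coefficient is $1$.
\item The $X^6$ coefficient combines $-4t_2$, $t_2$, and the cross term $-(2n+\alpha)t_1$, giving $-((2n+\alpha)t_1+3t_2)$.
\item The $X^2$ coefficient collects $-4\alpha t_2^2$ together with $-t_1^2t_2$.
\item The constant term is $4t_2^3$.
\end{itemize}
This expansion is routine bookkeeping, and I would verify it symbolically, as the paper does elsewhere.

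\textbf{Step 3: the double scaling limit.} Put $t_1=s_1/(2n)$ and $t_2=s_2/(4n^2)$ and assume, as the dominant balance suggests, that $X$ stays bounded and bounded away from $0$ as $n\to\infty$. This reflects the hard edge $a$ tending to $0$ while $b$ grows like $4n$. Under this assumption, $(2n+\alpha)t_1\to s_1$ and $3t_2(2n+\alpha)^2\to 3s_2$, while every remaining coefficient tends to zero: $3t_2$, $4\alpha t_2-t_1^2$, $4t_1t_2(2n+\alpha)=O(n^{-2})$, $t_2(t_1^2+4\alpha t_2)$, and $4t_2^3$. The limit equation is therefore
\[
X^9-\alpha X^8-s_1X^6-3s_2X^4=0 .
\]
Dividing by $X^4\neq0$ yields \eqref{eqX5}.

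The main obstacle is justifying that $X=O(1)$ with a nonzero limit under the scaling. I would argue this by consistency: if $X\to0$ or $X\to\infty$, the balance of the leading terms in \eqref{eqX} cannot be maintained with positive $s_1,s_2$. In particular, the terms $s_1X^6$ and $3s_2X^4$ must be balanced by $X^9-\alpha X^8$, which forces a finite positive root.
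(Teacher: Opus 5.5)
Your proof follows the paper's: pass to $X=\sqrt{ab}$ and $Y=\frac{a+b}{2}$, solve the linear equation \eqref{abeq2} for $Y$, substitute into \eqref{abeq1}, and clear denominators. You then insert $t_1=s_1/(2n)$, $t_2=s_2/(4n^2)$ and keep the $O(1)$ part $X^4(X^5-\alpha X^4-s_1X^2-3s_2)$. Your cleared polynomial does expand exactly to \eqref{eqX}. The caveat $X^3\neq 2t_2$ is unnecessary: multiplying the first equation by $(X^3-2t_2)^2$ and using $Y(X^3-2t_2)=X^2((2n+\alpha)X+t_1)$ requires no division.

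Your closing justification needs a correction. It goes beyond the paper, which simply treats $X$ as $O(1)$ when expanding. It is false that \eqref{eqX} has no root tending to $0$ under the scaling. For $X=\xi n^{-3/2}$, the terms $-3t_2(2n+\alpha)^2X^4$ and $4t_2^3$ are both of order $n^{-6}$ and dominate every other term. So there is a positive root with $\xi^4\approx s_2^2/48$.

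That root is spurious only because it has $X^3<2t_2$. Then $Y=X^2((2n+\alpha)X+t_1)/(X^3-2t_2)<0$, which contradicts $Y=\frac{a+b}{2}>0$. This constraint is lost when $Y$ is eliminated, so you must reinstate it. With $X^3>2t_2$ imposed, $X\to0$ is impossible, because $-3t_2(2n+\alpha)^2X^4$ then dominates all other terms. The case $X\to\infty$ is excluded by the $X^9$ term. Hence the admissible root stays in a compact subset of $(0,\infty)$, which is what the limiting argument actually needs.
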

\begin{proof}
Let
$$Y:=\frac{a+b}{2}.$$
Then \eqref{abeq1} and \eqref{abeq2} become
\bean
-\frac{\alpha}{X}+1-\frac{t_1\:Y}{X^3}-\frac{3Y^2-X^2}{X^5}t_2=0,\\
2n+\al+\frac{t_1}{X}+\left(\frac{2t_2}{X^3}-1\right)Y=0.
\eean
Solving $Y$ from the second equation and inserting it into the first one leads us to \eqref{eqX}.

Replacing $t_1$ by $s_1/(2n)$ and $t_2$ by $s_2/(4n^2)$ in the left hand side of \eqref{eqX}, expanding it asymptotically for $n\rightarrow\infty$,
by setting the coefficient of the leading order term in $n$ to be 0, we get \eqref{eqX5}.
\end{proof}

\begin{remark} When $t_2\rightarrow0^+$, \eqref{sigma} reduces to
\begin{gather*}
\frac{ 2\pi\sqrt{ab}\:\sigma(x) }{\sqrtx}=\frac{1}{x}\left(\al+\frac{a+b}{2ab}t_1\right)+\frac{t_1 }{x^2},\qquad x\in(a,b),
\end{gather*}
which agrees with the equation below \emph{(2.5)} of \cite{ChenChen2015}, and \eqref{A} becomes
\begin{gather*}
A=\frac{a+b}{2}-\alpha\ln \frac{a+b+2\sqrt{ab}}{4}-2n\ln \frac{b-a}{4}+\frac{t_1}{\sqrt{ab}}.
\end{gather*}
Equation \eqref{eqX} simplifies to
\[
X^4-\al\:X^3-(2n+\alpha)t_1\:X-t_1^2=0,\]
which coincides with \emph{(2.8)} of \cite{ChenChen2015}. Sending $s_2$ to 0 in \eqref{eqX5} yields
\[
X^3-\alpha\:X^2-s_1=0,\]
which is consistent with the one given in part A of Section II in \cite{ChenChen2015}.
\end{remark}

\begin{remark}
Assuming $t_1\rightarrow0^+$ and $t_2\rightarrow0^+$, we get from \eqref{sigma}-\eqref{abeq2} that
\begin{align*}
\sigma(x)=&\frac{\al}{2\pi\sqrt{ab}}\cdot\frac{\sqrtx}{x},\qquad x\in(a,b),\\
A=&\frac{a+b}{2}-\alpha\ln \frac{a+b+2\sqrt{ab}}{4}-n\ln \frac{(b-a)^2}{16},
\end{align*}
and
\[\sqrt{ab}=\alpha,\qquad\qquad \frac{a+b}{2}=2n+\alpha.\]
Substituting the values of $\sqrt{ab}$ and $\frac{a+b}{2}$ into $\sigma(x)$ and $A$, and noting that $(b-a)^2=(a+b)^2-4ab$, we obtain
\begin{align*}
\sigma(x)=&\frac{1}{2\pi}\cdot\frac{\sqrt{-x^2+2(2n+\alpha)x-\alpha^2}}{x},\qquad x\in(a,b),\\
A=&2n+\alpha-(n+\alpha)\ln (n+\alpha)-n\ln n.
\end{align*}
Setting $x=4ny$ in the above expression of $\sigma(x)$ and supposing $\alpha$ is finite, by sending $n\rightarrow\infty$, we find
\[\lim\limits_{n\rightarrow\infty}\sigma(4ny)=\frac{1}{2\pi}\sqrt{\frac{1-y}{y}},\qquad y\in(0,1),\]
which is the Marchenko-Pastur density \cite[(19.1.11)]{Mehta2006}. For the derivation of $y\in(0,1)$, see Remark 11 of \cite{MuLyu24}.
\end{remark}

		\section{Laguerre Weight with Three Time Variables}
		We consider the following weight function
		\begin{equation}
			w(x;t_1,t_2,t_3)=x^\alpha \exp\left(-x-\frac{t_1}{x}-\frac{t_2}{x^2}-\frac{t_3}{x^3}\right),\quad x\in[0,+\infty),~\alpha>-1, \end{equation}
with $t_1,t_2\in\mathbb{R}\setminus\{0\}$ and $t_3>0$. We adopt the same notation $D_n, P_n, h_n, \alpha_n,\beta_n, p(n)$ as in the two-variable case, with all the quantities now functions of $t_1,t_2,t_3$ instead of just $t_1$ and $t_2$. The properties involving them, as stated in the Introduction, remain valid. In this section, we omit explicit dependence on $t_1,t_2,t_3$ unless necessary.

		We have
		\[v(x)=v(x;t_1,t_2,t_3):=-\ln w(x;t_1,t_2,t_3)=-\alpha\ln x+x+\frac{t_1}{x}+\frac{t_2}{x^2}+\frac{t_3}{x^3},\]
		and
		\begin{equation}	 v'(x)=-\frac{\alpha}{x}+1-\frac{t_1}{x^2}-\frac{2t_2}{x^3}-\frac{3t_3}{x^4}.
		\end{equation}
		It follows that
		\begin{equation} \frac{zv'(z)-yv'(y)}{z-y}=1+\frac{1}{z}\left(\frac{t_1}{y}+\frac{2t_2}{y^2}+\frac{3t_3}{y^3}\right)+\frac{1}{z^2}\left(\frac{2t_2}{y}+\frac{3t_3}{y^2}\right)+\frac{1}{z^3}\cdot\frac{3t_3}{y}.
		\end{equation}
		Substituting it into \eqref{defAn}-\eqref{defBn} which are valid for $w(x;t_1,t_2,t_3)$, we get the expressions for $A_n$ and $B_n$.
		\begin{lemma} $A_n$ and $B_n$ are given by
			\begin{align}
				 A_n(z)&=\frac{1}{z}+\frac{R_n+R_n^{\star}+\hat{R}_n}{z^2}+\frac{\tau R_n+\rho R_n^{\star}}{z^3}+\frac{\tau \rho R_n}{z^4},\label{An-3}\\		 B_n(z)&=\frac{-n}{z}+\frac{r_n+r_n^{\star}+\hat{r}_n}{z^2}+\frac{\tau r_n+\rho r_n^{\star}}{z^3}+\frac{\tau \rho\, r_n}{z^4},\label{Bn-3}
			\end{align}
	where
\[
\tau:=\frac{2t_2}{t_1}, \qquad\quad\rho:=\frac{3t_3}{2t_2}, \qquad\quad\tau\rho=\frac{3t_3}{t_1},\]
 and
			\begin{align*}
				 R_n&:=\frac{t_1}{h_n}\int_{0}^{+\infty}P_n^2(y)w(y)\frac{dy}{y},&r_n&:=\frac{t_1}{h_{n-1}}\int_{0}^{+\infty}P_n(y)P_{n-1}(y)w(y)\frac{dy}{y},\\
				 R_n^{\star}&:=\frac{2t_2}{h_n}\int_{0}^{+\infty}P_n^2(y)w(y)\frac{dy}{y^2},&r_n^{\star}&:=\frac{2t_2}{h_{n-1}}\int_{0}^{+\infty}P_n(y)P_{n-1}(y)w(y)\frac{dy}{y^2},\\
				 \hat{R}_n&:=\frac{3t_3}{h_n}\int_{0}^{+\infty}P_n^2(y)w(y)\frac{dy}{y^3},&\hat{r}_n&:=\frac{3t_3}{h_{n-1}}\int_{0}^{+\infty}P_n(y)P_{n-1}(y)w(y)\frac{dy}{y^3}.
			\end{align*}	
		\end{lemma}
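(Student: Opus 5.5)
The plan is to substitute the already computed kernel
\[
\frac{zv'(z)-yv'(y)}{z-y}=1+\frac{1}{z}\left(\frac{t_1}{y}+\frac{2t_2}{y^2}+\frac{3t_3}{y^3}\right)+\frac{1}{z^2}\left(\frac{2t_2}{y}+\frac{3t_3}{y^2}\right)+\frac{1}{z^3}\cdot\frac{3t_3}{y}
\]
directly into the definitions \eqref{defAn} and \eqref{defBn}, which remain valid for $w(x;t_1,t_2,t_3)$. We then integrate term by term against $P_n^2w/h_n$ for $A_n$ and against $P_nP_{n-1}w/h_{n-1}$ for $B_n$. All integrals converge because $w$ vanishes faster than any power of $x$ as $x\to0^+$, since $t_3>0$, and it decays exponentially at infinity.

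For $A_n$, the constant term $1$ contributes $\frac1z\cdot\frac{1}{h_n}\int P_n^2w\,dx=\frac1z$. The $1/z$ bracket produces $\frac{1}{z^2}(R_n+R_n^{\star}+\hat R_n)$ directly from the definitions. The $1/z^2$ bracket gives
\[
\frac{1}{z^3}\cdot\frac{1}{h_n}\int\left(\frac{2t_2}{y}+\frac{3t_3}{y^2}\right)P_n^2w\,dy .
\]
Here the only observation needed is that $2t_2/t_1=\tau$ and $3t_3/(2t_2)=\rho$. These give $\frac{2t_2}{h_n}\int P_n^2w\,\frac{dy}{y}=\tau R_n$ and $\frac{3t_3}{h_n}\int P_n^2w\,\frac{dy}{y^2}=\rho R_n^{\star}$. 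The rescaling uses $t_1\neq0$ and $t_2\neq0$, which are assumed. Finally, the $1/z^3$ term gives $\frac{1}{z^4}\cdot\frac{3t_3}{h_n}\int P_n^2w\,\frac{dy}{y}=\frac{\tau\rho R_n}{z^4}$, since $\tau\rho=3t_3/t_1$. This is \eqref{An-3}.

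For $B_n$ the computation is identical with $P_nP_{n-1}/h_{n-1}$ in place of $P_n^2/h_n$. The one difference is the constant term: $\frac{1}{h_{n-1}}\int P_nP_{n-1}w\,dx=0$ by orthogonality, so $-n$ from \eqref{defBn} is the entire $1/z$ coefficient. The remaining coefficients are $r_n+r_n^{\star}+\hat r_n$, then $\tau r_n+\rho r_n^{\star}$, then $\tau\rho\, r_n$, which gives \eqref{Bn-3}.

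There is no genuine obstacle. The only points requiring care are convergence of the integrals near the origin and the bookkeeping of the constants $\tau,\rho$.
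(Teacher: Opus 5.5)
Your proposal is correct and matches the paper's argument: substitute the kernel $\frac{zv'(z)-yv'(y)}{z-y}$ into the definitions of $A_n$ and $B_n$, identify each integral with an auxiliary quantity using $\tau=2t_2/t_1$, $\rho=3t_3/(2t_2)$ and $\tau\rho=3t_3/t_1$, and use orthogonality to kill the constant term in $B_n$. Your remarks that $t_3>0$ makes all the integrals converge at the origin and that the rescaling needs $t_1,t_2\neq0$ are accurate details the paper leaves implicit.
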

		
		\subsection{Difference Equations for the Auxiliary Quantities}	
		Substituting \eqref{An-3}-\eqref{Bn-3} into $(S_1)$ and equating the coefficients of $z^{-j}$ (for $j=1,2,3,4$) yields
		\begin{gather}		 \alpha_n=2n+1+\alpha+R_n+R_n^{\star}+\hat{R}_n,\label{al-3}\\
r_{n+1}+r_n+r_{n+1}^{\star}+r_n^{\star}+\hat{r}_{n+1}+\hat{r}_n=t_1-\alpha_n(R_n+R_n^{\star}+\hat{R}_n)+\tau R_n+\rho R_n^{\star},\label{hatrn+1hattnt1t2t3}\\
\tau (r_{n+1}+r_n)+\rho (r_{n+1}^{\star}+r_n^{\star})=2t_2-\alpha_n(\tau R_n+\rho R_n^{\star})+\tau \rho R_n,\label{rn+1*rn*=2t23t3}\\
\tau \rho (r_{n+1}+r_n)=3t_3-\alpha_n\tau \rho R_n.\label{3t3rn+1rn=1-anRn}
		\end{gather}
		Since $\tau \rho=3t_3/t_1\neq0$, we divide both sides of \eqref{3t3rn+1rn=1-anRn} by $\tau \rho $ and get
		\begin{equation}
			\label{rn+1rn}
			r_{n+1}+r_n=t_1-\alpha_nR_n.	
		\end{equation}
		Inserting it into \eqref{rn+1*rn*=2t23t3} and dividing the resulting equation by $\rho$ gives
		\begin{equation}
			\label{rn+1*rn*}
			r_{n+1}^{\star}+r_n^{\star}=\tau R_n-\alpha_nR_n^{\star}.	
		\end{equation}
Plugging \eqref{rn+1rn} and \eqref{rn+1*rn*} into \eqref{hatrn+1hattnt1t2t3} results in
		\begin{equation}
			\label{hatrn+1hatrn}
			\hat{r}_{n+1}+\hat{r}_n=\rho R_n^{\star}-\alpha_n\hat{R}_n.	
		\end{equation}

Substituting \eqref{An-3}-\eqref{Bn-3} into $(S_2')$ and comparing the coefficients of $z^{-j}$ (for $j=2,3,\dots,8$) on both sides, we obtain
		\begin{equation}
			\label{bt-3}	 \beta_n=n(n+\alpha)+r_n+r_n^{\star}+\hat{r}_n+\sum_{j=0}^{n-1}\left(R_j+R_j^{\star}+\hat{R}_j\right),
		\end{equation}
\begin{equation*}
\begin{aligned}		 \beta_n&(R_n+R_{n-1}+R_n^{\star}+R_{n-1}^{\star}+\hat{R}_n+\hat{R}_{n-1})\\
&=-(2n+\alpha)(r_n+r_n^{\star}+\hat{r}_n)+nt_1+\tau r_n+\rho r_n^{\star}+\tau \sum_{j=0}^{n-1}R_j+\rho \sum_{j=0}^{n-1}R_j^{\star},
\end{aligned}
\end{equation*}
\begin{equation*}
\begin{aligned}		 &\beta_n\left[(R_n+R_n^{\star}+\hat{R}_n)(R_{n-1}+R_{n-1}^{\star}+\hat{R}_{n-1})+\tau  (R_n+R_{n-1})+\rho (R_n^{\star}+R_{n-1}^{\star})\right]\\			 &=(r_n+r_n^{\star}+\hat{r}_n)^2-t_1(r_n+r_n^{\star}+\hat{r}_n)-(2n+\alpha)(\tau r_n+\rho r_n^{\star})+2nt_2+\tau \rho r_n+\tau \rho \sum_{j=0}^{n-1}R_j,
\end{aligned}
\end{equation*}
\begin{equation}\label{z-5}
\begin{aligned}		
\beta_n&\left[\tau \rho (R_n+R_{n-1})+(\tau R_n+\rho R_n^{\star})(R_{n-1}+R_{n-1}^{\star}+\hat{R}_{n-1})+(\tau R_{n-1}+\rho R_{n-1}^{\star})(R_n+R_n^{\star}+\hat{R}_n)\right]\\
&=3n t_3-(2n+\alpha)\tau \rho r_n+2(r_n+r_n^{\star}+\hat{r}_n)(\tau r_n+\rho r_n^{\star})-2t_2(r_n+r_n^{\star}+\hat{r}_n)-t_1(\tau r_n+\rho r_n^{\star}),
\end{aligned}
\end{equation}
\begin{equation}\label{z-6}	
\begin{aligned}
\tau \rho \beta_n&\left[R_n(R_{n-1}+R_{n-1}^{\star}+\hat{R}_{n-1})+R_{n-1}(R_n+R_n^{\star}+\hat{R}_n)\right]+\beta_n(\tau R_n+\rho R_n^{\star})(\tau R_{n-1}+\rho R_{n-1}^{\star})\\	    &=(r_n+r_n^{\star}+\hat{r}_n)(2\tau \rho r_n-3t_3)-3t_3r_n+(\tau r_n+\rho r_n^{\star})^2-2t_2(\tau r_n+\rho r_n^{\star}),
\end{aligned}
\end{equation}
		\begin{align}\label{z-7}			
\tau \rho \beta_n\left[R_n(\tau R_{n-1}+\rho R_{n-1}^{\star})+R_{n-1}(\tau R_n+\rho R_n^{\star})\right]=(2\tau \rho r_n-3t_3)(\tau r_n+\rho r_n^{\star})-3t_3\tau r_n,
		\end{align}
		\begin{equation}
			\label{z-8}
			(\tau \rho )^2\beta_nR_nR_{n-1}=(\tau \rho)^2r_n(r_n-t_1).
		\end{equation}
		Since $\tau \rho=3t_3/t_1\neq0$, we divide \eqref{z-7}  by $\tau \rho $ and \eqref{z-8} by $(\tau \rho)^2$ to get
		\begin{gather}
			\beta_n\left[R_n(\tau R_{n-1}+\rho R_{n-1}^{\star})+R_{n-1}(\tau R_n+\rho R_n^{\star})\right]=(2r_n-t_1)(\tau r_n+\rho r_n^{\star})-2t_2r_n,\nonumber\\
\beta_nR_nR_{n-1}=r_n(r_n-t_1).\label{bnRnRn-1}
		\end{gather}
		Plugging the second equation into the first one produces
		\begin{equation}
			\label{bnRnRn-1*bnRn-1Rn*}
			 \beta_n(R_nR_{n-1}^{\star}+R_{n-1}R_n^{\star})=r_n^{\star}(2r_n-t_1).
		\end{equation}
Inserting \eqref{bnRnRn-1} and \eqref{bnRnRn-1*bnRn-1Rn*} into \eqref{z-6}, and dividing the obtained equation by $\rho$, we come to		 \begin{equation}
			\label{RnhatRn-1hatRnRn-1Rn*Rn-1^*}
			\tau \beta_n(R_n\hat{R}_{n-1}+\hat{R}_nR_{n-1})+\rho \beta_n R_n^{\star}R_{n-1}^{\star}=\rho(r_n^{\star})^2+2\hat{r}_n(\tau r_n-t_2).
		\end{equation}
		Substituting \eqref{bnRnRn-1}-\eqref{RnhatRn-1hatRnRn-1Rn*Rn-1^*} into \eqref{z-5} and then dividing it by $\rho$ leads to
		\begin{align}\label{6t3hatrnrn*}
			\beta_n\left[\tau (R_n+R_{n-1})+R_{n-1}^{\star}\left(R_n^{\star}+\hat{R}_n\right)+R_n^{\star}\hat{R}_{n-1}\right]=(r_n^{\star})^2-\tau (2n+\alpha)r_n+2n t_2+2\hat{r}_nr_n^{\star}.
		\end{align}

Combining \eqref{bnRnRn-1} with \eqref{bnRnRn-1*bnRn-1Rn*} to eliminate $R_{n-1}$ yields
			\begin{equation}\label{bnsRn-1} \beta_nR_{n-1}^{\star}=\frac{r_n^{\star}(2r_n-t_1)}{R_n}-\frac{r_n(r_n-t_1)R_n^{\star}}{R_n^2}.
			\end{equation}
Using it and  \eqref{bnRnRn-1} to eliminate $R_{n-1}^{\star}$ and $R_{n-1}$ respectively in \eqref{RnhatRn-1hatRnRn-1Rn*Rn-1^*} gives us
			\begin{equation}\label{Rh-3}		 	 \beta_n\hat{R}_{n-1}=\frac{1}{R_n}\left(\hat{r}_n(2r_n-t_1)+r_n(t_1-r_n)\frac{\hat{R}_n}{R_n}\right)+\frac{\rho}{\tau R_n}\left(r_n^{\star}-r_n\frac{R_n^{\star}}{R_n} \right)\left(r_n^{\star}+(t_1-r_n)\frac{R_n^{\star}}{R_n} \right).
	\end{equation}
			Plugging it, \eqref{bnsRn-1} and \eqref{bnRnRn-1} into \eqref{6t3hatrnrn*}, we express $\beta_n$ in terms of the auxiliary quantities. The expression together with \eqref{al-3} is stated in the following lemma.			 
		\begin{lemma} The recurrence coefficients are expressed in terms of the auxiliary quantities as
			\begin{equation}
				\label{an}
				 \alpha_n=2n+1+\alpha+R_n+R_n^{\star}+\hat{R}_n,
			\end{equation}
			\begin{equation}
\begin{aligned}\label{bn}		
\beta_n&=\frac{1}{\tau R_n}\left(1-\frac{\rho R_n^{\star}}{\tau R_n}\right)\left(r_n^{\star}-r_n\frac{R_n^{\star}}{R_n} \right)\left(r_n^{\star}+(t_1-r_n)\frac{R_n^{\star}}{R_n} \right)+\frac{2\hat{r}_nr_n^{\star}}{\tau R_n}\\
&+\frac{r_n(t_1-r_n)}{R_n^2}\left(1-2\frac{R_n^{\star}\hat{R}_n}{\tau R_n}\right)+\frac{t_1-2 r_n}{\tau R_n^2}\left( \hat{r}_n R_n^{\star}+\hat{R}_n r_n^{\star}\right)+\frac{nt_1-(2n+\alpha)r_n}{R_n}.
	\end{aligned}
	\end{equation}
		\end{lemma}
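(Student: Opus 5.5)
The first formula \eqref{an} is just \eqref{al-3} restated, so nothing needs to be proved there. For \eqref{bn}, start from \eqref{6t3hatrnrn*}. Its left-hand side contains the four products $\beta_nR_{n-1}$, $\beta_nR_{n-1}^{\star}$, $\beta_nR_{n-1}^{\star}\hat{R}_n$ (i.e.\ $\hat{R}_n$ times $\beta_nR_{n-1}^{\star}$) and $\beta_n R_n^{\star}\hat{R}_{n-1}$, together with the single bare term $\tau\beta_nR_n$. Express each quantity with index $n-1$ through index-$n$ quantities:
\begin{itemize}
\item $\beta_nR_{n-1}$ by \eqref{bnRnRn-1}, namely $r_n(r_n-t_1)/R_n$;
\item $\beta_nR_{n-1}^{\star}$ by \eqref{bnsRn-1};
\item $\beta_n\hat{R}_{n-1}$ by \eqref{Rh-3}.
\end{itemize}
After these substitutions the only occurrence of $\beta_n$ left is the term $\tau\beta_nR_n$. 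The identity therefore becomes linear in $\beta_n$.

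Solve for $\beta_n$ by moving everything else to the right and dividing by $\tau R_n$. This gives
\[
\beta_n=\frac{1}{\tau R_n}\Big[(r_n^{\star})^2-\tau(2n+\alpha)r_n+2nt_2+2\hat r_nr_n^{\star}-\tau\beta_nR_{n-1}-(R_n^{\star}+\hat R_n)\beta_nR_{n-1}^{\star}-R_n^{\star}\beta_n\hat R_{n-1}\Big],
\]
with the three $\beta_nR_{n-1}$-type products replaced as above. Using $2t_2=\tau t_1$, the terms $2nt_2-\tau(2n+\alpha)r_n-\tau r_n(r_n-t_1)/R_n$ give $\frac{nt_1-(2n+\alpha)r_n}{R_n}+\frac{r_n(t_1-r_n)}{R_n^2}$.

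The remaining pieces are grouped as follows.
\begin{itemize}
\item Combine $(r_n^{\star})^2$ with the $R_n^{\star}\beta_nR_{n-1}^{\star}$ part and with the $\rho$-part of $R_n^{\star}\beta_n\hat R_{n-1}$ coming from \eqref{Rh-3}. This should reproduce the factored product $\bigl(r_n^{\star}-r_n\frac{R_n^{\star}}{R_n}\bigr)\bigl(r_n^{\star}+(t_1-r_n)\frac{R_n^{\star}}{R_n}\bigr)$ multiplied by $\bigl(1-\frac{\rho R_n^{\star}}{\tau R_n}\bigr)/(\tau R_n)$. This is exactly the same factorisation that produced \eqref{betanRnrn} in the two-variable case.
\item The $\hat R_n\beta_nR_{n-1}^{\star}$ term contributes $\frac{\hat R_n}{\tau R_n}\bigl(\frac{r_n^{\star}(2r_n-t_1)}{R_n}-\frac{r_n(r_n-t_1)R_n^{\star}}{R_n^2}\bigr)$.
\item The $\hat r_n$-part of $R_n^{\star}\beta_n\hat R_{n-1}$ gives $\frac{R_n^{\star}}{\tau R_n^2}\bigl(\hat r_n(2r_n-t_1)+r_n(t_1-r_n)\frac{\hat R_n}{R_n}\bigr)$.
\end{itemize}
Collecting the last two bullets gives the term $\frac{t_1-2r_n}{\tau R_n^2}(\hat r_nR_n^{\star}+\hat R_nr_n^{\star})$ and the correction $-2\frac{R_n^{\star}\hat R_n}{\tau R_n}\cdot\frac{r_n(t_1-r_n)}{R_n^2}$. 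There is an apparent sign discrepancy in the $\frac{t_1-2r_n}{\tau R_n^2}$ term, since the raw substitutions produce $(2r_n-t_1)$ rather than $(t_1-2r_n)$. This has to be tracked carefully against the minus sign that arises from moving these terms across to the right-hand side.

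The main obstacle is this bookkeeping: checking that the $\rho$-dependent terms factor exactly into the product form, and that the $\hat{r}_n,\hat R_n$ cross terms assemble with the correct signs. I would first verify the factorisation by expanding both sides as polynomials in $R_n^{\star}/R_n$, and then do a final check by symbolic computation.
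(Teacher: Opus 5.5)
Your proposal is correct and is exactly the paper's argument: substitute \eqref{bnRnRn-1}, \eqref{bnsRn-1} and \eqref{Rh-3} into \eqref{6t3hatrnrn*}, which leaves an equation linear in $\beta_n$, and solve it. The sign you flag is not a real discrepancy: every contribution from $\hat{R}_n\beta_nR_{n-1}^{\star}$ and from $R_n^{\star}\beta_n\hat{R}_{n-1}$ picks up the same overall minus sign when moved to the right, which turns $(2r_n-t_1)$ into $(t_1-2r_n)$ and also produces the correction $-2\frac{R_n^{\star}\hat{R}_n}{\tau R_n}\cdot\frac{r_n(t_1-r_n)}{R_n^2}$, so the result agrees with \eqref{bn} term by term.
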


Moreover, the auxiliary quantities satisfy a system of first order difference equations that can be iterated in $n$.
	\begin{proposition}
			The auxiliary quantities satisfy the following difference equations
\begin{align}
 r_{n+1}=&-r_n+t_1-(2n+1+\alpha+R_n+R_n^{\star}+\hat{R}_n)R_n,\label{rn+1rnRn}\\
r_{n+1}^{\star}=&-r_n^{\star}+\tau R_n-(2n+1+\alpha+R_n+R_n^{\star}+\hat{R}_n)R_n^{\star}, \label{rn+1*rn*Rn}\\
\hat{r}_{n+1}=&-\hat{r}_n+\rho R_n^{\star}-(2n+1+\alpha+R_n+R_n^{\star}+\hat{R}_n)\hat{R}_n,\label{hatrn+1hatrnRn}
\end{align}
for $n\geq0$, and
\begin{equation}
\begin{aligned}
	\label{difequ5}
	 R_n&\left[(r_n^{\star}R_{n-1}-r_nR_{n-1}^{\star})(r_n^{\star}R_{n-1}-(r_n-t_1)R_{n-1}^{\star})\left(\frac{\rho}{\tau}R_{n-1}^{\star}-R_{n-1}\right)\right.\\&\left.+(2r_n-t_1)(\hat{r}_nR_{n-1}^{\star}+r_n^{\star}\hat{R}_{n-1})R_{n-1}^2+ r_n(r_n-t_1)(\tau R_{n-1}-2R_{n-1}^{\star}\hat{R}_{n-1})R_{n-1}\right.\\&\left.+( (2n+\alpha)\tau r_n-2nt_2-2\hat{r}_nr_n^{\star})R_{n-1}^3\right]=\tau r_n(t_1-r_n)R_{n-1}^3,
\end{aligned}
\end{equation}
\begin{equation}\label{difequ6}
	 r_n(r_n-t_1)R_{n-1}R_n^{\star}=[r_n^{\star}(2r_n-t_1)R_{n-1}+r_n(t_1-r_n)R_{n-1}^{\star}]R_n,
\end{equation}
\begin{equation}\label{difequ4}
\begin{aligned}
	\tau r_n(r_n-t_1)R_{n-1}^2\hat{R}_n=&R_n\left[\rho(r_n^{\star}R_{n-1}-r_nR_{n-1}^{\star})(r_n^{\star}R_{n-1}+(t_1-r_n)R_{n-1}^{\star})\right. \\ &\qquad\left.+\tau R_{n-1} \left(r_n(t_1-r_n)\hat{R}_{n-1}+(2r_n-t_1)\hat{r}_nR_{n-1}\right) \right],
\end{aligned}
\end{equation}	
for $n\geq1$,
with the initial conditions given by
\begin{gather*}
r_{0}=r_{0}^{\star}=\hat{r}_0=0,
\end{gather*}
and
\begin{gather*}
R_{0}=t_1\frac{\int_{0}^{+\infty}w(y;t_1,t_2,t_3)\frac{dy}{y}}{\int_{0}^{+\infty}w(y;t_1,t_2,t_3)dy},
~~ R_{0}^{\star}=2t_2\frac{\int_{0}^{+\infty}w(y;t_1,t_2,t_3)\frac{dy}{y^2}}{\int_{0}^{+\infty}w(y;t_1,t_2,t_3)dy},~~ \hat{R}_{0}=3t_3\frac{\int_{0}^{+\infty}w(y;t_1,t_2,t_3)\frac{dy}{y^3}}{\int_{0}^{+\infty}w(y;t_1,t_2,t_3)dy}.
\end{gather*}
		\end{proposition}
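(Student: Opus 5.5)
The three first-order recursions come straight from the $(S_1)$ identities. We substitute \eqref{an} for $\alpha_n$ in \eqref{rn+1rn}, \eqref{rn+1*rn*} and \eqref{hatrn+1hatrn}, and solve each for the quantity with index $n+1$. This gives \eqref{rn+1rnRn}, \eqref{rn+1*rn*Rn} and \eqref{hatrn+1hatrnRn} for all $n\ge0$, since $(S_1)$ holds from $n=0$ with $B_0=0$.

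For the remaining equations, which hold for $n\ge1$, we follow the two-variable case and eliminate $\beta_n$ using \eqref{bnRnRn-1}, i.e. $\beta_n=r_n(r_n-t_1)/(R_nR_{n-1})$.

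1. Inserting this into \eqref{bnRnRn-1*bnRn-1Rn*} and clearing $R_nR_{n-1}$ gives $r_n(r_n-t_1)(R_nR_{n-1}^{\star}+R_{n-1}R_n^{\star})=r_n^{\star}(2r_n-t_1)R_nR_{n-1}$. Rearranging yields \eqref{difequ6}, exactly as \eqref{Rnrndiffeq4} was obtained.

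2. Next we substitute the same $\beta_n$ into \eqref{RnhatRn-1hatRnRn-1Rn*Rn-1^*}. We then solve \eqref{difequ6} for $R_n^{\star}$ and use it to remove $R_n^{\star}$ from the product $R_n^{\star}R_{n-1}^{\star}$. After multiplying through by $R_{n-1}^2$ and collecting terms, this should yield \eqref{difequ4}, which is linear in $\hat R_n$.

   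As a consistency check, \eqref{difequ4} should be equivalent to \eqref{Rh-3} with $\beta_n$ replaced by $r_n(r_n-t_1)/(R_nR_{n-1})$. It is useful to note the factorization
   \[
   (r_n^{\star}R_{n-1}-r_nR_{n-1}^{\star})(r_n^{\star}R_{n-1}+(t_1-r_n)R_{n-1}^{\star}),
   \]
   which is $R_{n-1}^2$ times the quadratic factor appearing in \eqref{Rh-3}, after $R_n^{\star}/R_n$ is expressed through \eqref{difequ6}. The underlying identity is
   \[
   r_n^{\star}-r_n\frac{R_n^{\star}}{R_n}=\frac{t_1-r_n}{r_n-t_1}\cdot\frac{\cdots}{R_{n-1}}.
   \]
   More cleanly, \eqref{difequ6} gives
   \[
   \frac{R_n^{\star}}{R_n}=\frac{r_n^{\star}(2r_n-t_1)}{r_n(r_n-t_1)}-\frac{R_{n-1}^{\star}}{R_{n-1}},
   \]
   so $r_n^{\star}-r_nR_n^{\star}/R_n$ and $r_n^{\star}+(t_1-r_n)R_n^{\star}/R_n$ become, respectively, multiples of $(r_n^{\star}+(t_1-r_n)R^{\star}_{n-1}/R_{n-1})$ and $(r_n^{\star}-r_nR^{\star}_{n-1}/R_{n-1})$, up to the factors $(r_n-t_1)/r_n$ and $r_n/(r_n-t_1)$. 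This swap of roles is exactly what produces the symmetric product appearing in both \eqref{difequ4} and \eqref{difequ5}.

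3. Finally we substitute $\beta_n$ into \eqref{6t3hatrnrn*}. We eliminate $R_n^{\star}$ via \eqref{difequ6} and $\hat R_n$ via \eqref{difequ4}, then multiply by $\tau R_{n-1}^3$. This should produce \eqref{difequ5}: the factor $(\rho R_{n-1}^{\star}/\tau-R_{n-1})$ collects the two contributions of the quadratic product (one from $R_{n-1}^{\star}R_n^{\star}$, one via $\hat R_n$), and the term $\tau(R_n+R_{n-1})$ gives the remaining $r_n(r_n-t_1)\tau R_{n-1}$ pieces.

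Taken together, these equations determine $\{r_{n+1},r_{n+1}^{\star},\hat r_{n+1}\}$ from the index-$n$ data. At index $n+1$, \eqref{difequ6} is linear in $R_{n+1}^{\star}$ and \eqref{difequ4} is linear in $\hat R_{n+1}$ given $R_{n+1}$, while \eqref{difequ5} is linear in $R_{n+1}$ once the others are substituted. The system can therefore be iterated, and the initial data follow from the definitions with $P_0=1$, $P_{-1}=0$.

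The main obstacle is the algebra in step 3: verifying that the coefficients assemble into exactly the displayed form of \eqref{difequ5}. I would carry out that simplification with computer algebra, checking the result against the reduction $t_3\to0$ to \eqref{Rnrndiffeq3}.
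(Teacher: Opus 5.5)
Your proposal is correct and is essentially the paper's proof. The three recursions come from inserting \eqref{al-3} into \eqref{rn+1rn}--\eqref{hatrn+1hatrn}. Eliminating $\beta_n$ via \eqref{bnRnRn-1} gives \eqref{difequ6} together with \eqref{difeq7}--\eqref{difeq8}. Removing $R_n^{\star}$ from \eqref{difeq7} yields \eqref{difequ4}, and removing $R_n^{\star}$ and $\hat{R}_n$ from \eqref{difeq8} yields \eqref{difequ5}; the algebra does close exactly on the displayed forms.

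Your swap identity, which turns $\bigl(r_n^{\star}-r_n\frac{R_n^{\star}}{R_n}\bigr)\bigl(r_n^{\star}+(t_1-r_n)\frac{R_n^{\star}}{R_n}\bigr)$ into $R_{n-1}^{-2}(r_n^{\star}R_{n-1}-r_nR_{n-1}^{\star})(r_n^{\star}R_{n-1}+(t_1-r_n)R_{n-1}^{\star})$, is correct and nicely explains the structure. The only slip is bookkeeping: the net factor to clear in step 3 is $R_nR_{n-1}^3$, not $\tau R_{n-1}^3$.
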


\begin{proof}
			Plugging \eqref{al-3} into \eqref{rn+1rn}-\eqref{hatrn+1hatrn} yields \eqref{rn+1rnRn}-\eqref{hatrn+1hatrnRn}.
Using \eqref{bnRnRn-1} to eliminate $\beta_n$ in \eqref{bnRnRn-1*bnRn-1Rn*}-\eqref{6t3hatrnrn*}, we come to \eqref{difequ6}, and
\begin{align}\label{difeq7}
				r_n(r_n-t_1)(\tau (R_n\hat{R}_{n-1}+\hat{R}_nR_{n-1})+\rho R_n^{\star}R_{n-1}^{\star})=R_nR_{n-1}[\rho (r_n^{\star})^2+2\hat{r}_n(\tau r_n-t_2)],
			\end{align}
			\begin{equation}\label{difeq8}
			\begin{aligned}
				r_n(r_n-t_1)[\tau (R_n+R_{n-1})+&R_n^{\star}(R_{n-1}^{\star}+\hat{R}_{n-1})+R_{n-1}^{\star}\hat{R}_n]\\
				&=R_nR_{n-1}[(r_n^{\star})^2-\tau (2n+\alpha)r_n+2nt_2+2\hat{r}_nr_n^{\star}].
			\end{aligned}
			\end{equation}
			Eliminating $R_n^{\star}$ in \eqref{difeq7} using \eqref{difequ6} leads to \eqref{difequ4}. Using it and \eqref{difequ6} to eliminate $\hat{R}_n$ and $R_n^{\star}$ respectively in \eqref{difeq8},  we arrive at \eqref{difequ5}.		
		\end{proof}
\begin{remark}
Setting $n=0$ in \eqref{rn+1rnRn}-\eqref{hatrn+1hatrnRn} produces the values of $r_1,r_1^{\star}$ and $\hat{r}_1$. Inserting them into \eqref{difequ5} with $n=1$ gives us $R_1$. Using \eqref{difequ6} and \eqref{difequ4}, both with $n=1$, we find $R_1^{\star}$ and $\hat{R}_1$.
\end{remark}
\begin{remark}
			When $t_3\to0^+$, we have $\hat{R}_n\rightarrow0$ and $\hat{r}_n\rightarrow0$. Equations \eqref{an}-\eqref{rn+1*rn*Rn} and \eqref{difequ5}-\eqref{difequ6} are reduced to \eqref{alphanRn}-\eqref{Rnrndiffeq4}.
		\end{remark}

At the end of this section, we provide two expressions for $p(n,t_1,t_2,t_3)$ (i.e. the coefficient of $x^{n-1}$ in $P_n(x;t_1,t_2,t_3)$), which will be used in the next section. Specifically, from \eqref{al-3}, it follows that
	\begin{equation*}		 \sum_{j=0}^{n-1}\alpha_j=n(n+\alpha)+\sum_{j=0}^{n-1}\left(R_j+R_j^{\star}+\hat{R}_j\right).
	\end{equation*}
Combining it with  \eqref{bt-3}, in view of \eqref{a13}, we come to the desired results.	
\begin{lemma}
			We have
			\begin{align} p(n)&=-n(n+\alpha)-\sum_{j=0}^{n-1}\left(R_j+R_j^{\star}+\hat{R}_j\right)	 \label{pn-3}\\				 &=r_n+r_n^{\star}+\hat{r}_n-\beta_n.\label{p(n)3_2}
			\end{align}

		\end{lemma}		
		
		\subsection{Toda equation}
		Differentiating both sides of the following two orthogonality identities with respect to $t_1, t_2, t_3$,
		\begin{gather}
			 h_n(t_1,t_2,t_3)=\int_{0}^{+\infty}P_n^2(x;t_1,t_2,t_3)w(x;t_1,t_2,t_3)dx,\\
			 0=\int_{0}^{+\infty}P_n(x;t_1,t_2,t_3)P_{n-1}(x;t_1,t_2,t_3)w(x;t_1,t_2,t_3)dx,
		\end{gather}
 we obtain the differential relations given below.		
		\begin{lemma} \begin{enumerate}[\upshape(i)]
				\item The derivatives of $h_n$ are linked with $\{R_n,R_n^{\star}, \hat{R}_n\}$ by
				\begin{align}\label{Dh-3}
					t_1\partial_{t_1} \ln h_n=-R_n,\qquad
					2t_2\partial_{t_2} \ln h_n=&-R_n^{\star},\qquad
					3t_3\partial_{t_3} \ln h_n=-\hat{R}_n.
				\end{align}
				In view of $\beta_n=h_n/h_{n-1}$, we have
				\begin{align}\label{Dbt-3}
					t_1\partial_{t_1} \ln \beta_n=R_{n-1}-R_n,\quad
					2t_2\partial_{t_2} \ln \beta_n=R_{n-1}^{\star}-R_n^{\star},\quad
					3t_3\partial_{t_3} \ln \beta_n=\hat{R}_{n-1}-\hat{R}_n.
				\end{align}
				\item The derivatives of $p(n,t_1,t_2,t_3)$ are related to $\{r_n,r_n^{\star}, \hat{r}_n\}$ by
				\begin{align}\label{Dp-3}
					t_1\partial_{t_1} p(n)=r_n,\qquad\quad
					2t_2\partial_{t_2} p(n)=r_n^{\star},\qquad\quad
					3t_3\partial_{t_3} p(n)=\hat{r}_n.
				\end{align}
				Since $\alpha_n=p(n)-p(n+1)$, it follows that
				\begin{align}\label{Dal-3}
					t_1\partial_{t_1} \alpha_n=&r_n-r_{n+1},\qquad					 2t_2\partial_{t_2}\alpha_n=r_n^{\star}-r_{n+1}^{\star},\qquad					 3t_3\partial_{t_3}\alpha_n=\hat{r}_n-\hat{r}_{n+1}.
				\end{align}	
			\end{enumerate}
		\end{lemma}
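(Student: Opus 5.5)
We differentiate the two orthogonality identities directly with respect to each $t_k$, $k=1,2,3$, exactly as in the two-variable case. Note that $\partial_{t_k}w(x;t_1,t_2,t_3)=-x^{-k}w(x;t_1,t_2,t_3)$. We also use the fact that $\partial_{t_k}P_n(x)$ is a polynomial of degree at most $n-1$, since $P_n$ is monic, with leading coefficient $\partial_{t_k}p(n)$ in front of $x^{n-1}$.

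For (i), we differentiate $h_n=\int_0^{+\infty}P_n^2w\,dx$ with respect to $t_k$. This gives
\[
\partial_{t_k}h_n=2\int_0^{+\infty}P_n\,\partial_{t_k}P_n\,w\,dx-\int_0^{+\infty}P_n^2w\,\frac{dx}{x^k}.
\]
The first integral vanishes by orthogonality, since $\deg\partial_{t_k}P_n<n$. Multiplying the identity by $kt_k/h_n$ and comparing with the definitions of $R_n$, $R_n^{\star}$ and $\hat R_n$ (the prefactors $t_1$, $2t_2$, $3t_3$ match exactly), we obtain \eqref{Dh-3}. Since $\beta_n=h_n/h_{n-1}$ by \eqref{a12}, we have $\ln\beta_n=\ln h_n-\ln h_{n-1}$. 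Subtracting the instances of \eqref{Dh-3} at $n$ and $n-1$ then yields \eqref{Dbt-3}.

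For (ii), we differentiate $0=\int_0^{+\infty}P_nP_{n-1}w\,dx$ with respect to $t_k$. This produces three terms. The term $\int P_n\,\partial_{t_k}P_{n-1}\,w\,dx$ vanishes because $\deg\partial_{t_k}P_{n-1}<n-1$. In the term $\int \partial_{t_k}P_n\,P_{n-1}\,w\,dx$, only the $x^{n-1}$ coefficient of $\partial_{t_k}P_n$ contributes, and it gives $\partial_{t_k}p(n)\,h_{n-1}$. The remaining term is $-\int P_nP_{n-1}w\,x^{-k}dx$. Thus
\[
\partial_{t_k}p(n)\,h_{n-1}=\int_0^{+\infty}P_nP_{n-1}w\,\frac{dx}{x^k}.
\]
Multiplying by $kt_k/h_{n-1}$ and using the definitions of $r_n$, $r_n^{\star}$ and $\hat r_n$ gives \eqref{Dp-3}. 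Finally, \eqref{Dal-3} follows by applying $kt_k\partial_{t_k}$ to $\alpha_n=p(n)-p(n+1)$, which is \eqref{a13}.

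The only point needing care, and the main (mild) obstacle, is justifying differentiation under the integral sign and the vanishing of boundary contributions. The factor $e^{-t_3/x^3}$ with $t_3>0$ kills every negative power of $x$ at the origin, and $e^{-x}$ controls the behaviour at infinity. Hence the integrands $x^{-k}P_nP_mw$ and their $t$-derivatives are dominated locally uniformly in $(t_1,t_2,t_3)$, and dominated convergence applies. This argument holds for any real $t_1,t_2$, and for $\alpha>-1$.
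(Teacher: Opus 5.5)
Your proof is correct and is essentially the paper's argument: the paper simply states that differentiating $h_n=\int_0^{+\infty}P_n^2w\,dx$ and $0=\int_0^{+\infty}P_nP_{n-1}w\,dx$ with respect to $t_1,t_2,t_3$ yields the relations. You carry out exactly that computation, using $\partial_{t_k}w=-x^{-k}w$, orthogonality against $\partial_{t_k}P_n$ of degree at most $n-1$, and then differences in $n$ via $\beta_n=h_n/h_{n-1}$ and $\alpha_n=p(n)-p(n+1)$. Your justification of differentiation under the integral sign is a welcome addition that the paper omits; for completeness you could also note that the coefficients of $P_n$ are differentiable in $(t_1,t_2,t_3)$, since they are rational functions of the moments with nonvanishing denominator $D_n>0$.
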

Adding the three identities in \eqref{Dbt-3} and \eqref{Dal-3} separately, in view of \eqref{al-3} and \eqref{bt-3},  we arrive at the Toda-like equations for $\alpha_n$ and $\beta_n$.
		\begin{proposition}
			The recurrence coefficients satisfy the following partial differential-difference equations
			\begin{gather}
				 (t_1\partial_{t_1}+2t_2\partial_{t_2}+3t_3\partial_{t_3})\alpha_n=\beta_n-\beta_{n+1}+\alpha_n,\\
				 (t_1\partial_{t_1}+2t_2\partial_{t_2}+3t_3\partial_{t_3})\ln\beta_n=\alpha_{n-1}-\alpha_n+2,
			\end{gather}
			from which follow a second order Toda-like  equation for $\beta_n$
			\begin{equation}
			\begin{aligned}			 (t_1^2\partial_{t_1t_1}+2t_1t_2(\partial_{t_1t_2}+\partial_{t_2t_1})+4t_2^2\partial_{t_2t_2}+2t_2\partial_{t_2}+9t_3^2\partial_{t_3t_3}+3t_1t_3(\partial_{t_1t_3}+\partial_{t_3t_1})\\
+6t_2t_3(\partial_{t_2t_3}+\partial_{t_3t_2})+6t_3\partial_{t_3})\ln\beta_n=\beta_{n-1}-2\beta_n+\beta_{n+1}-2.
			\end{aligned}
			\end{equation}
		\end{proposition}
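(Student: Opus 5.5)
Write $\mathcal{D}:=t_1\partial_{t_1}+2t_2\partial_{t_2}+3t_3\partial_{t_3}$. I will follow exactly the two-variable argument of the proposition giving \eqref{alphaDbeta}--\eqref{betaDD}, with the three-variable identities in place of the two-variable ones.

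\textbf{First equation.} Summing the three relations in \eqref{Dal-3} gives
\[
\mathcal{D}\alpha_n=(r_n+r_n^{\star}+\hat r_n)-(r_{n+1}+r_{n+1}^{\star}+\hat r_{n+1}).
\]
Write \eqref{bt-3} for $n$ and for $n+1$ and subtract. The sums $\sum_{j=0}^{n-1}$ and $\sum_{j=0}^{n}$ differ by the single term $R_n+R_n^{\star}+\hat R_n$. Also $n(n+\alpha)-(n+1)(n+1+\alpha)=-(2n+1+\alpha)$. Hence
\[
\beta_n-\beta_{n+1}=-(2n+1+\alpha+R_n+R_n^{\star}+\hat R_n)+(r_n+r_n^{\star}+\hat r_n)-(r_{n+1}+r_{n+1}^{\star}+\hat r_{n+1}).
\]
By \eqref{al-3} the first bracket equals $\alpha_n$, so $\beta_n-\beta_{n+1}=-\alpha_n+\mathcal{D}\alpha_n$. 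This is the first equation.

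\textbf{Second equation.} Summing the three relations in \eqref{Dbt-3} gives
\[
\mathcal{D}\ln\beta_n=(R_{n-1}+R_{n-1}^{\star}+\hat R_{n-1})-(R_n+R_n^{\star}+\hat R_n).
\]
On the other hand, \eqref{al-3} gives $\alpha_{n-1}-\alpha_n=-2+(R_{n-1}+R_{n-1}^{\star}+\hat R_{n-1})-(R_n+R_n^{\star}+\hat R_n)$. Comparing the two gives $\mathcal{D}\ln\beta_n=\alpha_{n-1}-\alpha_n+2$.

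\textbf{Second order equation for $\beta_n$.} Apply $\mathcal{D}$ to the second equation, and use the first equation at indices $n-1$ and $n$:
\[
\mathcal{D}^2\ln\beta_n=(\beta_{n-1}-\beta_n+\alpha_{n-1})-(\beta_n-\beta_{n+1}+\alpha_n)=\beta_{n-1}-2\beta_n+\beta_{n+1}+\alpha_{n-1}-\alpha_n.
\]
Next replace $\alpha_{n-1}-\alpha_n$ by $\mathcal{D}\ln\beta_n-2$. Moving $\mathcal{D}\ln\beta_n$ to the left gives $(\mathcal{D}^2-\mathcal{D})\ln\beta_n=\beta_{n-1}-2\beta_n+\beta_{n+1}-2$.

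It remains to expand $\mathcal{D}^2-\mathcal{D}$. For $c_i=i$, the operator $c_it_i\partial_{t_i}$ applied to $c_jt_j\partial_{t_j}$ produces the second-order term $c_ic_jt_it_j\partial_{t_it_j}$, plus the first-order term $c_i^2t_i\partial_{t_i}$ when $j=i$. The first-order part of $\mathcal{D}^2$ is therefore $t_1\partial_{t_1}+4t_2\partial_{t_2}+9t_3\partial_{t_3}$. Subtracting $\mathcal{D}$ leaves $2t_2\partial_{t_2}+6t_3\partial_{t_3}$.

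The second-order part consists of the diagonal terms $t_1^2\partial_{t_1t_1}$, $4t_2^2\partial_{t_2t_2}$ and $9t_3^2\partial_{t_3t_3}$. The cross terms are $2t_1t_2(\partial_{t_1t_2}+\partial_{t_2t_1})$, $3t_1t_3(\partial_{t_1t_3}+\partial_{t_3t_1})$ and $6t_2t_3(\partial_{t_2t_3}+\partial_{t_3t_2})$. This is exactly the operator in the statement.

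Every step is elementary. The only point needing care is this bookkeeping of the first-order terms in $\mathcal{D}^2$, which is what produces the coefficients $2t_2\partial_{t_2}$ and $6t_3\partial_{t_3}$.
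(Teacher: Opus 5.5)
Your proof is correct and follows the paper's own argument: you sum the three relations in \eqref{Dal-3} and \eqref{Dbt-3}, compare with the differences obtained from \eqref{bt-3} and \eqref{al-3}, and then apply $\mathcal{D}$ to the second equation and substitute the first, exactly as in the two-variable case. Your bookkeeping of the first-order terms of $\mathcal{D}^2-\mathcal{D}$, which produces $2t_2\partial_{t_2}+6t_3\partial_{t_3}$, correctly supplies the operator expansion that the paper leaves implicit.
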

\subsection{Riccati-like Equations and PDEs}
\begin{lemma}
		The auxiliary quantities $\{R_n, R_n^{\star}, \hat{R}_n,r_n, r_n^{\star}, \hat{r}_n\}$ satisfy the following PDE system
		\begin{gather}
t_1\partial_{t_1}(R_n+R_n^{\star}+\hat{R}_n)=2r_n+(2n+1+\alpha+R_n+R_n^{\star}+\hat{R}_n)R_n-t_1,\label{t1RnRic}\\
2t_2\partial_{t_2}(R_n+R_n^{\star}+\hat{R}_n)=2r_n^{\star}+(2n+1+\alpha+R_n+R_n^{\star}+\hat{R}_n)R_n^{\star}-\tau R_n,\label{2t2RnRic}\\	
3t_3\partial_{t_3}(R_n+R_n^{\star}+\hat{R}_n)=2\hat{r}_n+(2n+1+\alpha+R_n+R_n^{\star}+\hat{R}_n)\hat{R}_n-\rho R_n^{\star},\label{3t3RnRic}\\
t_1\partial_{t_1}(r_n+r_n^{\star}+\hat{r}_n)=\Xi_n+r_n+\frac{2r_n(r_n-t_1)}{R_n},\label{t1rnric}\\
2t_2\partial_{t_2}(r_n+r_n^{\star}+\hat{r}_n)=\frac{R_n^{\star}}{R_n}\Xi_n+r_n^{\star}+\frac{r_n^{\star}(2r_n-t_1)}{R_n},\label{2t2rnric}\\
3t_3\partial_{t_3}(r_n+r_n^{\star}+\hat{r}_n)=\frac{\hat{R}_n}{R_n}\Xi_n+\hat{r}_n +\frac{\hat{r}_n(2r_n-t_1)}{R_n} +\kappa_n,\label{3t3rnric}
		\end{gather}
		where
\begin{align*}			 \Xi_n:=&\kappa_n\left(\frac{R_n^{\star}}{\tau}-\frac{R_n}{\rho}\right)+2r_n(t_1-r_n)\frac{R_n^{\star}\hat{R}_n}{\tau R_n^2}+\frac{2r_n-t_1}{\tau R_n}\left(\hat{r}_nR_n^{\star}+\hat{R}_nr_n^{\star} \right)\\
&-\frac{2\hat{r}_nr_n^{\star}}{\tau}+(2n+\alpha)r_n-nt_1,\\
\kappa_n:=&\frac{\rho}{\tau}\left(\frac{r_n^{\star}}{R_n}-\frac{r_nR_n^{\star}}{R_n^2} \right)\left(r_n^{\star}-\frac{(r_n-t_1)R_n^{\star}}{R_n} \right).
\end{align*}
\end{lemma}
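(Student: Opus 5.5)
The proof runs parallel to the one for the two-variable Riccati system \eqref{rnRD}--\eqref{Ric2}: start from the differential relations \eqref{Dbt-3} and \eqref{Dal-3}, and remove every quantity with index $n\pm1$ using the difference identities of Section 5.1.

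\textbf{First three equations.} Insert \eqref{an} into \eqref{Dal-3}, so that $\alpha_n-(2n+1+\alpha)=R_n+R_n^{\star}+\hat R_n$. This gives
\[
t_1\partial_{t_1}(R_n+R_n^{\star}+\hat R_n)=r_n-r_{n+1},
\]
together with the analogous identities for $2t_2\partial_{t_2}$ and $3t_3\partial_{t_3}$. Then eliminate $r_{n+1}$, $r_{n+1}^{\star}$ and $\hat r_{n+1}$ by \eqref{rn+1rnRn}--\eqref{hatrn+1hatrnRn}. This yields \eqref{t1RnRic}--\eqref{3t3RnRic} directly; for instance, $r_n-r_{n+1}=2r_n-t_1+\alpha_nR_n$.

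\textbf{Last three equations: reduction.} Write \eqref{Dbt-3} as $t_1\partial_{t_1}\beta_n=\beta_nR_{n-1}-\beta_nR_n$, and similarly for the other two variables. Replace $\beta_n$ on the left by \eqref{p(n)3_2}, i.e. $\beta_n=r_n+r_n^{\star}+\hat r_n-p(n)$. With \eqref{Dp-3} the left sides become
\[
t_1\partial_{t_1}(r_n+r_n^{\star}+\hat r_n)-r_n,
\]
and the analogous expressions ending in $-r_n^{\star}$ and $-\hat r_n$. On the right, substitute:
\begin{itemize}
\item $\beta_nR_{n-1}$ from \eqref{bnRnRn-1},
\item $\beta_nR_{n-1}^{\star}$ from \eqref{bnsRn-1},
\item $\beta_n\hat R_{n-1}$ from \eqref{Rh-3}.
\end{itemize}
The only remaining unknown on the right is then $\beta_n$ itself, multiplied by $R_n$, $R_n^{\star}$ or $\hat R_n$. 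Replace it by the closed-form expression \eqref{bn}.

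\textbf{Last three equations: matching the stated form.} It remains to reorganize the results into the shape with $\Xi_n$ and $\kappa_n$.
\begin{itemize}
\item Note that the first product term in \eqref{bn} equals $\kappa_n\bigl(\tfrac{1}{\rho}-\tfrac{R_n^{\star}}{\tau R_n}\bigr)$ up to the factors defining $\kappa_n$. Check that
\[
-\beta_nR_n+\frac{r_n(r_n-t_1)}{R_n}=\Xi_n+\frac{2r_n(r_n-t_1)}{R_n}.
\]
This gives \eqref{t1rnric}.
\item For the $t_2$ equation, verify that $-\beta_nR_n^{\star}$ plus \eqref{bnsRn-1} equals $\frac{R_n^{\star}}{R_n}\Xi_n+\frac{r_n^{\star}(2r_n-t_1)}{R_n}$. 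Here the terms $\pm r_n(r_n-t_1)R_n^{\star}/R_n^2$ recombine.
\item For the $t_3$ equation, the last term of \eqref{Rh-3}, namely $\frac{\rho}{\tau R_n}(\cdots)(\cdots)$, is exactly $\kappa_n$. The rest recombines with $-\beta_n\hat R_n$ into $\frac{\hat R_n}{R_n}\Xi_n+\frac{\hat r_n(2r_n-t_1)}{R_n}$.
\end{itemize}

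\textbf{Main obstacle.} The only nontrivial point is the algebraic bookkeeping showing that $R_n\beta_n$ from \eqref{bn} matches $-\Xi_n-r_n(r_n-t_1)/R_n$ after expansion. In particular, the $\kappa_n$ terms must produce the combination $\kappa_n\bigl(\tfrac{R_n^{\star}}{\tau}-\tfrac{R_n}{\rho}\bigr)$, with the correct sign of $(r_n-t_1)$ inside $\kappa_n$. I would first verify this identity symbolically (e.g. in Mathematica) and then present the three reductions as above.
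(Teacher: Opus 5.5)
Your proposal is correct and follows the paper's route. The paper gives no separate proof of this lemma; it is the $m=3$ analogue of the proof of \eqref{rnRD}--\eqref{Ric2}, using \eqref{Dal-3}, \eqref{an} and \eqref{rn+1rnRn}--\eqref{hatrn+1hatrnRn} for the first three equations, and \eqref{Dbt-3}, \eqref{p(n)3_2}, \eqref{Dp-3}, \eqref{bnRnRn-1}, \eqref{bnsRn-1}, \eqref{Rh-3} and \eqref{bn} for the last three.

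The bookkeeping you flag does close. The first term of \eqref{bn} is exactly $\kappa_n\bigl(\frac{1}{\rho}-\frac{R_n^{\star}}{\tau R_n}\bigr)$, so $\beta_nR_n=-\Xi_n+\frac{r_n(t_1-r_n)}{R_n}$. The $t_2$ and $t_3$ equations then follow by multiplying this identity by $R_n^{\star}/R_n$ and $\hat R_n/R_n$ respectively.
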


\begin{remark}
Solving $r_n,r_n^{\star}$ and $\hat{r}_n$ from \eqref{t1RnRic}-\eqref{3t3RnRic}, and substituting the obtained expressions into \eqref{t1rnric}-\eqref{3t3rnric}, we come to the system of three PDEs satisfied by $R_n,R_n^{\star}$ and $\hat{R}_n$. Since the equations are too complicated, we do not display them here.
\end{remark}

Define
		\begin{equation}
			H_n(t_1,t_2,t_3):=(t_1 \partial_{t_1}+2t_2 \partial_{t_2}+3t_3 \partial_{t_3})\ln D_n(t_1,t_2,t_3).
		\end{equation}
Recalling that $D_n=\prod_{j=0}^{n-1}h_j$, by using \eqref{Dh-3} and \eqref{pn-3}, we find	
\begin{equation}\label{3Hnpn}				 H_n(t_1,t_2,t_3)=n(n+\alpha)+p(n,t_1,t_2),
\end{equation}
which, in view of \eqref{Dp-3}, gives us			
\begin{equation}\label{DHn-3}
t_1 \partial_{t_1}H_n=r_n,\qquad\qquad
2t_2 \partial_{2t_2}H_n=r_n^{\star},\qquad\qquad
3t_3 \partial_{3t_3}H_n=\hat{r}_n.
			\end{equation}
		
Inserting  \eqref{p(n)3_2} into \eqref{3Hnpn} results in
		\begin{align}
			 \beta_{n}&=r_n+r_n^{\star}+\hat{r}_n-H_n+n(n+\alpha)\label{3bnHn_1}\\
			&=(t_1 \partial_{t_1}+2t_2 \partial_{t_2}+3t_3 \partial_{t_3})H_n-H_n+n(n+\alpha),\label{3bnHn_2}
		\end{align}
		where the second identity is due to \eqref{DHn-3}.
		Combining \eqref{3bnHn_1} with  \eqref{bn} leads to an expression for $H_n$ in terms of the auxiliary quantities. Replacing $r_n$, $r_n^{\star}$ and $\hat{r}_n$ in this expression  using \eqref{t1RnRic}-\eqref{3t3RnRic}, we express $H_n$ in terms of $\{R_n,R_n^{\star},\hat{R}_n\}$ and their first order derivatives.		
		
Now we turn to finding the expressions for $\{R_n,R_n^{\star},\hat{R}_n\}$ in terms of $H_n$ and its derivatives, and finally derive the PDE for $H_n$. Using \eqref{bnRnRn-1}, \eqref{bnsRn-1} and \eqref{Rh-3} to get rid of $R_{n-1}, R_{n-1}^{\star}$ and $\hat{R}_{n-1}$ in \eqref{Dbt-3} yields
		\begin{gather}
			t_1 \partial_{t_1} \beta_n=\frac{r_n(r_n-t_1)}{R_n}-\beta_{n}R_n,\label{bnrnRn}\\
			2t_2 \partial_{t_2}\beta_n=\frac{r_n^{\star}(2r_n-t_1)}{R_n}+\frac{r_n(t_1-r_n)R_n^{\star}}{R_n^2}-\beta_nR_n^{\star},\label{bnrn*Rn*}\\
			3t_3 \partial_{t_3}\beta_n=\frac{\hat{r}_n(2r_n-t_1)}{R_n}+\frac{r_n(t_1-r_n)\hat{R}_n}{R_n^2}+\frac{\rho}{\tau R_n}\left(r_n^{\star}-\frac{r_nR_n^{\star}}{R_n} \right)\left(r_n^{\star}+\frac{(t_1-r_n)R_n^{\star}}{R_n} \right)-\beta_n\hat{R}_n.\label{bnhatrnRn}
		\end{gather}
Noting that equations \eqref{bnrnRn}-\eqref{bnrn*Rn*} and the first two identities in \eqref{Dbt-3} are identical in form to \eqref{eq2Rn}-\eqref{eq2Rn*} and \eqref{lnbetaD1} respectively, we know that the expressions for $R_n$ and $R_n^{\star}$ given by \eqref{Rnsol} and  \eqref{Rn*sol} still hold.
		Solving $\hat{R}_n$ from \eqref{bnhatrnRn} yields
		\begin{equation}\label{3hatRnSolv}
			 \hat{R}_n=\frac{\hat{r}_n(2r_n-t_1)+\frac{\rho}{\tau}\left(r_n^{\star}-\frac{r_nR_n^{\star}}{R_n} \right)\left(r_n^{\star}-\frac{(r_n-t_1)R_n^{\star}}{R_n} \right)-\left(3t_3 \partial_{t_3}\beta_n \right)R_n}{\frac{r_n(r_n-t_1)}{R_n}+\beta_nR_n}.
		\end{equation}
Replacing the factor $\frac{r_n(r_n-t_1)}{R_n}$ in the denominator using \eqref{bnrnRn}, substituting \eqref{Rnsol} and  \eqref{Rn*sol} into the resulting expression to get rid of $R_n$ and $R_n^{\star}$, then replacing $\{r_n,r_n^{\star},\hat{r}_n\}$ using \eqref{DHn-3}, we arrive at the PDE for $H_n$. Since the equation is complicated, we omit its explicit form.

\begin{remark}\label{R1-3}
Assuming $t_1\rightarrow0, t_2\rightarrow0, t_3\rightarrow0^+$ and $n\rightarrow\infty$ such that $s_1=2nt_1,s_2=4n^2t_2$ and $ s_3=8n^3t_3$ are fixed, via an argument similar to the one used in Section \ref{dsRH}, we obtain the limiting equations for the double scaled $\{R_n, R_n^{\star}, \hat{R}_n\}$  and $H_n$
from their finite-dimensional ones.
\end{remark}

		\section{Generalization to the Laguerre Weight with $m$ Time Variables}
		Now we consider the weight function
		\begin{align}\label{wm}
			w(x; \vec{t}\:)=x^{\al}{\rm exp}\left(-x-\sum_{k=1}^m\frac{t_k}{x^k}\right),\quad x\in[0,+\infty), ~\alpha>-1,
		\end{align}
where $\vec{t}=(t_1,t_2,\dots,t_m)$ with $t_m>0$ and $t_i\in\mathbb{R}\setminus\{0\}$ for $i=1,2,\dots,m-1$. The factor $\sum_{k=1}^mt_k/x^k$ with $t_m>0$ makes $w(x;\vec{t}\:)$ vanish infinitely fast as $x\rightarrow0^+$. We use the same notation $D_n, P_n, h_n, \alpha_n,\beta_n, p(n)$ as in the $m=2$ case, but now they depend on $\vec{t}$. The properties of the monic orthogonal polynomials, presented in the Introduction, still hold.
		
	We have
			\begin{align*}
				v(x)&=-\ln w(x;\vec{t}\:)=-\al\:\ln\:x +x +\sum_{k=1}^m\frac{t_k}{x^k},\\
				v'(x)&=-\frac{\alpha}{x} +1 -\sum_{k=1}^m\frac{kt_k}{x^{k+1}},
			\end{align*}
so that
			\begin{equation*}
	 zv'(z)-yv'(y)=z-y+\sum_{k=1}^{m}\frac{kt_k(z^k-y^k)}{z^ky^k}.
			\end{equation*}
Since
			\begin{equation*}
				 z^k-y^k=(z-y)\sum_{j=1}^{k}z^{k-j}y^{j-1},
			\end{equation*}
		it follows that
\begin{align*}				 \frac{zv'(z)-yv'(y)}{z-y}&=1+\sum_{k=1}^{m}\left(kt_k\sum_{j=1}^{k}\frac{1}{z^j\cdot y^{k+1-j}}\right)\\
				 &=1+\sum_{\ell=1}^{m}\frac{1}{z^{\ell}}\cdot\left( \sum_{i=1}^{m+1-\ell} \frac{(\ell-1+i)t_{\ell-1+i}}{y^i} \right).
\end{align*}
			Substituting the second identity into \eqref{defAn}-\eqref{defBn}, we obtain the expressions for  $A_n$ and $B_n$.
		\begin{lemma}
			We have
			\begin{align}
				A_n(z)=&\frac{1}{z} + \sum_{\ell=2}^{m+1}\left(\frac{1}{z^\ell}\cdot\sum_{i=1}^{m+2-\ell}\frac{(\ell-2+i)t_{\ell-2+i}}{it_i}R_{n,i}\right),\label{An-m}\\
				 B_n(z)=&-\frac{n}{z}+\sum_{\ell=2}^{m+1}\left(\frac{1}{z^\ell}\cdot\sum_{i=1}^{m+2-\ell}\frac{(\ell-2+i)t_{\ell-2+i}}{it_i}r_{n,i}\right),\label{Bn-m}
			\end{align}
			where $\{R_{n,i},r_{n,i},i=1,2,\dots,m\}$ are defined by
			\begin{equation}\label{mdefRr}
				\begin{aligned}				 R_{n,i}(\vec{t}\:):=&\frac{it_i}{h_n}\:\int_{0}^{\infty}\frac{dy}{y^i}P_n^2(y;\vec{t}\:)w(y;\vec{t}\:),\\
					 r_{n,i}(\vec{t}\:):=&\frac{it_i}{h_{n-1}}\:\int_{0}^{\infty}\frac{dy }{y^i}P_n(y;\vec{t}\:)P_{n-1}(y;\vec{t}\:)w(y;\vec{t}\:).
				\end{aligned}
			\end{equation}
		\end{lemma}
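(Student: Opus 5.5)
Substituting the double-sum form of $\frac{zv'(z)-yv'(y)}{z-y}$ derived just before the lemma into the definitions \eqref{defAn}--\eqref{defBn} gives the result by linearity of the integrals. For $A_n$, I would write
\[
A_n(z)=\frac{1}{z}\cdot\frac{1}{h_n}\int_0^{+\infty}\Bigl(1+\sum_{\ell=1}^{m}\frac{1}{z^{\ell}}\sum_{i=1}^{m+1-\ell}\frac{(\ell-1+i)t_{\ell-1+i}}{y^i}\Bigr)P_n^2(y)w(y)\,dy .
\]
The constant term contributes $\frac{1}{z}\cdot\frac{h_n}{h_n}=\frac1z$. Each remaining term involves $\frac{1}{h_n}\int_0^{+\infty}y^{-i}P_n^2w\,dy$, which by \eqref{mdefRr} equals $R_{n,i}/(it_i)$. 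This step is legitimate since $t_i\neq0$ for $1\le i\le m$ by hypothesis. The integrals converge because $w(\cdot;\vec t\,)$ vanishes faster than any power at $0^+$, as $t_m>0$. This produces the coefficient $\frac{1}{z^{\ell+1}}\sum_{i=1}^{m+1-\ell}\frac{(\ell-1+i)t_{\ell-1+i}}{it_i}R_{n,i}$.

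Next I would shift the index, $\ell\mapsto\ell-1$, so that $\ell$ runs from $2$ to $m+1$. The inner range becomes $i=1,\dots,m+2-\ell$ and the weight becomes $(\ell-2+i)t_{\ell-2+i}/(it_i)$, which is exactly \eqref{An-m}.

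For $B_n$ the argument is the same with $P_n^2/h_n$ replaced by $P_nP_{n-1}/h_{n-1}$. The one difference is the constant term: $\frac{1}{h_{n-1}}\int_0^{+\infty}P_nP_{n-1}w\,dy=0$ by orthogonality \eqref{or}, so the only $1/z$ contribution is the $-n/z$ already present in \eqref{defBn}. The remaining terms give $r_{n,i}$ in place of $R_{n,i}$, and the same reindexing yields \eqref{Bn-m}.

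As a sanity check, I would compare with the case $m=2$. There \eqref{An-m} gives $\frac{R_{n,1}+R_{n,2}}{z^2}+\frac{2t_2}{t_1}\frac{R_{n,1}}{z^3}$, matching \eqref{An-2}. The only step requiring care is the index bookkeeping; there is no real obstacle.
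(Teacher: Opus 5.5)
Your proposal is correct and matches the paper's argument, which simply substitutes the double-sum expansion of $\frac{zv'(z)-yv'(y)}{z-y}$ into \eqref{defAn}--\eqref{defBn}. Your details on the $1/z$ term, orthogonality removing the constant term in $B_n$, the reindexing $\ell\mapsto\ell-1$, the need for $t_i\neq0$ and convergence at the origin only make explicit what the paper leaves implicit.
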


		\subsection{Difference equations}
Plugging \eqref{An-m}-\eqref{Bn-m} into $(S_1)$ given in the Introduction, by equating the coefficients of $z^{-j}  (j=1,2,\dots, m+1)$ in the resulting equation, we obtain
\begin{align}
&z^{-1}:&\alpha_{n}=2n+1+\alpha+\sum_{i=1}^m R_{n,i},\label{mS1eq1}\\
&z^{-(m+1)}:&r_{n+1,1}+r_{n,1}+\al_nR_{n,1}-t_1=0,\label{mS1eq4}
\end{align}
and for $\ell=2,\dots,m$,
			\begin{equation}\label{mS1eq2}					 z^{-\ell}:\quad\sum_{i=1}^{m+2-\ell}\frac{(\ell-2+i)t_{\ell-2+i}}{it_i}(r_{n+1,i}+r_{n,i}+\al_n R_{n,i})					 =\sum_{i=1}^{m+1-\ell}\frac{(\ell-1+i)t_{\ell-1+i}}{it_i}R_{n,i}
					+(\ell-1)t_{\ell-1}.
				\end{equation}
For $\ell=m$ in \eqref{mS1eq2}, with the aid of \eqref{mS1eq4}, we get
			\begin{equation*}
	 r_{n+1,2}+r_{n,2}+\al_nR_{n,2}-\tau R_{n,1}=0,
			\end{equation*}
where $\tau=2t_2/t_1$. When $\ell=m-1$ in \eqref{mS1eq2}, in light of the above equation and \eqref{mS1eq4}, we find
\begin{equation*} r_{n+1,3}+r_{n,3}+\al_nR_{n,3}-\rho R_{n,2}=0,
			\end{equation*}
where $\rho=3t_3/(2t_2)$.
We proceed in this manner and finally obtain
			\begin{equation}\label{mS1eq5}
				 r_{n+1,j}+r_{n,j}+\al_nR_{n,j}-\frac{jt_j R_{n,j-1}}{(j-1)t_{j-1}}=0,
			\end{equation}
			for $j=2,\dots,m$.

Inserting \eqref{An-m}-\eqref{Bn-m} into $(S_2')$, by comparing the coefficients of $z^{-j} (j=2,3,\dots,2m+2)$ on its both sides, we are led to
				\begin{align}
&z^{-2}:&					 \beta_n =n(n+\alpha)+\sum_{i=1}^{m}\Big(\sum_{j=0}^{n-1}R_{j,i}+r_{n,i} \Big),
				\end{align}
				\begin{align*}
&z^{-3}:\qquad&			 \sum_{i=1}^{m-1}\frac{(1+i)t_{1+i}}{it_i}\Big(\sum_{j=0}^{n-1}R_{j,i}+r_{n,i} \Big)-(2n+\alpha)\sum_{i=1}^{m}r_{n,i}+nt_1=\beta_n\cdot \sum_{i=1}^{m}(R_{n-1,i}+R_{n,i}),
				\end{align*}
$z^{-\ell}$ for
$4 \leq \ell\leq m+1$:
\begin{equation}
\begin{aligned} &\sum_{k=2}^{\ell-2}\left(-(k-1)t_{k-1}+\sum_{i=1}^{m+2-k}\frac{(k-2+i)t_{k-2+i}}{it_i}r_{n,i} \right)\left(\sum_{i=1}^{m+2-\ell+k}\frac{(\ell-k-2+i)t_{\ell-k-2+i}}{it_i}r_{n,i} \right)\\					 &+\sum_{i=1}^{m+2-\ell}\frac{(\ell-2+i)t_{\ell-2+i}}{it_i}\Big(\sum_{j=0}^{n-1}R_{j,i}+r_{n,i} \Big)-(2n+\alpha)\sum_{i=1}^{m+3-\ell}\frac{(\ell-3+i)t_{\ell-3+i}}{it_i}r_{n,i}+n(\ell-2)t_{\ell-2}\\&=\beta_n \left[ \sum_{i=1}^{m+3-\ell}\frac{(\ell-3+i)t_{\ell-3+i}}{it_i}(R_{n-1,i}+R_{n,i})\right.\\
&\quad \left.\qquad+\sum_{k=2}^{\ell-2}\left( \sum_{i=1}^{m+2-k}\frac{(k-2+i)t_{k-2+i}}{it_i} R_{n,i}\right) \left(\sum_{i=1}^{m+2-\ell+k}\frac{(\ell-k-2+i)t_{\ell-k-2+i}}{it_i}  R_{n-1,i}\right)  \right],
\end{aligned}
\end{equation}
$z^{-(m+2)}:$
\begin{equation}\label{S2'eq4-m}					
\begin{aligned}					 &\sum_{k=2}^{m}\left(-(k-1)t_{k-1}+\sum_{i=1}^{m+2-k}\frac{(k-2+i)t_{k-2+i}}{it_i}r_{n,i} \right)\cdot \left(\sum_{i=1}^{k}\frac{(m-k+i)t_{m-k+i}}{it_i}r_{n,i} \right)\\
&\qquad-(2n+\alpha)\frac{mt_m}{t_1}r_{n,1}+n\cdot mt_m\\
&=\beta_n \left[\sum_{k=2}^{m}\left(\sum_{i=1}^{m+2-k}\frac{(k-2+i)t_{k-2+i}}{it_i}R_{n,i} \right) \cdot \left( \sum_{i=1}^{k}\frac{(m-k+i)t_{m-k+i}}{it_i}R_{n-1,i}\right)\right.\\
&\left.\qquad\qquad+\frac{mt_m}{t_1}   (R_{n-1,1}+R_{n,1}) \right],
\end{aligned}
\end{equation}
$z^{-\ell}$ for $m+3\leq \ell \leq2m+1$:
\begin{equation}\label{S2'eq5-m}
					\begin{aligned}						 &\sum_{k=\ell-(m+1)}^{m+1}\left(-(k-1)t_{k-1}+\sum_{i=1}^{m+2-k}\frac{(k-2+i)t_{k-2+i}}{it_i}r_{n,i} \right) \left(\sum_{i=1}^{m+2-\ell+k}\frac{(\ell-k-2+i)t_{\ell-k-2+i}}{it_i}r_{n,i} \right)\\&=\beta_n \sum_{k=\ell-(m+1)}^{m+1}\left(\sum_{i=1}^{m+2-k}\frac{(k-2+i)t_{k-2+i}}{it_i}R_{n,i} \right)\left(\sum_{i=1}^{m+2-\ell+k}\frac{(\ell-k-2+i)t_{\ell-k-2+i}}{it_i}R_{n-1,i} \right),
\end{aligned}
\end{equation}
and
\begin{align}\label{S2'eq6-m}
z^{-(2m+2)}:\qquad\qquad \qquad r_{n,1}(r_{n,1}-t_1)=\beta_nR_{n,1}R_{n-1,1}.
\end{align}

\begin{remark}\label{btRn-1}
Using the $m$ equations given by \eqref{S2'eq5-m}-\eqref{S2'eq6-m}, following a similar procedure to that for $m=3$, we express $\beta_nR_{n-1,i}$ by $\{R_{n,j},r_{n,j}, j=1,\dots,i\}$ for $i=1,\dots,m$. Plugging these $m$ expressions into \eqref{S2'eq4-m} gives $\beta_n$ in terms of $\{R_{n,j},r_{n,j}, j=1,\dots,m\}$.

In addition, by inserting \eqref{mS1eq1} into \eqref{mS1eq4} and \eqref{mS1eq5}, and combining the obtained equations with \eqref{S2'eq4-m}-\eqref{S2'eq6-m}, we arrive at a system of differences equations for $\{R_{n,j},r_{n,j}, j=1,\dots,m\}$, which can be iterated in $n$.
\end{remark}

		\subsection{Toda-like Equations and PDEs}
		Differentiating both sides of
		\begin{align*}		 h_n(\vec{t}\:)=\int_{0}^{+\infty}P_n^2(x;\vec{t}\:)w(x;\vec{t}\:)dx,\\	 0=\int_{0}^{+\infty}P_n(x;\vec{t}\:)P_{n-1}(x;\vec{t}\:)w(x;\vec{t}\:)dx
		\end{align*}
with respect to $t_i$ for $i=1,\dots,m$, we get
				\begin{align}
					it_i\partial_{t_i}\ln h_n=&-R_{n,i},\label{Dhp-m}\\
it_i\partial_{t_i}p(n,\vec{t}\:)=&r_{n,i}.\nonumber
				\end{align}
In view of $\beta_n=h_n/h_{n-1}$ and $\alpha_n=p(n,\vec{t}\:)-p(n+1,\vec{t}\:)$, it follows that
				\begin{align}
					it_i\partial_{t_i}\ln \beta_n= &R_{n-1,i}-R_{n,i},\label{Dbtal-m}\\
it_i\partial_{t_i}\alpha_n=&r_{n,i}-r_{n+1,i},\nonumber
				\end{align}
for $i=1,\dots,m$.

		\begin{proposition}
		The recurrence coefficients satisfy the following partial differential-difference equations
			\begin{align*}		 \sum_{i=1}^{m}it_i\partial_{t_i}\alpha_n=\beta_n-\beta_{n+1}+\alpha_n,\\
				\sum_{i=1}^{m}it_i\partial_{t_i}\ln \beta_n=\alpha_{n-1}-\alpha_n+2,
			\end{align*}
from which follow a second order Toda-like equation for $\beta_n:$
			\begin{equation*}
				\Big( \sum_{i=1}^{m}(it_i)^2\partial_{t_it_i}+\sum_{1\leq i<k\leq m}ikt_it_k(\partial_{t_it_k}+\partial_{t_kt_i})+\sum_{i=1}^{m}i(i-1) t_i\partial_{t_i} \Big) \ln \beta_n=\beta_{n-1}-2\beta_n+\beta_{n+1}-2.
			\end{equation*}
		\end{proposition}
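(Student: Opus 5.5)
The plan is to follow the two-variable argument of Proposition for \eqref{alphaDbeta}--\eqref{betaDD}. Introduce the Euler-type operator $\mathcal{D}:=\sum_{i=1}^m it_i\partial_{t_i}$. For the first equation, sum the second relation below \eqref{Dbtal-m} over $i$:
\[\mathcal{D}\alpha_n=\sum_{i=1}^m(r_{n,i}-r_{n+1,i}).\]
Next take the $z^{-2}$ identity for $\beta_n$ at indices $n$ and $n+1$ and subtract. The sums $\sum_{j=0}^{n-1}R_{j,i}$ and $\sum_{j=0}^{n}R_{j,i}$ differ by $R_{n,i}$, and $n(n+\alpha)-(n+1)(n+1+\alpha)=-(2n+1+\alpha)$. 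Hence
\[\beta_n-\beta_{n+1}=2n+1+\alpha+\sum_i R_{n,i}+\sum_i(r_{n,i}-r_{n+1,i})-2\Big(2n+1+\alpha+\sum_iR_{n,i}\Big).\]
More carefully, $\beta_{n+1}-\beta_n=2n+1+\alpha+\sum_iR_{n,i}+\sum_i(r_{n+1,i}-r_{n,i})=\alpha_n+\sum_i(r_{n+1,i}-r_{n,i})$ by \eqref{mS1eq1}. Combining this with the previous display gives $\mathcal{D}\alpha_n=\beta_n-\beta_{n+1}+\alpha_n$.

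For the second equation, sum \eqref{Dbtal-m} over $i$ to get $\mathcal{D}\ln\beta_n=\sum_i(R_{n-1,i}-R_{n,i})$. From \eqref{mS1eq1}, $\alpha_{n-1}-\alpha_n=-2+\sum_i(R_{n-1,i}-R_{n,i})$, and the claim follows.

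For the Toda-like equation, apply $\mathcal{D}$ to the second equation and use the first equation at $n$ and $n-1$:
\[\mathcal{D}^2\ln\beta_n=\beta_{n-1}-2\beta_n+\beta_{n+1}+\alpha_{n-1}-\alpha_n.\]
Replacing $\alpha_{n-1}-\alpha_n$ by $\mathcal{D}\ln\beta_n-2$ gives $(\mathcal{D}^2-\mathcal{D})\ln\beta_n=\beta_{n-1}-2\beta_n+\beta_{n+1}-2$. It remains to expand the operator. We have
\[\mathcal{D}^2=\sum_{i,k}it_i\partial_{t_i}(kt_k\partial_{t_k})=\sum_{i,k}ik\,t_it_k\partial_{t_it_k}+\sum_i i^2t_i\partial_{t_i},\]
so $\mathcal{D}^2-\mathcal{D}=\sum_i(it_i)^2\partial_{t_it_i}+\sum_{i<k}ikt_it_k(\partial_{t_it_k}+\partial_{t_kt_i})+\sum_i i(i-1)t_i\partial_{t_i}$, which is the stated operator. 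The only point needing care is this expansion: the coefficient $i^2-i$ of the first-order terms, consistent with $2t_2\partial_{t_2}$ and $6t_3\partial_{t_3}$ in the cases $m=2,3$. Apart from that, the argument is pure bookkeeping and relies only on the $z^{-1}$ and $z^{-2}$ coefficient identities, which hold for every $n\ge1$.
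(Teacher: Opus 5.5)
Your argument is correct and is essentially the paper's own proof. It is the $m=2$ argument for \eqref{alphaDbeta}--\eqref{betaDD}, which the paper carries over verbatim to $m=3$ and general $m$: sum the differential identities over $i$, convert the sums using the $z^{-1}$ and $z^{-2}$ coefficient identities, then apply $\mathcal{D}$ once more; your expansion $\mathcal{D}^2-\mathcal{D}=\sum_{i,k}ik\,t_it_k\partial_{t_it_k}+\sum_i i(i-1)t_i\partial_{t_i}$ makes explicit the ``writing the left-hand side out'' step that the paper leaves implicit.
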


Define
			\begin{equation*}
	 H_n(\vec{t}\:):=\sum_{i=1}^{m}it_i\partial_{t_i}\ln D_n(\vec{t}\:).
			\end{equation*}
According to $D_n=\prod_{j=0}^{n-1}h_j$ and \eqref{Dhp-m}, we have
			\begin{equation}\label{Hn-m}				 H_n(\vec{t}\:)=-\sum_{i=1}^{m}\sum_{j=0}^{n-1}R_{j,i}.
			\end{equation}
Since $p(n,\vec{t}\:)=-\sum_{j=0}^{n-1}\alpha_j$, it follows from \eqref{mS1eq1} that
			\begin{align} p(n,\vec{t}\:)&=-n(n+\alpha)-\sum_{j=0}^{n-1}\sum_{i=1}^{m}R_{j,i},\label{psumR-m}\\
			    &=\sum_{i=1}^{m}r_{n,i}-\beta_n.\label{psumr-m}
			\end{align}
A combination of \eqref{psumR-m} and \eqref{Hn-m} yields
			\begin{equation}\label{Hnpn-m}
				 H_n(\vec{t}\:)=p(n,\vec{t}\:)+n(n+\alpha),
			\end{equation}
which, on account of \eqref{Dhp-m}, results in
			\begin{equation}\label{Hnrn-m}
				 it_i\partial_{t_i}H_n(\vec{t}\:)=r_{n,i},
			\end{equation}
for $i=1,2,\dots,m$. Inserting them into \eqref{psumr-m}, in light of \eqref{Hnpn-m}, we are led to
			\begin{align}\label{btHn-m}
				 \beta_n &=\Big(\sum_{i=1}^mit_i\partial_{t_i}\Big)H_n-H_n+n(n+\alpha).
			\end{align}
 We have mentioned in Remark \ref{btRn-1} that $\beta_nR_{n-1,i} (i=1,\dots,m)$ can be expressed in terms of $\{R_{n,j},r_{n,j}, j=1,\dots,m\}$. Combining these expressions with \eqref{Dbtal-m}, we get $m$ equations involving the $2m$ auxiliary quantities, $\beta_n$ and $it_i\partial_{t_i}\beta_n$, for $i=1,\dots,m$. We solve $\{R_{n,j}, j=1,2,\dots,m\}$ from these equations, as was done for the $m=3$ case prior to Remark \ref{R1-3}, and substitute them into the expression for $\beta_n$ in terms of the auxiliary quantities (noted in Remark \ref{btRn-1}). Combining the resulting equation with \eqref{btHn-m} and replacing $r_{n,i}$ using \eqref{Hnrn-m}, we arrive at the second order PDE satisfied by $H_n$.

\section{Conclusions}
We have studied the partition function of the singularly perturbed Laguerre unitary ensemble associated with the weight function $x^{\alpha}\exp(-x-\sum_{k=1}^m t_k/x^k)$ with $\alpha>-1$ and $m\geq2$. The ladder operators for the monic orthogonal polynomials associated with Laguerre-type weight function $x^{\alpha}w_0(x)$ which decreases sufficiently fast at $+\infty$,  with $\alpha>-1$ and $w_0(x)>0$ being continuously differentiable, were deduced recently. By using them, we first provide a detailed investigation of the $m=2$ case to derive a second order PDE satisfied by the logarithmic derivative of the partition function, reducing to the $\sigma$-form of a Painlev\'{e} III$'$ equation when $t_2\rightarrow0^+$.
The derivation is not a straightforward generalization of the $m=1$ case which was studied in prior work and is actually much more involved: the expression of the recurrence coefficient $\beta_n$ in terms of the auxiliary quantities (see \eqref{betanRnrn}) and the determination of the sign before the square root in the expression for the auxiliary variable (see \eqref{Rnsol0}-\eqref{Rnsol}), which are crucial for the derivation, is somewhat tricky to obtain; although the PDE was quite complicated initially, we have simplified it  to a fairly compact form (see \eqref{PDEHn}).  In addition, under a suitable double scaling, we obtain the limiting form of the PDE. The derivation is non-trivial and requires careful analysis, especially on the deduction of the relations between the double scaled auxiliary quantities and the logarithmic derivative of the Hankel determinant (see Proposition \ref{RRstarH}).

An analogous discussion is then conducted for the case $m=3$, which serves as a template for the study of the general case. We have outlined the derivation procedure for general $m$. Although the final equation is not explicitly written due to its complexity, a PDE can, in principle, be established for arbitrary $m$.

The approach presented in this paper fails when one or more of the $t_i$ $(i=1,2,\dots,m-1)$ vanish and a new method is needed to address this case in the future. In addition,  we may follow the line of reasoning in this paper to study the Gaussian and Jacobi problems of the same type. However, the process is likely more involved since these two cases require the introduction of more auxiliary quantities.

\section*{Acknowledgements}
This work was supported by National Natural Science Foundation of China under grant numbers 12101343 and 12371257, and by Shandong Provincial Natural Science Foundation with project number ZR2021QA061.

\section*{Appendix: Some Relevant Integral Identities}

		We state the following integral identities for $0<a<b$, which are relevant to our derivation and could be found in \cite{ChenMcKay2012} and \cite{GradshteynRyzhik2007}:
		\begin{align*}
		\int_{a}^{b}\frac{dx}{\sqrtx}=&\pi,\\
		\int_{a}^{b}\frac{\ln x\, dx}{\sqrtx}=&2\pi\ln \frac{\sqrt{a}+\sqrt{b}}{2},\\
		\int_{a}^{b}\frac{x\, dx}{\sqrtx}=&\frac{a+b}{2}\pi,\\
		 \int_{a}^{b}\frac{dx}{x\sqrtx}=&\frac{\pi}{\sqrt{ab}},\\	 \int_{a}^{b}\frac{dx}{x^2\sqrtx}=&\frac{(a+b)}{2(ab)^{3/2}}\pi,\\	 \int_{a}^{b}\frac{dx}{x^3\sqrtx}=&\frac{3(a+b)^2-4ab}{8(ab)^{5/2}}\pi\label{app5}.
		\end{align*}

		\bibliographystyle{plain}
		}

\end{document}